\newtheorem{theorem}{Theorem}
\newtheorem{lemma}{Lemma}
\newtheorem{assumption}{Assumption}
\newtheorem{definition}{Definition}
\newenvironment{proof}{{\it Proof}\quad}{\hfill $\square$\par}
\begin{document}

\title{Dap-FL: Federated Learning flourishes by adaptive tuning and secure aggregation}

\author{Qian~Chen\textsuperscript{*},
	Zilong~Wang,~\IEEEmembership{Member,~IEEE},
	Jiawei~Chen\textsuperscript{*},
	Haonan~yan,
	and~Xiaodong~Lin,~\IEEEmembership{Fellow,~IEEE}
	
	%and~Guang~Gong,~\IEEEmembership{Fellow,~IEEE,}
	% <-this % stops a space
	\thanks{Qian~Chen, Zilong~Wang, Jiawei~Chen, and haonan~Yan are with the State Key Laboratory of Integrated Service Networks, School of Cyber Engineering, Xidian University, Xi'an, China (e-mail: qchen\_4@stu.xidian.edu.cn; zlwang@xidian.edu.cn; xidianjiaweichen@gmail.com; yanhaonan.sec@gmail.com).}
	\thanks{Xiaodong~Lin is with the School of Computer Science, University of Guelph, Guelph, Canada (e-mail: xlin08@uoguelph.ca).}
	\thanks{\textsuperscript{*}These authors contributed equally to this work.}% <-this % stops a space
}

\maketitle
	
	% As a general rule, do not put math, special symbols or citations
	% in the abstract or keywords.
\begin{abstract}

Federated learning (FL), an attractive and promising distributed machine learning paradigm, has sparked extensive interest in exploiting tremendous data stored on ubiquitous mobile devices. However, conventional FL suffers severely from resource heterogeneity, as clients with weak computational and communication capability may be unable to complete local training using the same local training hyper-parameters. In this paper, we propose Dap-FL, a deep deterministic policy gradient (DDPG)-assisted adaptive FL system, in which local learning rates and local training epochs are adaptively adjusted by all resource-heterogeneous clients through locally deployed DDPG-assisted adaptive hyper-parameter selection schemes. Particularly, the rationality of the proposed hyper-parameter selection scheme is confirmed through rigorous mathematical proof. Besides, due to the thoughtlessness of security consideration of adaptive FL systems in previous studies, we introduce the Paillier cryptosystem to aggregate local models in a secure and privacy-preserving manner. Rigorous analyses show that the proposed Dap-FL system could guarantee the security of clients' private local models against chosen-plaintext attacks and chosen-message attacks in a widely used honest-but-curious participants and active adversaries security model. In addition, through ingenious and extensive experiments, the proposed Dap-FL achieves higher global model prediction accuracy and faster convergence rates than conventional FL, and the comprehensiveness of the adjusted local training hyper-parameters is validated. More importantly, experimental results also show that the proposed Dap-FL achieves higher model prediction accuracy than two state-of-the-art RL-assisted FL methods, i.e., 6.03\% higher than DDPG-based FL and 7.85\% higher than DQN-based FL.

\end{abstract}
	
% Note that keywords are not normally used for peerreview papers.
\begin{IEEEkeywords}
federated learning, deep reinforcement learning, deep deterministic policy gradient, adaptive training, privacy-preservation.
\end{IEEEkeywords}

% For peer review papers, you can put extra information on the cover
% page as needed:
% \ifCLASSOPTIONpeerreview
% \begin{center} \bfseries EDICS Category: 3-BBND \end{center}
% \fi
%
% For peerreview papers, this IEEEtran command inserts a page break and
% creates the second title. It will be ignored for other modes.
\IEEEpeerreviewmaketitle

\section{Introduction}
\IEEEPARstart{A}{s} estimated by Cisco, nearly 850 zettabytes of data burst out from all people, machines, and things by 2021, up from 220 zettabytes generated in 2016 \cite{Cisco2021}. Coupled with the rise of Machine Learning (ML) \cite{Jordan2015} and Deep learning (DL) \cite{LeCun2015}, these valuable data could unfold countless opportunities for modern society, e.g., Internet of Things (IoT) \cite{Xu2014}, Internet of Vehicle (IoV) \cite{Kaiwartya2016}, and Healthcare \cite{Beam2017}.

However, practical scenarios are somewhat disappointing: utilizing such tremendous data for ML is more difficult than we thought. On the one hand, collecting data for ML model training would eventually saturate WAN bandwidths, while the WAN bandwidth is a scarce resource, whose growth has been decelerating for many years \cite{Vulimiri2015}. On the other hand, data security and privacy are strengthened by states across the world, e.g., General Data Protection Regulation (GDPR) \cite{GDPR2017}, which significantly increases the difficulty of data collection. 

Facing the above two enormous difficulties, Federated Learning (FL) \cite{Konecny2016,McMahan2017,Bonawitz2017} has emerged as an attractive and promising paradigm, which is in stark contrast to traditional ML with a data center. In the typical FL, clients collaboratively train a global ML model under the orchestration of a central server without collecting local data in a data center. Specifically, clients locally train individual models using their data in parallel, and the central server aggregates local contributions to update the global model subsequently. Such a process is executed periodically until the global model converges. Therefore, FL is a direct application of the {\it data minimization} principle \cite{White2013}, as it decouples model training from the need for direct access to the raw data, which alleviates the difficulties of communication overhead and privacy. With the primary advantages of communication efficiency and privacy preservation, FL has been deployed in a wide range of applications, including the keyboard application for smartphones \cite{Bonawitz2019}, Healthcare \cite{Brisimi2018}, and Industrial Internet of Things (IIoT) \cite{Sun2021,Zhang2021}.

Despite the promising benefits, FL is still caught in a dilemma between model prediction performance and system efficiency in practice. Such a dilemma is mainly caused by {\it resource heterogeneity} \cite{Chai2020}, where clients have diverse local resources for computation and communication. When enforcing these resource-heterogeneous clients to train local models using the same hyper-parameters, some clients cannot complete the local training due to insufficient local resources, thereby leading to a serious {\it straggler problem} \cite{Dean2012}. Dropping straggler clients not only reduces the global model convergence rate but also reduces the global model prediction performance, although the system efficiency is guaranteed in some ways \cite{Imteaj2022}. On the contrary, waiting for straggler clients blindly significantly wastes lots of time, although the global model prediction accuracy may increase due to the extra local contributions. To meet the requirements of the model prediction performance and the system efficiency simultaneously, the FL system should carefully manipulate the local training, which is referred to as {\it adaptive or self-tuning algorithms} \cite{Andrew2021,Bonawitz2019b}.

The adaptive hyper-parameter adjustment has a long history \cite{Kohavi1995}, but it mainly focuses on model prediction accuracy \cite{Bergstra2011,Snoek2015} rather than communication and computational efficacy for heterogeneous clients or the system. In FL, more potential hyper-parameters are adaptively adjusted, e.g., participating clients or the amount of participating clients \cite{Luo2021,Shi2020,Wang2020,Zhang2021,Lu2020}, local training iterations \cite{Luo2021}, frequency of aggregation \cite{Wang2019,Sun2021}, and allocated resources \cite{Shi2020,Tran2019,Nguyen2020}. In a nutshell, current adaptive FL methods usually take one of two routes (the details are shown in Section VII): (1) theoretic methods \cite{Luo2021}-\cite{Tran2019}, and (2) Reinforcement Leaning (RL)-based methods \cite{Wang2020}-\cite{Lu2020}. Both the two routes formulate the hyper-parameter adjusting process in FL as a constrained and non-convex optimization problem, which is hard to solve directly. The former route converts the prime optimization problem into a convex optimization problem, which can be solved easier, by making some prior hypotheses to practical environment constraints. However, such transformations with prior hypotheses may be less reasonable, as real-world environments, e.g., wireless communication ranges and battery power, are time-varying and have no statistical regularity to follow. Recently, the RL-based method is proposed as a slightly more sophisticated but more effective and practical method to achieve adaptive FL, which formulates the hyper-parameter adjusting process as a Markov decision process (MDP) and searches for appropriate hyper-parameters by introducing the Deep-Q Network (DQN) \cite{Wang2020}-\cite{Nguyen2020} or the deep deterministic policy gradient (DDPG) \cite{Zhang2021,Lu2020} algorithm on the server side. Among them, the aggregation frequency adjustment in Sun {\em et al.}'s DQN-based method \cite{Sun2021} and the client selection in Zhang {et al.}'s DDPG-based method \cite{Zhang2021} are the state-of-the-art RL-based adaptive FL methods. 

However, even for \cite{Sun2021,Zhang2021}, existing RL-based adaptive FL methods have five main shortages: (1) DQN is less effective than DDPG when the state and action spaces are continuous and high-dimensional. (2) Selecting clients is equivalent to abandoning straggler clients, which may result in low global model prediction accuracy. (3) Selecting hyper-parameters for clients on the server side is less effective since the server cannot capture clients' fast-changing training conditions. (4) Fixing the time-varying constraints as a constant is unreasonable. (5) Ignoring security requirements is not conducive to the development of adaptive FL. Therefore, an intuitive question is: {\em How to achieve adaptive FL better with the satisfaction of global model performance, efficiency, and security requirements?} 

In response, we propose a \underline{D}DPG-assisted \underline{a}daptive and \underline{p}rivacy-preserving federated learning (Dap-FL) system. Specifically, we introduce the DDPG algorithm to help clients adaptively adjust local learning rates and local training epochs for the purpose that all participating clients with heterogeneous resources could collaboratively train a global model efficiently, rather than abandoning straggler clients. In addition, Dap-FL adopts the Paillier cryptosystem for privacy-preserving and secure local model aggregation. 

We stress the superiority of the proposed Dap-FL compared to the RL-based FL designs, especially the state-of-the-art \cite{Sun2021,Zhang2021}, in the following five aspects: (1) Different from selecting clients' local training hyper-parameters by a server, Dap-FL adaptively adjusts clients' local training hyper-parameters through DDPG models maintained by the clients themselves, which suits clients' fast-changing training conditions more. (2) Rather than abandoning straggler clients with low local resources, Dap-FL aggregates all local contributions, which helps FL out of the model prediction accuracy and system efficiency dilemma. (3) Instead of leveraging the global loss function, Dap-FL formulate clients' local contributions as the reward function, which is validated more effective in our experiments. (4) Dap-FL considers time-varying constraint conditions with respect to clients' local resources, which is more reasonable and practical than modeling the constraints as a constant. (5) Compared to all RL-based adaptive FL designs without any security concerns, Dap-FL aggregates local models in a secure manner, which makes the proposed Dap-FL competitive. To the best of our knowledge, Dap-FL is the first adaptive FL system by locally adjusting hyper-parameters through the DDPG algorithm, and is the first secure and privacy-preserving adaptive FL system. 

The main contributions of this paper are as follows:

\textbf{DDPG-assisted adaptive federated learning system.} We propose a DDPG-assisted adaptive hyper-parameter selection scheme, which formulates every client's local training process as a constrained MDP, and hence solves its Lagrangian dual problem by the DDPG algorithm, where the rationality of the Lagrangian-based problem transformation is confirmed through rigorous mathematical proof. With the deployment of the proposed scheme, we exploit an adaptive FL system, Dap-FL, where the global model prediction accuracy and the global model convergence rate are enhanced.

\textbf{Secure and privacy-preserving federated learning system.} To satisfy general secure requirements in FL, Dap-FL introduces the Paillier cryptosystem to aggregate local models in a secure and private manner. Through rigorous analyses, the proposed Dap-FL system could guarantee the privacy, source authentication, and data integrity of local models in a widely used honest-but-curious participants and active adversaries security model \cite{Bonawitz2017,Bonawitz2019,Hao2019}. In particular, clients' private local models are semantic secure against chosen-plaintext attacks \cite{Goldreich2004}, and are secure against chosen-message attacks \cite{Goldreich2004} in the random oracle model \cite{Bellare1993}.

\textbf{Extensive experimental evaluation.} We evaluate the performance of the proposed Dap-FL by deploying three different ML models on two datasets. Experimental results show that the proposed Dap-FL achieves higher model prediction accuracy and faster convergence rate, which could be regarded as achieving higher communication efficiency, compared to conventional FL. More importantly, through the comparison with two state-of-the-art RL-assisted FL methods, the prediction accuracy of the global model trained by Dap-FL is $6.03\%$ higher than the DDPG method \cite{Zhang2021} and $7.85\%$ ($2.65\%$ with a small learning rate) higher than the DQN-based method \cite{Sun2021} in our experimental settings.

The rest of the paper is organized as follows. Section II formalizes the system model, security requirements, and design goal. Section III introduces some primitive concepts. In Section IV, we formulate the local training process as a constrained MDP and solve its Lagrangian dual problem by the DDPG algorithm. In Section V, we propose the Dap-FL system, followed by the security analysis and experimental evaluation in Section VI and Section VII. Section VIII summarizes related works on adaptive FL. Finally, we draw our conclusions in Section IX.

\section{System Model, Security Requirement and Design Goal}
In this section, we first formalize the FL system under our consideration. Then, we analyze security requirements and identify our design goal.

\begin{figure*}[htbp]
	\centering
	\includegraphics[width=18cm,height=9cm,trim=40 220 50 130,clip]{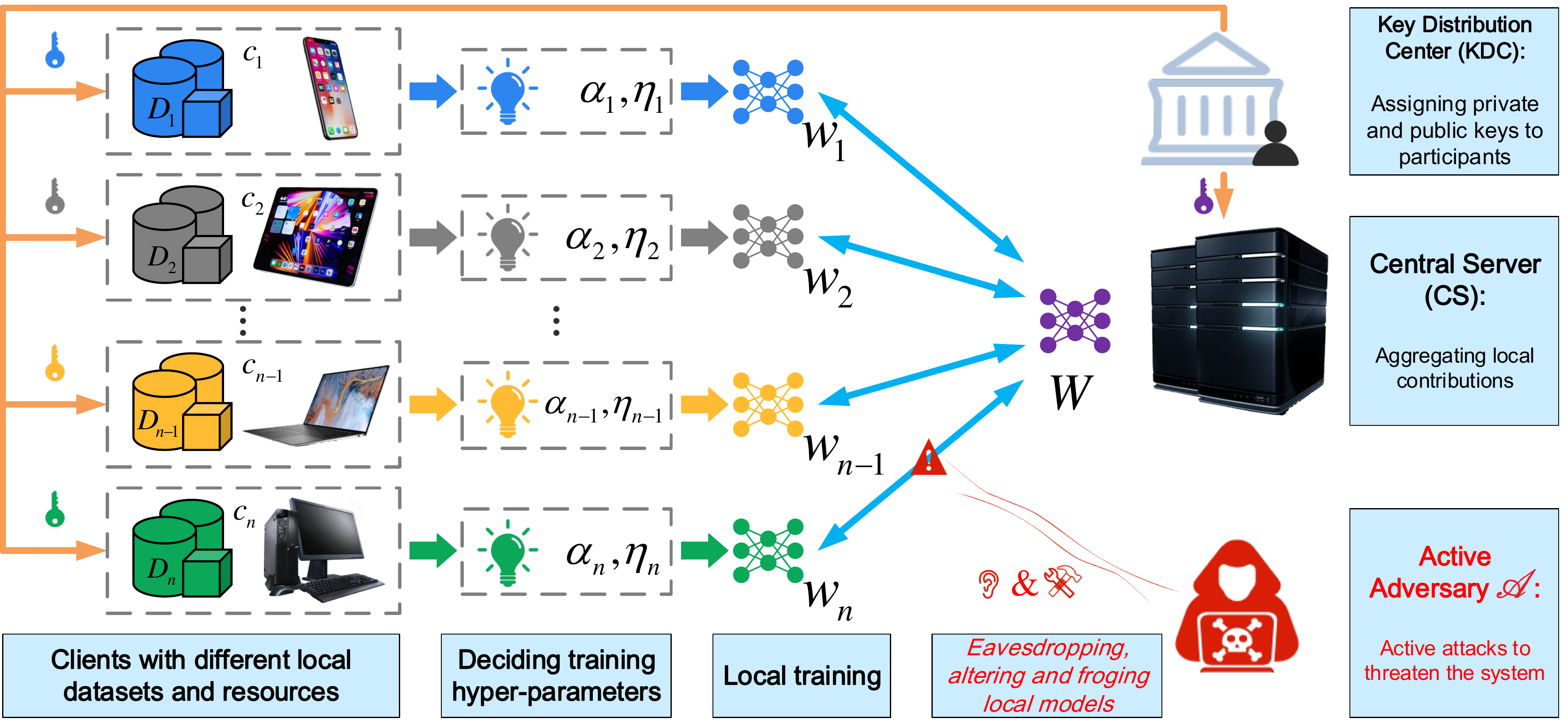}
	\caption{Centralized Federated Learning System.}
	\label{fig-1}
\end{figure*}
	
\subsection{Federated Learning System}
As shown in Fig. \ref{fig-1}., the FL system under our consideration consists of three participants: a key distribution center ($\mathit{KDC}$), a central server ($\mathit{CS}$), and a group of $ N $ clients $ \left\{c_{i},i=1,\cdots,N \right\} $. The main job of the $\mathit{KDC}$ is generating and assigning public and private keys to each participant. Each client $ c_{i} $ has a local dataset $ D_{i} $ with the size of $ \vert D_{i}\vert $ containing sensitive information. The goal of all clients is to collaboratively train a global model under the orchestration of the $\mathit{CS}$ in a privacy-preserving manner. Formally, the training process can be seen as solving an optimization problem mathematically, defined as:
\begin{equation}\label{eq-1}
	W^{*}=\mathop{\arg\min}_{W} F\left(W; D_{1},\cdots,D_{N}\right),
\end{equation}
where $ F(\cdot) $ is the global loss function, $ W $ represents the global model, and $ W^{*} $ is the final converged global model. Note that although the global model (as well as the local ML model below) is parameterized by a vector or a matrix in practice, we regard it as a scalar in this paper for convenience.
	
In general, such an optimization problem could be solved by periodically performing the distributed gradient descent algorithm. For clarity, we take the $ t $-th period (also named as training round in this paper) as an example to describe the problem-solving process, where $ t=0,1,\cdots,T $. $ \mathit{CS} $ first distributes the global model $ W(t) $ to all clients. After receiving the global model, each $ c_{i} $ initializes its local model $ w_{i}(t) $ as $ W(t) $ and its loss function $ f_{i}(\cdot) $ as $ F(\cdot) $. Then, $ c_{i} $ trains $ w_{i}(t) $ on $ D_{i} $ locally for $ \alpha_{i}(t) $ epochs to update its local model, which can be simply expressed as:
\begin{equation}\label{eq-2}	
	\hat{w}_{i}(t)=w_{i}(t)-{{\eta}_{i}(t)}\cdot \nabla f_{i}\left({w_{i}(t)};D_{i};\alpha_{i}(t)\right),
\end{equation}
where $ \hat{w}_{i}(t) $ is the updated local model, $ \nabla f_{i}\left(\cdot\right) $ is the gradient, and $ \eta_{i}(t) $ is the learning rate. Subsequently, each client $ c_{i} $ uploads $ \hat{w}_{i}(t) $ and $ \vert D_{i}\vert $ to the $ \mathit{CS} $ in parallel, and the $ \mathit{CS} $ computes the weighted mean of $ \hat{w}_{i}(t) $s as the updated global model, denoted by $ W(t+1) $, and defined as:
\begin{equation}\label{eq-3}
	W(t+1)=\frac{1}{|D|} \sum\limits_{i=1}^{N}{{|{D}_{i}|}{\hat{w}_{i}(t)}},
\end{equation}
where $|D|=\sum_{i=1}^{N}|{D}_{i}|$.

\textit{\textbf{Time-varying resource consumption model}.} Different from the conventional FL without considering clients' local resources, we take the communication and computational resources of each client into account, which is time-varying and has no statistical regularity to follow. Thus, the local resource of $ c_{i} $ in the $ t $-th training round is donated by $ E_{i}(t) $. Particularly, in the $ t $-th training round, the resources that $ c_{i} $ consumes for one epoch of local training and single time of communication with the $ \mathit{CS} $ are denoted by $ E_{i}^{cpt}(t) $ and $ E_{i}^{cmu}(t) $, which limit the selection of local training hyper-parameters.   

\textit{\textbf{Communication model}.} In the FL system under our consideration, the communication between a client $ c_{i} $ and the $ \mathit{CS} $ is through the relatively inexpensive WiFi technology. In other words, each client can directly communicate with the $ \mathit{CS} $ within the WiFi coverage range. However, the communication within the WiFi coverage range is not private, which means all participants could be eavesdropped by others. Besides, clients might also communicate with others directly through WiFi technology. On the other hand, the $ \mathit{KDC} $ assigns keys for secure communication through expensive but secure communication channels. That is, the keys assigned by the $ \mathit{KDC} $ cannot be eavesdropped, but such secure communication channels can only be used for key assignment instead of the local model aggregation due to the scarce bandwidth \cite{Hao2019}.
%最后这个得跟个引用

\subsection{Security Requirements}
Security and privacy draw wide attention since FL is proposed. In particular, clients' private local model is vulnerable to model inversion attacks \cite{Fredrikson2015} by any participants with access to a local model, which leads to severe privacy leakage. Additionally, active adversaries could execute active attacks, e.g., masquerade attacks \cite{Schonlau2001} and replay attacks \cite{Goldreich2004}, to threaten the source authentication and data integrity of the FL system.  

Therefore, in this paper, we consider an honest-but-curious participants and active external adversary threat model, which is widespread in FL \cite{Bonawitz2017,Bonawitz2019,Hao2019}. Specifically, the $ \mathit{KDC} $ is fully trustworthy and hence would never collude with others, as the role of a $ \mathit{KDC} $ is always played by an official institution with credibility in practice. In addition, the $ \mathit{CS} $ and the clients are considered honest-but-curious, which means all participants (except the $ \mathit{KDC} $) would honestly train the global model but may infer sensitive information contained in clients' local training data by executing model inversion attacks. Besides, we further consider an external adversary $ \mathscr{A} $ who can eavesdrop the individual local models. More seriously, $ \mathscr{A} $ could execute active attacks to break down the FL system, such as masquerading attacks, where $ \mathscr{A} $ masquerades as a legitimate client and uploads an incorrectly formatted local model. In order to prevent $ \mathscr{A} $ from breaking down the FL system and guarantee the privacy of an individual local model, the following security requirements should be achieved in our FL system.

\begin{itemize}
	\item{{\it Local model confidentiality.} The content of an individual local model should not be obtained by anyone but the corresponding client itself. Confidentiality also requires the FL system could resist eavesdropping attacks by the adversary $ \mathscr{A} $.}

	\item{{\it Source Authentication and Data Integrity.} To prevent the existing active attacks from $ \mathscr{A} $, each participant is required to confirm the received contents that are sent by a legitimate participant and have not been altered and/or forged during the transmission.}
	
\end{itemize}

\subsection{Design Goal}
Under the aforementioned system model and security requirements, our design goal mainly focuses on two aspects: local training adaptiveness and secure aggregation. Specifically, the following two objectives should be achieved.
	
\textit{The local training adaptiveness should be guaranteed in our FL system.} As mentioned in the system model, clients expect to adaptively select the local learning rates and local training epochs within the constraints of local resources for higher global model prediction accuracy and communication efficiency. Therefore, our FL system should achieve local training hyper-parameter adaptive adjustment.
	
\textit{The security requirements should be guaranteed in our FL system.} Privacy preservation and security are of crucial importance for the flourishing of FL. Thus, our FL system should satisfy the privacy of individual local models and the security of the whole system.

\section{Preliminaries}	
In this section, we provide a brief overview of the reinforcement learning (RL) \cite{Kaelbling1996} and the deep deterministic policy gradient (DDPG) algorithm \cite{Lillicrap2015} which serves as the basis of the proposed adaptive local training hyper-parameter selection scheme. Besides, we introduce the Paillier cryptosystem for the design of the privacy-preserving FL.

\subsection{Deep Reinforcement Learning}	
		
RL is proposed to help a client make decisions sequentially through trial-and-error interactions with environment. Formally, such a sequential decision-making process is modeled as a Markov decision process (MDP) \cite{Bellman1957}, denoted by a $ 3 $-tuple $ \left\{\mathcal{S}, \mathcal{A}, R \right\} $, where $ \mathcal{S} $ is the state space, $ \mathcal{A} $ is the action space, and $ R $ is the reward function. In detail, the client selects an action $ a(t) \in \mathcal{A} $ according to its state $ s(t) \in \mathcal{S} $ at time step $ t $, which can be expressed as:
\begin{equation}\label{eq-4}
	{\mu}:s(t)\rightarrow a(t),
\end{equation}
where the action-producing function $ \mu $ represents a deterministic policy in practice. After performing the selected action $ a(t) $, the client transits to the next state $ s(t+1) $ and receives an immediate reward $ r(t)=R\left(s(t), a(t), s(t+1)\right)$ from the environment. Thus, the state-action value function at time step $ t $, depicting the expected long-term reward from time $ t $, can be naturally denoted by $ Q^{\mu} $ and defined as:
\begin{equation} \label{eq-5}
	Q^{\mu}\left(s(t),a(t)\right)=\mathbb{E}\left[{\sum\limits_{\tau=t}^{T}{\gamma}^{\tau-1}\cdot r(\tau)}\right],
\end{equation}
where $ \tau $ is the index of the time step, and $ \gamma\in(0,1) $ is the discount factor. Therefore, the purpose of the client can be seen as maximizing the state-action value function with respect to the policy $ \mu $, which is simply expressed as:
\begin{equation}\label{eq-6}
{\mu}^{*}=\underset{\mu}{\mathop{\arg\max}}\,{Q}^{\mu}\left(\mathcal{S},\mathcal{A}\right),
\end{equation}
where $ \mu^{*} $ is the best policy. 
	
\noindent \textit{Deep Deterministic Policy Gradient Algorithm.}

To search for the best policy of the above optimization problem, deep reinforcement learning (DRL) \cite{Arulkumaran2017} introduces the deep neural network to approximate the state-action value function. In particular, the DDPG algorithm serves as an efficient DRL algorithm facing a high-dimensional and continuous state and/or action space. Specifically, the DDPG algorithm sets up two {\it Actor-Critic} networks, namely MainNet and TargetNet. In the MainNet, the deterministic policy and the state-action value function are parameterized by $ \theta^{\mu}(t) $ and $ \theta^{Q}(t) $. And the TargetNet is a copy of the MainNet with parameters ${\theta}^{\mu'}(t)$ and ${\theta}^{Q'}(t)$. Through iteratively training the two nets, the best policy $ \mu^{*} $ could be obtained.

\subsection{Paillier Cryptosystem}
The Paillier cryptosystem \cite{Paillier1999} consists of the encryption scheme and the digital signature scheme. We first recall the definition of $ n $-th residues modulo $ n^{2} $ and the Decisional Composite Residuosity (DCR) assumption as follows.
	
\begin{definition}[$ n $-th residues modulo $ n^{2} $ \cite{Paillier1999}]\label{def-1}
A number $ x $ is said to be a $n$-th residue modulo ${n}^{2}$, if there exists a number $z\in{\mathbb{Z}}_{{n}^{2}}^{*}$ such that
\begin{equation*}
	x = {z}^{n} \bmod {n}^{2},
\end{equation*}
where $n=p \cdot q$ is the product of two large primes.
\end{definition}

\begin{assumption}[Decisional Composite Residuosity (DCR) Assumption \cite{Paillier1999}]\label{asp-1}
There exists no polynomial time distinguisher for $n$-th residues modulo ${n}^{2}$.
\end{assumption}

Thus, a public-key encryption scheme and a digital signature scheme work as follows.

\noindent \textbf{Paillier Encryption scheme} 

\textit{Key Generation Algorithm.}
The key generation algorithm first picks two large primes $ p^{\mathit{Enc}} $ and $ q^{\mathit{Enc}} $ randomly such that $gcd\left(p^{\mathit{Enc}}\cdot q^{\mathit{Enc}}, \left({p}^{\mathit{Enc}}-1 \right) \cdot \left({q}^{\mathit{Enc}}-1 \right)  \right) =1 $. Next, it computes $n={p}^{\mathit{Enc}}\cdot {q}^{\mathit{Enc}}$ and $\varrho = lcm\left(p^{\mathit{Enc}}-1,q^{\mathit{Enc}}-1 \right)$. After picking a random $ g $ from $ \mathbb{Z}_{n^{2}}^{*} $, the algorithm computes $ \delta=\left(L\left(g^{\varrho}\bmod{n^{2}} \right) \right)^{-1}\bmod{n} $, where $ L\left(x \right)=\frac{x-1}{n} $. Thus, the public key is $ \mathit{PK}^{\mathit{Enc}}=\left(n, g \right) $, and the private key is $ \mathit{SK}^{\mathit{Enc}}=\left(\varrho,\delta \right) $. We simply express the key generation algorithm as:
\begin{equation}\label{eq-7}
     \left({\mathit{PK}}^{\mathit{Enc}},{{\mathit{SK}}}^{\mathit{Enc}} \right) \leftarrow \textit{KeyGen}\left(p^{\mathit{Enc}},q^{\mathit{Enc}} \right).
\end{equation}		

\textit{Encryption Algorithm.}
Given a plaintext $ M \in {{\mathbb{Z}}_{n}} $ and the public key $ \mathit{PK}^{\mathit{Enc}}$, the encryption algorithm selects a random $ \zeta\in \mathbb{Z}_{{n}}^{*} $, and further computes the ciphertext $\varepsilon \left(M \right)={{g}^{M}}\cdot {\zeta^{n}}\bmod {{n}^{2}}$.
The encryption algorithm is simply shown as:
\begin{equation}\label{eq-8}
	\varepsilon \left(M \right)	\leftarrow \textit{Enc} \left(M,{\mathit{PK}}^{\mathit{Enc}} \right).
\end{equation}	

\textit{Decryption Algorithm.}
Given a ciphertext $\varepsilon \left(M \right) \in \mathbb{Z}_{{{n}^{2}}}^{*}$ and the private key ${{\mathit{SK}}}^{\mathit{Enc}}$, the decryption algorithm computes the plaintext by $M=L\left({{\varepsilon(M)}^{\varrho}}\bmod {{n}^{2}}\right)\cdot \delta \bmod n$, which is expressed as:
\begin{equation}\label{eq-9}
    M \leftarrow \textit{Dec} \left(\varepsilon \left(M \right),{\mathit{SK}}^{\mathit{Enc}} \right).
\end{equation}
	
\textit{Evaluation Algorithm.}
The evaluation algorithm is used to verify the addition homomorphism of the Paillier encryption scheme, i.e.,
\begin{equation}\label{eq-10}
	\varepsilon\!\left(\!{M}_{1}\!\right)\cdot\varepsilon\!\left(\!{M}_{2}\!\right)\!\bmod\!{{n}^{2}}\!\!=\!\textit{Enc}\left((M_{1}\!\!+\!\! M_{2}\!\bmod\! n),\!{\mathit{PK}}^{\mathit{Enc}} \right)\!,
\end{equation}
where ${M}_{1}$ and ${M}_{2}$ are two plaintexts, and $ \varepsilon\left({M}_{1}\right)$ and $\varepsilon\left({M}_{2}\right) $ are corresponding ciphertexts.

\noindent \textbf{Paillier Digital Signature Scheme}

\textit{Key Generation Algorithm.}
The key generation algorithm has the same steps as that in the Paillier encryption scheme, shown as:
\begin{equation}\label{eq-11}
	\left({\mathit{PK}}^{\mathit{Sign}},{{\mathit{SK}}}^{\mathit{Sign}}\right)\leftarrow\textit{KeyGen}\left({p}^{\mathit{Sign}},{q}^{\mathit{Sign}} \right),
\end{equation}
where ${p}^{\mathit{Sign}}$ and ${q}^{\mathit{Sign}}$ are two large primes, and ${\mathit{PK}}^{\mathit{Sign}}$ and ${\mathit{SK}}^{\mathit{Sign}} $ are the public key and private key.
		
\textit{Signing Algorithm.}
Given a hash function $ H:\{0,1\}^{*}\rightarrow{\mathbb{Z}}_{{n}^{2}}^{*} $, a private key $ {\mathit{SK}}^{\mathit{Sign}} $, and a message $ M\in {{\mathbb{Z}}_{n}} $, the signing algorithm first computes $ h\leftarrow H(M) $.
Next, it computes the signature ${\sigma}=\frac{L(h^{\varrho }\bmod {{n}^{2}} )}{L(g^{\varrho }\bmod {{n}^{2}})}\bmod n $ and $ \tilde{\sigma}={{\left(h\cdot {{g}^{-{\sigma}}} \right)}^{\left(\frac{1}{n}\bmod \varrho\right) }}\bmod n $, where $ L\left(x \right)=\frac{x-1}{n} $.
Thus the signing algorithm is simply expressed as:
\begin{equation}\label{eq-12}
	M\vert\vert\left(\sigma,\tilde{\sigma}\right)\leftarrow\textit{Sign}(M, H, {{\mathit{SK}}}^{\mathit{Sign}}),
\end{equation}
where $ \vert\vert $ is the string concatenation operator.

\textit{Verification Algorithm.}
Given a hash function $H$, a public key ${\mathit{PK}}^{\mathit{Sign}}$, and a message $ M $ with a signature $\left(\sigma,\tilde{\sigma}\right)$, the verification algorithm first computes $ h\leftarrow H(M)$.
Next, it checks whether the equation $h = {g}^{\sigma} \cdot {\tilde{\sigma}}^{n}\bmod{n}^{2}$ holds.
Thus, the verification algorithm is expressed as:
\begin{equation}\label{eq-13}
	\{0,1\}\leftarrow\textit{Verify}\left(M\vert\vert \left(\sigma,\tilde{\sigma} \right),H,{\mathit{PK}}^{\mathit{Sign}} \right).
\end{equation}

\section{DDPG-assisted Adaptive Hyper-parameter Selection}
In this section, we analyze hyper-parameters affecting local training, and hence formulate the client's local training process as an optimization problem. Further, we convert the optimization problem as a constrained MDP and obtain the best policy by solving its Lagrangian dual problem with the usage of the DDPG algorithm. The entire problem formalization, transformation, and solving processes constitute the proposed DDPG-assisted adaptive hyper-parameter selection scheme. 
	
\subsection{Problem Formulation}
To achieve the goal of adaptively adjusting local training hyper-parameters during the FL process, we must model it mathematically. In the FL under our consideration, every client has heterogeneous and time-varying resource constraints regarding computation and communication, which means clients may adopt totally different hyper-parameters for local training according to diverse resource constraints, thereby making different contributions to the global model. Therefore, it is essential to model the local training process with resource constraints for each client separately. 

But the first question is how the client's local training hyper-parameters affect the global model convergence. Empirical results show that the convergence of the global model is strongly impacted by local contributions which are dependent on the local training hyper-parameters significantly, i.e., local training epoch \cite{Wang2018} and the local learning rate \cite{Li2020}. In detail, a small local training epoch is more helpful for the precise approximation to the converged global model, but clients would suffer from more aggregations, which decelerates the global model convergence rate and improves clients' communication consumption. Conversely, a large local training epoch helps the global model converge to an approximate result of the converged global model rapidly, but it makes the final convergence of the global model extremely difficult, since the variance of local models increases sharply. Besides, the local learning rate has a similar impact on the global model to the local training epoch. Although the decaying learning rate \cite{Li2020} is utilized to adaptively tune models in the ML community, it may prevent the global model from convergence, as the learning rate becomes too small after too many training rounds. As a result, we plan to adaptively adjust the local training epoch and the local learning rate in each training round.

However, in the $ t $-th training round, the influence of $ c_{i} $'s local contribution on the global model is not intuitive. Naturally, we consider measuring the influence by defining the following three metrics.

\begin{definition}[Training loss value difference]\label{def-2}
	The difference between the training loss values of $ w_{i}(t) $ and $ w_{i}(t-1) $ on $ D_{i} $ is denoted by $ \psi_{1} $ and defined as:
	\begin{equation}\label{eq-14}
		\psi_{1}={\mathit{loss}}_{i}(t-1)-{\mathit{loss}}_{i}(t),
	\end{equation}
	where $ {\mathit{loss}}_{i}(t-1) $ and ${\mathit{loss}}_{i}(t) $ are the training loss values of the initialized local models at $ (t\!-\!1) $-th and $ t $-th training rounds on $ D_{i} $.
\end{definition}

\begin{definition}[Training accuracy difference]\label{def-3}
	The difference between the training accuracy of $ w_{i}(t) $ and $ w_{i}(t-1) $ on $ D_{i} $ is denoted by $ \psi_{2} $ and defined as:
	\begin{equation}\label{eq-15}
		\psi_{2}={\mathit{acc}}_{i}(t)-{\mathit{acc}}_{i}(t-1),
	\end{equation}
	where $ {\mathit{acc}}_{i}(t) $ and $ {\mathit{acc}}_{i}(t-1) $ are the training accuracy of the initialized local models at $ t $-th and $ (t\!-\!1) $-th training rounds on $ D_{i} $.
\end{definition}

\begin{definition}[Training F-1 score difference]\label{def-4}
	The difference between the training F-1 scores of $ w_{i}(t) $ and $ w_{i}(t-1) $ on $ D_{i} $ is denoted by $ \psi_{3} $ and defined as:
	\begin{equation}\label{eq-16}
		\psi_{3}={\mathit{fs}}_{i}(t)-{\mathit{fs}}_{i}(t-1),
	\end{equation}
	where $ {\mathit{fs}}_{i}(t) $ and $ {\mathit{fs}}_{i}(t-1) $ are the training F-1 scores of the initialized local models at $ t $-th and $ (t\!-\!1) $-th training rounds on $ D_{i} $.
\end{definition}

Intuitively, an effective local contribution can reduce the global model loss value and/or increase the global model accuracy and F-1 score. In other words, a local contribution could be measured by the linear combination of $ \psi_{1} $, $ \psi_{2} $, and $ \psi_{3} $, shown as:
\begin{equation}\label{eq-17}
	{\xi}_{1}\cdot\psi_{1}+{\xi}_{2}\cdot\psi_{2}+{\xi}_{3}\cdot\psi_{3},
\end{equation}  
where $ \xi_{1}, \xi_{2}, \xi_{3}\in(0,+\infty) $. Thus, the purpose of $ c_{i} $'s local training at $ t $-th training round can be simply formulated as maximizing the local contribution by selecting proper $ \alpha_{i}(t) $ and $ \eta_{i}(t) $, defined as:
\begin{equation}\label{eq-18}
    \underset{\alpha_{i}(t),\eta_{i}(t)}{\mathop{\max}}\left[{\xi}_{1}\!\cdot\!\psi_{1}+{\xi}_{2}\!\cdot\!\psi_{2}+{\xi}_{3}\!\cdot\!\psi_{3} \right].
\end{equation}

Nevertheless, the above maximization problem is not comprehensive to formulate the local training process at $ t $-th training round, since the client might tend to select an oversize $ \alpha_{i}(t) $ to maximize the local contribution, while the local resource $ E_{i}(t) $ limits the selection of local training epochs. Therefore, to prevent $ c_{i} $ from selecting local training hyper-parameters exceeding $ E_{i}(t) $, we restrain the size of $ \alpha_{i}(t) $ in the $ t $-th training round by a constraint condition, defined as:
\begin{equation}\label{eq-19}
	{\alpha}_{i}(t)\cdot{{E}_{i}^\mathit{cpt}(t)}+{2}\cdot E_{i}^{\mathit{cmu}}(t)\le {{E}_{i}(t)}.
\end{equation}

As a result, the selection of $ c_{i} $'s local training hyper-parameters during the whole FL process can be formulated as a constrained optimization problem, denoted by $ \textbf{P}_{\bm 0} $ and defined as:
\begin{equation*}
	\begin{matrix}
		\textbf{P}_{\bm 0}:~\underset{\bm{\alpha}_{i},\bm{\eta}_{i}}{\mathop{\max}}\sum\limits_{t=1}^{T}\left[{\xi}_{1}\cdot\psi_{1}+{\xi}_{2}\cdot\psi_{2}+{\xi}_{3}\cdot\psi_{3}\right],   \\ 
		\text{s.t.}~{\alpha}_{i}(t)\cdot{{E}_{i}^\mathit{cpt}(t)}+{2}\cdot{E}_{i}^{\mathit{cmu}}(t)\le {{E}_{i}(t)}, \\ 
	\end{matrix}
\end{equation*}
where $ \bm{\alpha}_{i}=(\alpha_{i}(t),t=1,2,\cdots,T) $ is the sequence of selected $ \alpha_{i}(t) $, and $ \bm{\eta}_{i}=(\eta_{i}(t),t=1,2,\cdots,T) $ is the sequence of selected $ \eta_{i}(t) $. Note that we model the whole FL process instead of a single training round like Equation \ref{eq-18}, since $ c_{i} $'s overall contribution to a converged global model is a continuing pursuit, rather than an occasional rest on some peak that $ c_{i} $ has reached, however large. Besides, the local contributions are accumulated from the $ 1 $-st training round instead of the $ 0 $-th training round, as $ c_{i} $ cannot measure the local contribution without a prior local model in the initial training round.

%XXXXXXXXXXXXXXXXXXXXXXXXXXXXXXXXXXXXXXXXXXXXXXXXXX

\subsection{Problem transformation using MDP}

Obviously, $ \textbf{P}_{\bm 0} $ cannot be solved directly, as the selection of $ \alpha_{i}(t) $ and $ \eta_{i}(t) $ impacts the convergence of the global model over time, and the convergence of the global model impacts the selection of $ \alpha_{i}(t+1) $ and $ \eta_{i}(t+1) $ in turn. Thus, there is an intuition that $ \textbf{P}_{\bm 0} $ could be converted to a constrained MDP \cite{Altman1999}. In detail, we first define the MDP $ 3 $-tuple of $ c_{i} $'s hyper-parameter selection process as follows:
	
\noindent \emph{State}

A direct indicator evaluating $ c_{i} $'s local model is its prediction performance over $ D_{i} $. Thus, $ c_{i} $'s state $ s_{i}(t) $ is defined as:
\begin{equation}\label{eq-20}
{s}_{i}(t)=\left[\mathit{loss}_{i}(t),\mathit{acc}_{i}(t),\mathit{fs}_{i}(t)\right].
\end{equation}

\noindent \emph{Action}
	
Once a state is observed, an action is selected and performed by $ c_{i} $. Intuitively, the selected $ \alpha_{i}(t) $ and $ \eta_{i}(t) $ constitute the action $ a_{i}(t) $, defined as:
\begin{equation}\label{eq-21}
{a}_{i}(t)=\left[{\eta}_{i}(t), {\alpha}_{i}(t)\right].
\end{equation}

\noindent \emph{Reward Function}

The reward function in FL under our consideration specifies $ c_{i} $'s contribution of the local model trained by selecting hyper-parameters (action $ a_{i}(t) $) in a state, which could be defined as:
\begin{equation}\label{eq-22}
\begin{aligned}
{r}_{i}(t)
~=~&R\left({s}_{i}(t),{a}_{i}(t),{s}_{i}(t+1)\right)\\		
~=~&{\xi}_{1}\cdot\psi_{1}+{\xi}_{2}\cdot\psi_{2}+{\xi}_{3}\cdot\psi_{3}.
\end{aligned}
\end{equation}

Based on the above 3-tuple, we can convert $ \textbf{P}_{\bm 0} $ to a constrained MDP, denoted by $ \textbf{P}_{\bm 1} $, and defined as:
\begin{equation*}
	\begin{matrix}
		\textbf{P}_{\bm 1}~:~\underset{{{\mu}_{i}}}{\mathop{\max}}~G(\mu_{i}), \\ 
		\text{s.t.}~b_{i}(t)\leq 0,~t=1,2,\cdots,T, \\ 
	\end{matrix}
\end{equation*}
where $ {{\mu}_{i}}:s_{i}(t)\rightarrow a_{i}(t) $ is $ c_{i} $'s action-producing function, $ G(\mu_{i})= {\sum_{t=1}^{T}{\gamma}^{\left(t-1\right)}{r}_{i}(t)}  $, $ \gamma\in(0,1) $ is the discount factor, and $ b_{i}(t)={\alpha}_{i}(t)\cdot{{E}_{i}^\mathit{cpt}(t)}+{2}\cdot{E}_{i}^{\mathit{cmu}}(t) -{{E}_{i}(t)} $. Note that $ \textbf{P}_{\bm 1} $ is approximately equivalent to $ \textbf{P}_{\bm 0} $, since $ \gamma $ is usually set as $ 0.99 $ in practice. 

However, such a constrained MDP is still intractable, since the complexity is exacerbated in domains where the optimization objective $ G(\mu_{i}) $ and the constraints must be controlled jointly \cite{Meuleau1998}. We therefore convert $ \textbf{P}_{\bm 1} $ into a Lagrangian dual problem \cite{Altman1999} by adding the constraint conditions to the objective function as follows:
\begin{equation}\label{eq-23}
	J(\mu_{i})={\sum\limits_{t=1}^{T}{\gamma}^{\left(t-1\right)}{r}_{i}(t)}-{\sum\limits_{t=1}^{T}\lambda_{i}(t)b_{i}(t)},
\end{equation}
where $ \lambda_{i}(t)\geq 0 $ is the Lagrangian multiplier.

Then, the Lagrangian dual problem $ \textbf{P}_{\bm 2} $ is formulated as:
\begin{equation*}
	\textbf{P}_{\bm 2}~:~\underset{{{\lambda}_{i}(t)}}{\mathop{\min}}~\underset{{{\mu}_{i}}}{\mathop{\max}}~J(\mu_{i}),~t=1,2,\cdots,T.
\end{equation*}

The rationality of such a Lagrangian-based transformation from $ \textbf{P}_{\bm 1} $ to $ \textbf{P}_{\bm 2} $ is given in {\it Theorem \ref{thm-1}}.
\begin{theorem}\label{thm-1}
	For the prime constrained optimization problem $ \textbf{P}_{\bm 1} $ and its Lagrangian dual problem $ \textbf{P}_{\bm 2} $, there exists the optimal values $ G^{*} $ and $ J^{*}$, such that $ G^{*} \leq J^{*} $, and the equality sign holds if and only if $ \textbf{P}_{\bm 1} $ is convex.
\end{theorem}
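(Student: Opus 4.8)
The plan is to split the statement into the inequality $G^{*}\le J^{*}$ (weak duality), which requires no structural assumptions, and the equality condition (strong duality), where convexity enters; the two parts call for different tools, so I would prove them separately.

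First I would establish weak duality directly from the sign conditions built into $\textbf{P}_{\bm 1}$ and $\textbf{P}_{\bm 2}$. Fix any feasible policy $\bar{\mu}_{i}$, so that $b_{i}(t)\le 0$ for every $t$, and any admissible multipliers $\lambda_{i}(t)\ge 0$. Because $\lambda_{i}(t)\ge 0$ and $b_{i}(t)\le 0$, the penalty term obeys $-\sum_{t}\lambda_{i}(t)b_{i}(t)\ge 0$, so $G(\bar{\mu}_{i})\le J(\bar{\mu}_{i})\le \max_{\mu_{i}}J(\mu_{i})$. The right-hand side is independent of $\bar{\mu}_{i}$; maximizing the left over all feasible policies gives $G^{*}\le \max_{\mu_{i}}J(\mu_{i})$ for every $\lambda_{i}(t)\ge 0$, and minimizing the right over $\lambda_{i}(t)\ge 0$ then yields $G^{*}\le J^{*}$. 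This is just the generic max--min inequality specialized to our Lagrangian.

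Second, for the sufficient (``if'') direction of the equality I would exploit that $J(\mu_{i})$ is affine---hence simultaneously convex and concave---in the multipliers $\lambda_{i}(t)$. Under the convexity hypothesis on $\textbf{P}_{\bm 1}$, the map $\mu_{i}\mapsto J(\mu_{i})$ is concave over a convex feasible set, so $J$ is concave in $\mu_{i}$ and convex in $\lambda_{i}$. A minimax theorem of Sion's type then licenses interchanging the outer $\min$ and the inner $\max$, collapsing the gap to $G^{*}=J^{*}$. Equivalently, I could argue through the Karush--Kuhn--Tucker system: under convexity and a constraint qualification a primal--dual saddle point exists, and complementary slackness forces $\lambda_{i}(t)b_{i}(t)=0$ at it, so $J$ and $G$ agree there.

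The hard part will be the necessary (``only if'') direction, together with making ``$\textbf{P}_{\bm 1}$ is convex'' precise at the level of policies. Sion's theorem tacitly needs compactness and semicontinuity of one domain and a Slater--type constraint qualification, none of which are stated; I would recover compactness from the bounded budget in (\ref{eq-19}), which confines the admissible $\alpha_{i}(t)$ to a bounded range, and continuity from the metrics of Definitions~\ref{def-2}--\ref{def-4}. For necessity the genuine criterion is not convexity of the program per se but concavity of the perturbation value function $v(u)=\sup\{G(\mu_{i}):b_{i}(t)\le u_{t}\}$ at $u=0$; a zero gap is equivalent to $v$ admitting a non-vertical supporting hyperplane there. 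I therefore expect to recast $\textbf{P}_{\bm 1}$ over occupation measures, under which $G(\mu_{i})$ and each $b_{i}(t)$ become linear functionals and $v$ is automatically concave, so that the separating-hyperplane argument closes cleanly and the role of convexity in forcing equality becomes transparent. Pinning down that this lifted concavity is both necessary and sufficient for $G^{*}=J^{*}$ is where I anticipate the main work.
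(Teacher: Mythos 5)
Your proposal is correct where it overlaps with the paper, but it takes a genuinely different and considerably more self-contained route. The paper's own proof is essentially a pair of citations: it recalls the Slater condition (Definition~\ref{def-5}), invokes \emph{Strong Duality} (Lemma~\ref{lem-1}) to get $G^{*}=J^{*}$ when $\textbf{P}_{\bm 1}$ is convex, observes that convexity of the objective built from $\psi_{1},\psi_{2},\psi_{3}$ cannot be verified, and then falls back on \emph{Weak Duality} (Lemma~\ref{lem-2}) to conclude $G^{*}\leq J^{*}$ unconditionally. You instead derive weak duality from first principles via the sign argument $\lambda_{i}(t)\geq 0$, $b_{i}(t)\leq 0$ and the max--min inequality, which is more transparent than citing the lemma, and you reach the ``if'' direction through a Sion-type minimax interchange or a KKT saddle point rather than through Slater-plus-Boyd; both are standard and either suffices, though the paper's route is shorter since it delegates everything to the reference.

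Your instinct that the ``only if'' direction is the hard part is well founded, and you should know that the paper never proves it: its proof establishes only ``$G^{*}\leq J^{*}$, with equality when $\textbf{P}_{\bm 1}$ is convex and Slater holds,'' which is the ``if'' half plus weak duality. As you correctly observe, zero duality gap is characterized by concavity of the perturbation value function at the origin (equivalently, existence of a non-vertical supporting hyperplane), not by convexity of the program, so the biconditional as stated is not a theorem one should expect to close; moreover, your proposed lift to occupation measures would render the constrained MDP a linear program with $G^{*}=J^{*}$ always (this is the classical result in Altman's book, which the paper cites), which would make the ``only if'' clause vacuous rather than confirm it. In short, do not spend effort trying to prove necessity --- the paper does not, and the honest reading of Theorem~\ref{thm-1} is the one-directional statement your first two paragraphs already deliver.
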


\begin{proof}   
    We first recall the Slater condition in {\it Definition \ref{def-5}}.
    \begin{definition}[Slater condition \cite{Slater2014}]\label{def-5}
    	For the problems $ \textbf{P}_{\bm 1} $ and $ \textbf{P}_{\bm 2} $, there exists a $ \mu_{i} \in \operatorname{relint} \left(\bigcap\limits_{t=1}^{T} dom\left(b_{i}(t)\right)\right)$, such that $ b_{i}(t)\textless 0, t=1,2,\cdots,T $. 
    \end{definition}

    Then, to explore the equivalence of $ \textbf{P}_{\bm 1} $ and $ \textbf{P}_{\bm 2} $, we recall the {\it Strong Duality} in {\it Lemma \ref{lem-1}}.
    \begin{lemma}[Strong Duality \cite{Boyd2004}]\label{lem-1}
    	Suppose that Slater condition holds and $ \textbf{P}_{\bm 1} $ is convex. Then, $ G^{*}=J^{*}$.
    \end{lemma}
    
    Thus, according to {\it lemma \ref{lem-1}}, if $ \textbf{P}_{\bm 1} $ is convex, $ \textbf{P}_{\bm 2} $ is totally equivalent to $ \textbf{P}_{\bm 1} $.
    
    However, since the objective function in $ \textbf{P}_{\bm 1} $ is defined based on the combination of {\it Training loss value difference}, {\it Training accuracy difference}, and {\it Training F-1 score difference}, we cannot figure out the convexity of $ \textbf{P}_{\bm 1} $. In other words, the {\it Strong Duality} of $ \textbf{P}_{\bm 1} $ and $ \textbf{P}_{\bm 2} $ does not always hold.
    
    Therefore, we loose the convexity constraint of $ \textbf{P}_{\bm 1} $, and recall the {\it Weak Duality} in {\it lemma \ref{lem-2}}.
    
    \begin{lemma}[Weak Duality \cite{Boyd2004}]\label{lem-2}
    	 $ G^{*} $ is upper bounded by $ J^{*} $, i.e., $ G^{*}\leq J^{*} $.
    \end{lemma}

    {\it Weak Duality} discloses that the optimal value of $ \textbf{P}_{\bm 2} $ is approximate to the optimal value of $ \textbf{P}_{\bm 1} $, no matter $ \textbf{P}_{\bm 1} $ is convex or not, which completes the proof.

\end{proof}

{\it Theorem \ref{thm-1}} states that $ \textbf{P}_{\bm 2} $ is equivalent or approximately equivalent to $ \textbf{P}_{\bm 1} $, and converting $ \textbf{P}_{\bm 1} $ to $ \textbf{P}_{\bm 2} $ for tractably solving the optimal policy of the prime constrained MDP is rational.

\subsection{Problem solving by DDPG}

\begin{figure}
	\centering
	\includegraphics[scale=0.2,trim=70 675 70 55,clip]{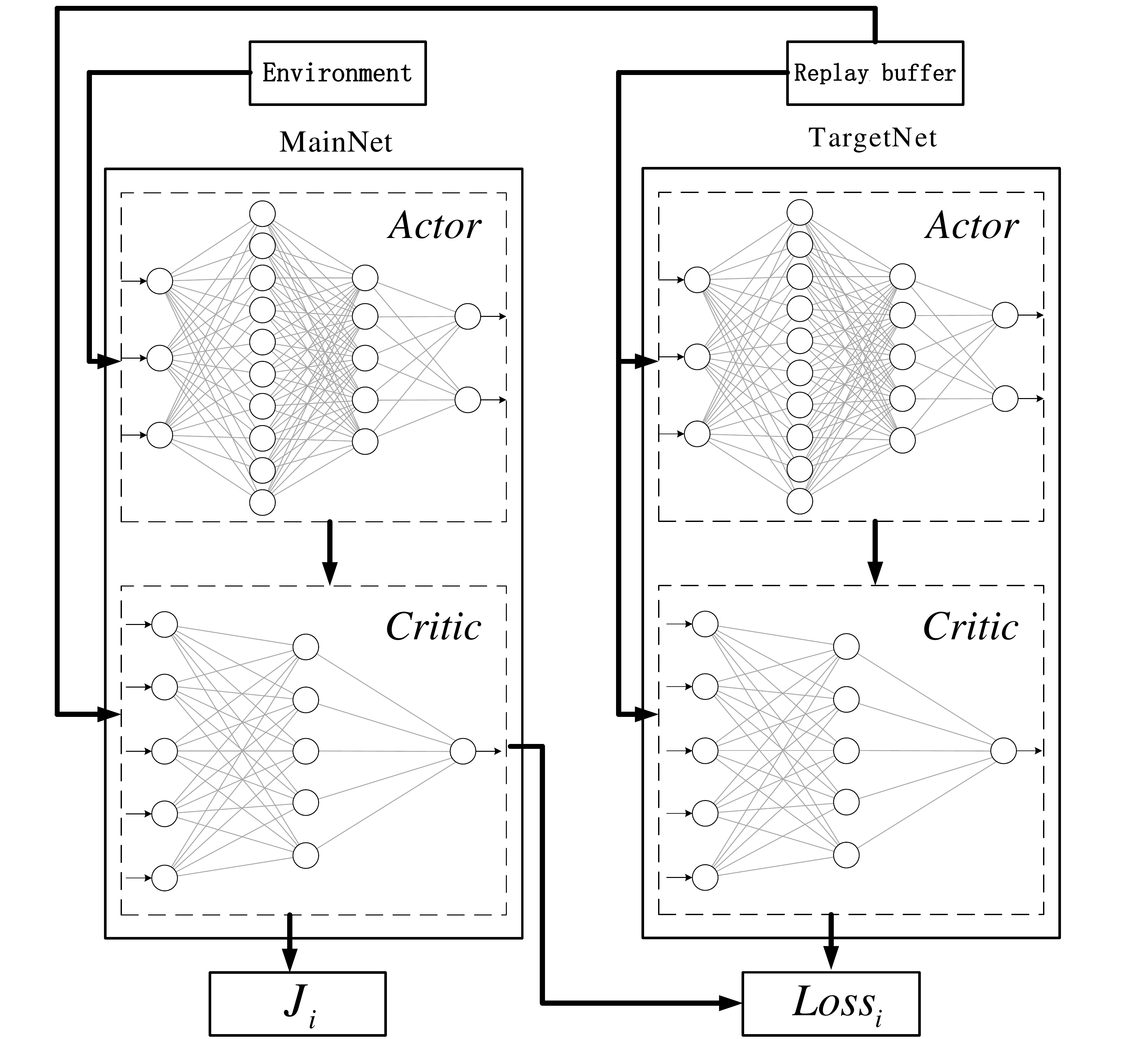}
	\caption{The structure of the DDPG model.}
	\label{fig-2}
\end{figure}

To solve the problem $ \textbf{P}_{\bm 2} $, it is intuitive to consider the DDPG as an effective method, since the state and action space in $ \textbf{P}_{\bm 2} $ are high-dimensional and continuous. Thus, we design a specific {\it Actor-Critic} structure for the DDPG in Fig. \ref{fig-2}. to solve the problem $ \textbf{P}_{\bm 2} $. Specifically, the {\it Actor} in the MainNet, which represents the parameterized deterministic policy $ \mu_{i} \left(\cdot; {\theta}^{\mu}_{i}(t)\right) $, takes the state $ {s}_{i}(t)=\left[\mathit{loss}_{i}(t),\mathit{acc}_{i}(t),\mathit{fs}_{i}(t)\right] $ as the input and outputs the action $ {a}_{i}(t)=\left[{\eta}_{i}(t), {\alpha}_{i}(t)\right] $ through two hidden layers. In addition, the \textit{Critic} in the MainNet, which has an input layer, two hidden layers, and an output layer, inputs $ 3 $ state items and $ 2 $ action items and outputs the value of $ c_{i} $'s parameterized state-action function $ Q^{\mu}_{i}\left(\cdot;\theta^{Q}_{i}(t) \right) $. As mentioned in Section III, since the TargetNet is a copy of the MainNet, the TargetNet can be parameterized by $ \mu'_{i} \left(\cdot; {\theta}^{\mu'}_{i}(t)\right) $ and $Q^{\mu'}_{i}\left(\cdot;{\theta}^{Q'}_{i}(t) \right) $.

As a result, the problem $\textbf{P}_{\bm 2}$ can be solved by iteratively update parameters ${\theta}^{\mu}_{i}(t)$, ${\theta}^{Q}_{i}(t)$, ${\theta}^{\mu'}_{i}(t)$, ${\theta}^{Q'}_{i}(t)$, and the Lagrangian multiplier $ \lambda_{i}(t) $ in the specific {\it Actor-Critic}-structured DDPG over its replay buffer $ \mathcal{B}_{i}=\left\{(s_{i}(t),a_{i}(t),r_{i}(t),s_{i}(t+1)),t=1,\cdots,T-1\right\} $. In detail, at the beginning of the $ t $-th training round, DDPG first randomly samples a batch $ B $ of experience tuples $\left\lbrace \left(s_{i}(k),a_{i}(k),r_{i}(k),s_{i}(k+1) \right),k=1,\cdots,\vert B\vert \right\rbrace $ from $ \mathcal{B}_{i} $, where $ \vert B\vert $ is the batch size. Then, the TargetNet computes the target value $ {y}_{i}(k) $ using ${\theta}_{i}^{\mu'}(t)$ and ${\theta}^{Q'}_{i}(t)$ for each experience sample, shown as:
\begin{align}\label{eq-24}
	{y}_{i}(k)=&\gamma\!\cdot\!{Q}_{i}^{\mu'}\!\left({s}_{i}(k\!+\!1),\mu'_{i}\left(s_{i}(k\!+\!1);{\theta}^{\mu'}_{i}\!(t\!-\!1)\right)\!;\!{{\theta}_{i}^{{Q}^{\prime}}\!(t\!-\!1)}\right)\nonumber\\
	&+{r}_{i}(k).
\end{align}
Next, DDPG updates the \textit{Critic} of the MainNet by minimizing the mean squared error (MSE) loss across all sampled experience tuples:
\begin{align}\label{eq-25}
	&\mathit{Loss}_{i}(t-1)\nonumber\\
	=&\frac{1}{|B|}{\sum\limits_{k=1}^{|B|}{\left({y}_{i}(k)-Q_{i}^{\mu}\left({s}_{i}(k),{a}_{i}(k);{{\theta }_{i}^{Q}(t-1)} \right) \right)}^{2}}.
\end{align}
Then, DDPG updates the {\it Critic} of the MainNet by the gradient descent method, shown as:
\begin{equation}\label{eq-26}
	{\theta}_{i}^{Q}(t)={\theta}_{i}^{Q}(t-1)-{l}_{i}^{C}\cdot {{\nabla}_{{\theta}_{i}^{Q}}} \mathit{Loss}_{i}(t-1),
\end{equation}
where $ l_{i}^{C} $ is the learning rate, and $ {{\nabla}_{{\theta}_{i}^{Q}}} \mathit{Loss}_{i}(t-1) $ is the gradient. Afterwards, DDPG updates the \textit{Actor} of the MainNet by the policy gradient ascent method, which is expressed as:
\begin{equation}\label{eq-27}
	{\theta}_{i}^{\mu}(t)={\theta}_{i}^{\mu}(t-1)+{l}_{i}^{A}\cdot {{\nabla}_{{\theta}_{i}^{\mu}}} J_{i}(t-1),
\end{equation}
where $l_{i}^{A}$ is the learning rate, and ${\nabla}_{{\theta}_{i}^{\mu}}J_{i}(t-1)$ is the policy gradient. Next, DDPG updates the TargetNet with a tiny updating rate $\beta_{i}$, which is shown as:
\begin{equation}\label{eq-28}
	\left\{ \begin{aligned}
	{\theta }_{i}^{\mu'}(t)= &~\beta_{i} \cdot {{\theta }_{i}^{\mu }(t-1)}+\left( 1-\beta_{i} \right) \cdot {\theta }_{i}^{\mu'}(t-1) \\ 
	{\theta }_{i}^{Q'}(t)= &~\beta_{i} \cdot {{\theta }_{i}^{Q}(t-1)}+\left( 1-\beta_{i} \right) \cdot {\theta }_{i}^{Q'}(t-1). \\ 
	\end{aligned} \right.
\end{equation}
Finally, DDPG updates the Lagrangian multiplier $ \lambda_{i}(t) $ by the gradient descent method, shown as:
\begin{equation}\label{eq-29}
	\lambda'_{i}(t)=\lambda_{i}(t)-l_{i}^{L}\cdot{{\nabla}_{{\lambda}_{i}}} J_{i}(t-1),
\end{equation}
where $ l_{i}^{L} $ is the learning rate, $ {{\nabla}_{{\lambda}_{i}}} J_{i}(t-1) $ is the gradient, and $ \lambda'_{i}(t) $ is the updated Lagrangian multiplier.
	
By iteratively performing the above operations, DDPG outputs $c_{i}$'s best policy $\mu_{i}^{*}$, which is the optimal result of $\textbf{P}_{\bm 2}$. The output of $\mu_{i}^{*}$ is the sequences of the selected local training hyper-parameters, i.e., the optimal results $ \bm{\alpha}^{*}_{i} $ and $ \bm{\eta}^{*}_{i} $ of the prime optimization problem $ \textbf{P}_{\bm 0} $.
	
\section{DDPG-assisted FL System}
In this section, we proposed our \underline{D}DPG-assisted \underline{a}daptive and \underline{p}rivacy-preserving federated learning (Dap-FL) system, which consists of the four parts: system initialization, local DDPG model update, local ML model training and uploading, and local ML model aggregation. For clarity, we illustrate a high-level view of the proposed Dap-FL in Fig. \ref{fig-3}.
	
\subsection{System Initialization}
Since there is a large number of interactions during the whole FL process, it is of vital importance to ensure the security of communication initially. Therefore, the $ \mathit{KDC} $ distributes a series of keys for all participants before collaborative training the global model. Specifically, the $ \mathit{KDC} $ first generates a series of Paillier-based private and public keys $ \{({\mathit{SK}}_{\mathit{ cs}}^{\mathit{Enc}},{\mathit{PK}}_{\mathit{cs}}^{\mathit{Enc}}) $ and $\{ ({\mathit{SK}}_{\mathit{ i}}^{\mathit{Enc}},{\mathit{PK}}_{\mathit{i}}^{\mathit{Enc}}),i=1,\cdots,N \}$. Then, the $ \mathit{KDC} $ sends $ \left\{{\mathit{SK}}_{\mathit{i}}^{\mathit{Enc}}, i=1,\cdots,N \right\}$ to the $ \mathit{CS} $, and sends $ {\mathit{PK}}_{\mathit{CS}}^{\mathit{Enc}} $, $ {\mathit{SK}}_{\mathit{CS}}^{\mathit{Enc}} $, and $ {\mathit{PK}}_{\mathit{i}}^{\mathit{Enc}} $ to each client $ c_{i} $. Besides, the $ \mathit{KDC} $ generates Paillier-based $ ({\mathit{SK}}_{\mathit{ cs}}^{\mathit{Sign}},{\mathit{PK}}_{\mathit{cs}}^{\mathit{Sign}}) $ and $\{ ({\mathit{SK}}_{\mathit{ i}}^{\mathit{Sign}},{\mathit{PK}}_{\mathit{i}}^{\mathit{Sign}}),i=1,\cdots,N \}$ for the $ \mathit{CS} $ and all clients to sign and verify messages.

After receiving the above keys, the $ \mathit{CS} $ distributes the original global ML model  $ W(0) $ in a secure manner. Specifically, the $ \mathit{CS} $ first signs on $ W(0) $ using its private key $ {\mathit{SK}}_{\mathit{cs}}^{\mathit{Sign}} $ based on the Paillier digital signature scheme, shown as:
\begin{equation}\label{eq-30}
W(0)\vert\vert\left(\varsigma(0),\tilde{\varsigma}(0)\right)\leftarrow\textit{Sign}(W(0), H, {\mathit{SK}}_{\mathit{cs}}^{\mathit{Sign}}),
\end{equation}
where $ (\varsigma(0),\tilde{\varsigma}(0)) $ is the digital signature. Then, the $ \mathit{CS} $ distributes $ W(0)\vert\vert\left(\varsigma(0),\tilde{\varsigma}(0)\right) $ to all clients. Next, all clients verify the signature, shown as:
\begin{equation}\label{eq-31}
\{0,1\}\leftarrow\textit{Verify}\left(W(0)\vert\vert \left(\varsigma(0),\tilde{\varsigma}(0) \right),H,{\mathit{PK}}_{\mathit{ cs}}^{\mathit{Sign}} \right).
\end{equation}
Thus, every $ c_{i} $ could initialize the local model $ w_{i}(0) $ as $ W(0) $. 

So far, the whole system is ready to collaboratively train the global ML model. Note that we include the distribution of the original global model into the system initialization, because the original global model is distributed in the form of plaintext, which is different from other FL training rounds.

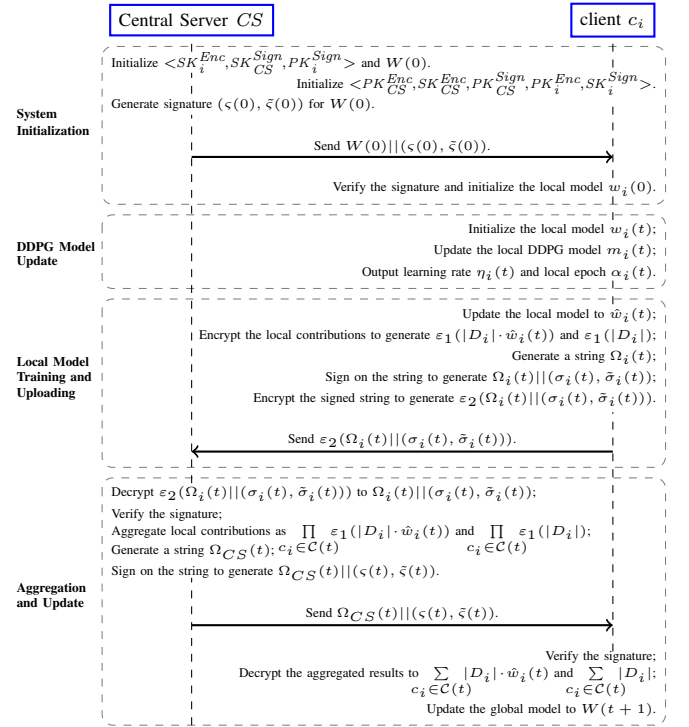
\begin{figure}
	\centering 
	\begin{tikzpicture}[scale=0.7]
	
	%%----------------------------------------%%????	
	\path[fill=yellow!0,rounded corners, draw=black!50, dashed](0.3,6.7) rectangle (11.0,3.7);
	
	\node[right]at (-1.5,5.4) { \tiny{\textbf{System}}};
	
	\node[right]at (-1.5,5.1) { \tiny{\textbf{Initialization}}};
	
	\node[fill=blue!0,draw=blue,thick] at (2,7.2) (CS){\scriptsize{Central Server $ \mathit{CS} $}};
	
	\node[fill=blue!0,draw=blue,thick] at (10,7.2) (User){\scriptsize{client $ c_{i} $}};	
	
	\node[right] at (0.3,6.4){\tiny{Initialize $<\!\! {\mathit{SK}}_{\mathit{i}}^{\mathit{Enc}}\!,\! {\mathit{SK}}_{\mathit{CS}}^{\mathit{Sign}}\!,\! {\mathit{PK}}_{\mathit{i}}^{\mathit{Sign}}\!\!>$ and $W(0)$.}};
	
	\node[left] at (11,6){\tiny{Initialize $<\!\! {\mathit{PK}}_{\mathit{CS}}^{\mathit{Enc}}\!,\! {\mathit{SK}}_{\mathit{CS}}^{\mathit{Enc}}\!,\!{\mathit{PK}}_{\mathit{CS}}^{\mathit{Sign}}\!,\!{\mathit{PK}}_{\mathit{i}}^{\mathit{Enc}}\!,\!{\mathit{SK}}_{\mathit{i}}^{\mathit{Sign}} \!\!>$.}};
	
	\node[right]() at (0.3,5.6) {\tiny{Generate signature $ (\varsigma(0),\tilde{\varsigma}(0)) $ for $ W(0) $.}};	
	
	\draw[thick,->](2,4.6)-- (10,4.6);
	
	\node[above]() at (6,4.5) {\tiny{Send $ W(0)\vert\vert(\varsigma(0),\tilde{\varsigma}(0)) $.}};
	
	\node[left]() at (11,4.0) {\tiny{Verify the signature and initialize the local model $ w_{i}(0) $.}};
	
	%%----------------------------------------%%????
	
	\path[fill=yellow!0,rounded corners, draw=black!50, dashed](0.3,3.5) rectangle (11.0,2.1);
	
	\node[right]at (-1.5,2.9) { \tiny{\textbf{DDPG Model}}};
	
	\node[right]at (-1.5,2.6) { \tiny{\textbf{Update}}};
	
	\node[left]() at (11,3.2) {\tiny{Initialize the local model $ w_{i}(t) $;}};
	
	\node[left]() at (11,2.8) {\tiny{Update the local DDPG model $ m_{i}(t) $;}};
	
	\node[left]() at (11,2.4) {\tiny{Output learning rate $ {\eta}_{i}(t) $ and local epoch $ {\alpha}_{i}(t) $.}};
	
	%%----------------------------------------%%????
	
	\path[fill=yellow!0,rounded corners, draw=black!50, dashed](0.3,1.9) rectangle (11.0,-1.3);
	
	\node[right]at (-1.5,0.7) { \tiny{\textbf{Local Model}}};
	
	\node[right]at (-1.5,0.4) { \tiny{\textbf{Training and}}};
	
	\node[right]at (-1.5,0.1) { \tiny{\textbf{Uploading}}};

	\node[left]() at (11,1.6) {\tiny{Update the local model to $ \hat{w}_{i}(t) $;}};
	
	\node[left]() at (11,1.2) {\tiny{Encrypt the local contributions to generate $ \varepsilon_{1}(\vert D_{i} \vert\!\cdot\!\hat{w}_{i}(t)) $ and $ \varepsilon_{1}(\vert D_{i} \vert) $;}};
	
	\node[left]() at (11,0.8) {\tiny{Generate a string $ {\Omega}_{i}(t) $;}};
	
	\node[left]() at (11,0.4) {\tiny{Sign on the string to generate $ {\Omega}_{i}(t)\vert\vert(\sigma_{i}(t),\tilde{\sigma}_{i}(t)) $;}};
	
	\node[left]() at (11,0) {\tiny{Encrypt the signed string to generate $ \varepsilon_{2}({\Omega}_{i}(t)\vert\vert(\sigma_{i}(t),\tilde{\sigma}_{i}(t))) $.}};
	
	\draw[thick,<-](2,-1)-- (10,-1);
	
	\node[above]() at (6,-1.1) {\tiny{Send $ \varepsilon_{2}({\Omega}_{i}(t)\vert\vert(\sigma_{i}(t),\tilde{\sigma}_{i}(t))) $.}}; 
	
	%%----------------------------------------%%????
	
	\path[fill=yellow!0,rounded corners, draw=black!50, dashed](0.3,-1.5) rectangle (11.0,-6.2);
	
	\node[right]at (-1.5,-3.6) { \tiny{\textbf{Aggregation}}};
	
	\node[right]at (-1.5,-3.9) { \tiny{\textbf{and Update}}};    
	
	\node[right] at (0.3,-1.8){\tiny{Decrypt $ \varepsilon_{2}({\Omega}_{i}(t)\vert\vert(\sigma_{i}(t),\tilde{\sigma}_{i}(t))) $ to $ {\Omega}_{i}(t)\vert\vert(\sigma_{i}(t),\tilde{\sigma}_{i}(t)) $;}};
	
	\node[right] at (0.3,-2.2){\tiny{Verify the signature;}}; 
	
	\node[right] at (0.3,-2.7){\tiny{Aggregate local contributions as \!\!\!\!\!\!\!\! $ \prod\limits_{c_{i}\in \mathcal{C}(t)}\!\!\!\! \varepsilon_{1}(\vert D_{i} \vert\!\cdot\!\hat{w}_{i}(t)) $ and \!\!\!\!\!\!\!\! $ \prod\limits_{c_{i}\in \mathcal{C}(t)} \!\!\!\! \varepsilon_{1}(\vert D_{i} \vert) $;}};
	
	\node[right] at (0.3,-2.9){\tiny{Generate a string $ {\Omega}_{CS}(t) $;}};  
	
	\node[right] at (0.3,-3.3){\tiny{Sign on the string to generate $ {\Omega}_{CS}(t)\vert\vert(\varsigma(t),\tilde{\varsigma}(t)) $.}};
	
	\draw[thick,->](2,-4.3)-- (10,-4.3);
	
	\node[above]() at (6,-4.4) {\tiny{Send $ {\Omega}_{CS}(t)\vert\vert(\varsigma(t),\tilde{\varsigma}(t)) $.}};
	
	\node[left]() at (11,-4.9) {\tiny{Verify the signature;}};
	
	\node[left]() at (11,-5.4) {\tiny{Decrypt the aggregated results to \!\!\!\!\!\!\!\! $ \sum\limits_{c_{i}\in \mathcal{C}(t)} \!\!\!\! \vert D_{i} \vert\!\cdot\!\hat{w}_{i}(t) $ and \!\!\!\!\!\!\!\! $ \sum\limits_{c_{i}\in \mathcal{C}(t)} \!\!\!\! \vert D_{i} \vert $;}};
	
	\node[left]() at (11,-5.9) {\tiny{Update the global model to $ W(t+1) $.}};

	\draw[dashed](2,-6.2)--(2,-3.4);\draw[dashed](2,-1.6)--(2,5.4);\draw[dashed](2,5.8)--(2,6.2);\draw[dashed](2,6.6)--(CS);
	
	\draw[dashed](10,-4.8)--(10,-0.25);\draw[dashed](10,1.8)--(10,2.25);\draw[dashed](10,3.4)--(10,3.85);\draw[dashed](10,4.2)--(10,5.7);\draw[dashed](10,6.3)--(User);

	\end{tikzpicture}
	\caption{High-view of Dap-FL system.}
	\label{fig-3}
\end{figure}

\subsection{Local DDPG Model Update}
For the FL system under our consideration, clients need to select their local learning rates and local training epochs to train the local ML models. Therefore, the proposed DDPG-assisted adaptive hyper-parameter selection scheme is introduced to assist clients selecting proper $ \eta_{i}(t) $ and $ \alpha_{i}(t) $. Without loss of generality, we take the $ t $-th training round as an example to demonstrate how does the proposed scheme works.

In the $ t $-th training round, every client $ c_{i} $ initializes the local model $ w_{i}(t) $ as the the received global model $ W(t) $. By testing $ w_{i}(t) $ on the local dataset $ D_{i} $, the state $ {s}_{i}(t)=\left[\mathit{loss}_{i}(t),\mathit{acc}_{i}(t),\mathit{fs}_{i}(t)\right] $ can be observed. Then, $ c_{i} $ calculates a reward $ r_{i}(t-1) $ of the $ (t\!-\!1) $-th training round according to Equation (\ref{eq-22}). Next, $ c_{i} $ stores the experience tuple $ (s_{i}(t-1),a_{i}(t-1),r_{i}(t-1),s_{i}(t)) $ in the reply buffer $ \mathcal{B}_{i} $. Thus, $ c_{i} $ could update the local DDPG model on $ \mathcal{B}_{i} $, which can be simply expressed as:
\begin{equation}\label{eq-32}
	m_{i}(t)\leftarrow \textit{DDPG.train}\left( m_{i}(t-1),{{\mathcal{B}}_{i}} \right),
\end{equation}
where $ m_{i}(t) $ and $ m_{i}(t-1) $ are the DDPG models before and after updating in the $ t $-th training round. Consequently, the DDPG model $ m_{i}(t) $ outputs an action $ {a}_{i}(t)=\left[{\eta}_{i}(t), {\alpha}_{i}(t)\right] $. It is worth mentioning that the DDPG model cannot update in the $ 0 $-th training round because of the lack of prior experience. Hence, the client $ c_{i} $ will select $ \eta_{i}(0) $ and $ \alpha_{i}(0) $ randomly in the initial training round.

\subsection{Local ML Model Training and Uploading}

Based on the obtained action $ {a}_{i}(t) $, every client $ c_{i} $ updates $ {w}_{i}(t) $ to $ \hat{w}_{i}(t) $ according to Equation (\ref{eq-2}). Thus, $ c_{i} $ could upload $ \hat{w}_{i}(t) $ as a local contribution to the global model.

However, under the honest-but-curious assumption of the $ \mathit{CS} $ and clients, the precise individual local model should keep private to anyone but the client itself. Besides, owing to the existence of the active adversary $ \mathscr{A} $, the security of the local ML model must be under consideration during transmission. Therefore, we introduce the Paillier cryptosystem to mask the local model and verify the identity when transmitting the local ML models. In detail, $ c_{i} $ first encrypts the product of $ \vert D_{i} \vert $ and $ \hat{w}_{i}(t) $ using $ {\mathit{PK}}_{\mathit{cs}}^{\mathit{Enc}} $, which is expressed as:
\begin{equation}\label{eq-33}
	\varepsilon_{1}\left(\vert D_{i} \vert\cdot\hat{w}_{i}(t)\right)\leftarrow\textit{Enc} \left(\vert D_{i} \vert\cdot\hat{w}_{i}(t);{\mathit{PK}}_{\mathit{cs}}^{\mathit{Enc}}\right).
\end{equation}
Similarly, the size of the dataset $ \vert D_{i} \vert $ is encrypted by:
\begin{equation}
\varepsilon_{1}(\vert D_{i} \vert)\leftarrow\textit{Enc} \left(\vert D_{i} \vert;{\mathit{PK}}_{\mathit{cs}}^{\mathit{Enc}}\right).
\end{equation}
Subsequently, $ c_{i} $ combines $ \varepsilon_{1}(\vert D_{i} \vert\cdot\hat{w}_{i}(t)) $ and $ \varepsilon_{1}(\vert D_{i} \vert) $ as a string $ {\Omega}_{i}(t)=\varepsilon_{1}(\vert D_{i} \vert\cdot\hat{w}_{i}(t)) \vert\vert \varepsilon_{1}(\vert D_{i} \vert) $ and signs on it, which is expressed as:
\begin{equation}\label{eq-35}
{\Omega}_{i}(t)\vert\vert\!\left(\sigma_{i}(t),\!\tilde{\sigma}_{i}(t)\right)\!\leftarrow\!\textit{Sign}({\Omega}_{i}(t),\!H,\! {{\mathit{SK}}}_{i}^{\mathit{Sign}}).
\end{equation}
Further, $ c_{i} $ masks $ {\Omega}_{i}(t)\vert\vert\left(\sigma_{i}(t),\tilde{\sigma}_{i}(t)\right) $ by $ {{\mathit{PK}}}_{i}^{\mathit{Enc}} $, shown as:
\begin{equation}\label{eq-36}
\varepsilon_{2}\left({\Omega}_{i}(t)\vert\vert\!\left(\sigma_{i}(t),\tilde{\sigma}_{i}(t)\right)\right)\!\leftarrow\!\textit{Enc}({\Omega}_{i}(t)\vert\vert\!\left(\sigma_{i}(t),\!\tilde{\sigma}_{i}(t)\right)\!,\! {{\mathit{PK}}}_{i}^{\mathit{Enc}}).
\end{equation}

As a result, $ c_{i} $ could upload the double masked and signed local contribution $ \varepsilon_{2}({\Omega}_{i}(t)\vert\vert\left(\sigma_{i}(t),\tilde{\sigma}_{i}(t)\right)) $ to the $ \mathit{CS} $.

\subsection{Local model aggregation and global model update}
When receiving the local contributions, the $ \mathit{CS} $ first unmasks them by their private keys, shown as:
\begin{equation}\label{eq-37}
	{\Omega}_{i}(t)\vert\vert\!\left(\sigma_{i}(t),\tilde{\sigma}_{i}(t)\right)\!\leftarrow\!\textit{Dec}\left(\varepsilon_{2}({\Omega}_{i}(t)\vert\vert\!\left(\sigma_{i}(t),\tilde{\sigma}_{i}(t)\right))\!,\!{{\mathit{SK}}}_{i}^{\mathit{Enc}}\right).
\end{equation}
Next, the $ \mathit{CS} $ verifies the signature, expressed as:
\begin{equation}\label{eq-38}
	\{0,1\}\leftarrow\textit{Verify}\left({\Omega}_{i}(t)\vert\vert\left(\sigma_{i}(t),\tilde{\sigma}_{i}(t)\right),H,{\mathit{PK}}_{i}^{\mathit{Sign}} \right).
\end{equation}
The clients passing the authentication are flagged as legitimate clients and are incorporated into a set $ \mathcal{C}(t) $. Then, every local contribution $ {\Omega}_{i}(t) $ belonging to a legitimate client is divided back into $ \varepsilon_{1}(\vert D_{i} \vert\cdot\hat{w}_{i}(t)) $ and $ \varepsilon_{1}(\vert D_{i} \vert) $.
Afterwards, the $ \mathit{CS} $ calculates the products of all legitimate clients' weighted local models and local dataset sizes respectively, which are expressed as:
\begin{equation*}
	\prod_{c_{i}\in \mathcal{C}(t)}  \varepsilon_{1}(\vert D_{i} \vert\cdot\hat{w}_{i}(t))\quad \text{and} \quad \prod_{c_{i}\in \mathcal{C}(t)}  \varepsilon_{1}(\vert D_{i} \vert).
\end{equation*}
Then, the $ \mathit{CS} $ combines the two products as a string:
\begin{equation}\label{eq-39}
   \Omega_{CS}(t)={\prod\limits_{c_{i}\in \mathcal{C}(t)}  \varepsilon_{1}(\vert D_{i} \vert\cdot\hat{w}_{i}(t))}\vert\vert {\prod\limits_{c_{i}\in \mathcal{C}^{t}}  \varepsilon_{1}(\vert D_{i} \vert)}. 
\end{equation}
Subsequently, the $ \mathit{CS} $ signs on $ \Omega_{CS}(t) $, shown as:
\begin{equation}\label{eq-40}
	{\Omega}_{CS}(t)\vert\vert\left(\varsigma(t),\tilde{\varsigma}(t)\right)\leftarrow\textit{Sign}({\Omega}_{CS}(t), H, {{\mathit{SK}}}_{CS}^{\mathit{Sign}}).
\end{equation}
Finally, the $ \mathit{CS} $ distributes the signed result to all clients.

After receiving $ {\Omega}_{CS}(t)\vert\vert\left(\varsigma(t),\tilde{\varsigma}(t)\right) $, each client could update the global model locally. Specifically, all clients first verify the digital signature, shown as:
\begin{equation}\label{eq-41}
	\{0,1\}\leftarrow\textit{Verify}\left({\Omega}_{CS}(t)\vert\vert\left(\varsigma(t),\tilde{\varsigma}(t)\right),H,{\mathit{PK}}_{ CS}^{\mathit{Sign}} \right).
\end{equation}
Then, every client divides $ {\Omega}_{CS}(t) $ back to the two products and decrypts them to obtain the aggregated results, shown as:
\begin{equation}\label{eq-42}
	\sum_{c_{i}\in \mathcal{C}(t)}\!\!\!\vert D_{i} \vert\!\cdot\!\hat{w}_{i}(t)\!\leftarrow\!\textit{Dec}\!\left(\!\prod_{c_{i}\in \mathcal{C}(t)}\!\!\!\varepsilon_{1}(\vert D_{i} \vert\!\cdot\!\hat{w}_{i}(t)), {{\mathit{SK}}}_{CS}^{\mathit{Enc}}\!\right)\!,
\end{equation}
and
\begin{equation}\label{eq-43}
\sum_{c_{i}\in \mathcal{C}(t)}  \vert D_{i} \vert\leftarrow\textit{Dec}\left(\!\prod_{c_{i}\in \mathcal{C}(t)}  \varepsilon_{1}(\vert D_{i} \vert), {{\mathit{SK}}}_{CS}^{\mathit{Enc}}\!\right).
\end{equation}
As a result, $ c_{i} $ could calculate the updated global model, which is expressed as:
\begin{equation}\label{eq-44}
	W(t+1)=\frac{\sum_{c_{i}\in \mathcal{C}(t)}  \vert D_{i} \vert\cdot\hat{w}_{i}(t)}{\sum_{c_{i}\in \mathcal{C}(t)}  \vert D_{i} \vert}.
\end{equation}
Note that the $ \mathit{CS} $ would request the global model to any clients if necessary.  

A complete and detailed description of the proposed Dap-FL system workflow is provided in Fig. \ref{fig-4}. We stress that, in the figure, the first training round is different from subsequent training rounds, as the local DDPG model cannot be updated without prior experience. By executing the proposed scheme, all clients could collaboratively train an optimal global model $ W^{*} $ in an efficient and privacy-preserving manner.

\begin{figure*}[htp]
	\centering
	\begin{tikzpicture}[scale=0.7]
		%%----------------------------------------%%????	
		\path[fill=yellow!0, draw=black!50](0,0) rectangle (25,-34.5);
			
		\node[right] at (3.0, -0.5){\textbf{DDPG-assisted Adaptive and Privacy-preserving Federated Learning (Dap-FL) System}};
			
		\draw[thick](0.5,-0.9) -- (24.5,-0.9);
			
		\node[right] at (0.4,-1.4){$\bullet$ \textbf{Phase 0: Syetem Initialization}};
			
		\node[right] at (0.8, -2.0){\small - $ \mathit{KDC} $ generates $ ({\mathit{SK}}_{\mathit{ CS}}^{\mathit{Enc}},{\mathit{PK}}_{\mathit{CS}}^{\mathit{Enc}}) $ and $\{ ({\mathit{SK}}_{\mathit{ i}}^{\mathit{Enc}},{\mathit{PK}}_{\mathit{i}}^{\mathit{Enc}}),i=1,\cdots,N \}$ for encryption and decryption.};
		
		\node[right] at (0.8, -2.6){\small - $ \mathit{KDC} $ generates $ ({\mathit{SK}}_{\mathit{ CS}}^{\mathit{Sign}},{\mathit{PK}}_{\mathit{CS}}^{\mathit{Sign}}) $ and $\{ ({\mathit{SK}}_{\mathit{ i}}^{\mathit{Sign}},{\mathit{PK}}_{\mathit{i}}^{\mathit{Sign}}),i=1,\cdots,N \}$ for digital signature.};
			
		\node[right] at (0.8, -3.2) {\small - $ \mathit{KDC} $ sends $ {\mathit{SK}}_{\mathit{i}}^{\mathit{Enc}}, i=1,\cdots,N $ to the $ \mathit{CS} $.};
		
		\node[right] at (0.8, -3.8) {\small - $ \mathit{KDC} $ sends $ {\mathit{PK}}_{\mathit{CS}}^{\mathit{Enc}} $, $ {\mathit{SK}}_{\mathit{CS}}^{\mathit{Enc}} $, $ {\mathit{PK}}_{\mathit{i}}^{\mathit{Enc}} $ to each client $ c_{i} $.};
	
		\node[right] at (0.8, -4.4){\small - $ \mathit{KDC} $ sends $ {\mathit{SK}}_{\mathit{CS}}^{\mathit{Sign}} $ to the $ \mathit{CS} $ and $ {\mathit{SK}}_{\mathit{i}}^{\mathit{Sign}} $ to each $ c_{i} $, and publishes $ {\mathit{PK}}_{\mathit{CS}}^{\mathit{Sign}} $ and $ {\mathit{PK}}_{\mathit{i}}^{\mathit{Sign}}, i=1,\cdots,N $ to all participants.};
			
		\node[right] at (0.8, -5.0){\small  - $ \mathit{CS} $ signs on the original global model $ W(0) $, shown as $ W(0)\vert\vert(\varsigma(0),\tilde{\varsigma}(0))\leftarrow\textit{Sign}(W(0), H, {\mathit{SK}}_{\mathit{CS}}^{\mathit{Sign}}) $.};
			
		\node[right] at (0.8, -5.6){\small - $ \mathit{CS} $ distributes the signed original global model $ W(0)\vert\vert(\varsigma(0),\tilde{\varsigma}(0)) $ to all clients.};
		
		\node[right] at (0.8, -6.2){\small - $ c_{i} $ initializes the local model $ w_{i}(0) $ as $ W(0) $ after verifying the signature: $ \{0,1\}\leftarrow\textit{Verify}(W(0)\vert\vert (\varsigma(0),\tilde{\varsigma}(0) ),H,{\mathit{PK}}_{\mathit{ CS}}^{\mathit{Sign}} ) $.};
			
		\node[right] at (0.5, -6.8){$\bullet$ \textbf{Phase 1: Local DDPG Model Update}};
			
		\node[right] at (0.8, -7.4){\small - $ c_{i} $ initializes the local model $ w_{i}(t) $ as $ W(t) $, and observes the state $ {s}_{i}(t)=[\mathit{loss}_{i}(t),\mathit{acc}_{i}(t),\mathit{fs}_{i}(t)] $ by testing $ w_{i}(t) $ on $ D_{i} $.};		
			
		\node[right] at (0.8, -8.0){\small - $ c_{i} $ calculates the reward $ r_{i}(t-1) $ for the $ (t\!-\!1) $-th training round, shown as $ r_{i}(t-1)=R({s}_{i}(t-1),{a}_{i}(t-1),{s}_{i}(t)) $.};
		
        \node[right] at (0.8, -8.6){\small - $ c_{i} $ generates an  experience tuple $ (s_{i}(t-1),a_{i}(t-1),r_{i}(t-1),s_{i}(t)) $, and stores it in the reply buffer $ \mathcal{B}_{i} $.};

        \node[right] at (0.8, -9.2){\small - $ c_{i} $ randomly samples a batch of $ B $ of experience tuples from $ \mathcal{B}_{i} $.};      
        
        \node[right] at (0.8, -9.8){\small - $ c_{i} $ calculates $ y_{i}(k) $ for each experience sample by the TargetNet according to Equation (\ref{eq-24}).};  
        
        \node[right] at (0.8, -10.4){\small - $ c_{i} $ calculates $ Loss_{i}(t-1) $ across all sampled experience tuples according to Equation (\ref{eq-25}).};
        
        \node[right] at (0.8, -11.0){\small - $ c_{i} $ updates the {\it Critic} of the MainNet, shown as $ {\theta}_{i}^{Q}(t)={\theta}_{i}^{Q}(t-1)-{l}_{i}^{C}\cdot {{\nabla}_{{\theta}_{i}^{Q}}} \mathit{Loss}_{i}(t-1) $.}; 
        
        \node[right] at (0.8, -11.6){\small - $ c_{i} $ updates the \textit{Actor} of the MainNet, shown as $ {\theta}_{i}^{\mu}(t)={\theta}_{i}^{\mu}(t-1)+{l}_{i}^{A}\cdot {{\nabla}_{{\theta}_{i}^{\mu}}} J_{i}(t-1) $.};  
        
        \node[right] at (0.8, -12.2){\small - $ c_{i} $ updates the TargetNet, shown as:};
        
        \node[right] at (1.2, -12.8){\small $ {\theta }_{i}^{\mu'}\!(t)\!=\! \beta_{i} \cdot {{\theta }_{i}^{\mu }(t\!-\!1)}\!+\!\left( 1\!-\!\beta_{i} \right) \cdot {\theta }_{i}^{\mu'}(t\!-\!1) $ and $ {\theta }_{i}^{Q'}(t)\!=\! \beta_{i} \cdot {{\theta }_{i}^{Q}(t\!-\!1)}\!+\!\left( 1\!-\!\beta_{i} \right) \cdot {\theta }_{i}^{Q'}(t\!-\!1) $.}; 
        
        \node[right] at (0.8, -13.4){\small - $ c_{i} $ updates the Lagrangian multiplier, shown as $ \lambda'_{i}(t)=\lambda_{i}(t)-l_{i}^{L}\cdot{{\nabla}_{{\lambda}_{i}}} J_{i}(t-1) $.};
        
         \node[right] at (0.8, -14.0){\small - $ m_{i}(t) $ outputs an action $ {a}_{i}(t)=[{\eta}_{i}(t), {\alpha}_{i}(t)] $ as the hyper-parameters for the local training of $ w_{i}(t) $.};  
         
         \node[rotate = 90] at (18.9, -11.3) {$\underbrace{\hspace{3.2cm}}$}; 
         
         \node[right] at (18.9, -10.7){\emph{\small Simply expressed as updating}}; 
         
         \node[right] at (18.9, -11.3){\emph{\small local DDPG model on $ \mathcal{B}_{i} $:}};          
         
         \node[right] at (18.9, -11.9){\scriptsize $ m_{i}(t)\!\leftarrow\! \textit{DDPG.train}\left( m_{i}(t\!-\!1),{{\mathcal{B}}_{i}} \right) $};         
         
		\node[right] at (0.5,-14.6){$\bullet$ \textbf{Phase 2: Local ML Model Training and Uploading}};  	   	
       	
        \node[right] at (0.8, -15.2){\small - ${c_{i}}$ update the local model to $ \hat{w}_{i}(t) $, shown as $ \hat{w}_{i}(t)={w}_{i}(t)-{{\eta}_{i}(t)}\cdot \nabla {f}_{i}\left({{w}_{i}(t)};{D}_{i};{\alpha}_{i}(t)\right) $. };       
        
		\node[right] at (0.8, -15.8){\small - $ c_{i} $ encrypts $\vert{{D}_{i}}\vert\!\cdot\! \hat{w}_{i}(t)$ and $\vert{{D}_{i}}\vert$, shown as $ \varepsilon_{1}(\vert D_{i} \vert\!\cdot\!\hat{w}_{i}(t))\leftarrow\textit{Enc} (\vert D_{i} \vert\!\cdot\!\hat{w}_{i}(t),{\mathit{PK}}_{\mathit{CS}}^{\mathit{Enc}}) $ and $ \varepsilon_{1}(\vert D_{i} \vert)\leftarrow\textit{Enc} (\vert D_{i} \vert,{\mathit{PK}}_{\mathit{CS}}^{\mathit{Enc}}) $. };
		
		\node[right] at (0.8, -16.4){\small - $ c_{i} $ combines $ \varepsilon_{1}(\vert D_{i} \vert\cdot\hat{w}_{i}(t)) $ and $ \varepsilon_{1}(\vert D_{i} \vert) $ as a string $ {\Omega}_{i}(t)=\varepsilon_{1}(\vert D_{i} \vert\cdot\hat{w}_{i}(t)) \vert\vert \varepsilon_{1}(\vert D_{i} \vert) $. };
		
		\node[right] at (0.8, -17.0){\small - $ c_{i} $ signs on $ {\Omega}_{i}(t) $, shown as $ {\Omega}_{i}(t)\vert\vert\left(\sigma_{i}(t),\tilde{\sigma}_{i}(t)\right)\leftarrow\textit{Sign}({\Omega}_{i}(t), H, {{\mathit{SK}}}_{i}^{\mathit{Sign}}) $. };
                   
        \node[right] at (0.8, -17.6){\small - ${c_{i}}$ encrypts the signed string by $ {\mathit{PK}}_{i}^{\mathit{Enc}} $, shown as $ \varepsilon_{2}({\Omega}_{i}(t)\vert\vert\left(\sigma_{i}(t),\tilde{\sigma}_{i}(t)\right))\leftarrow\textit{Enc}({\Omega}_{i}(t)\vert\vert\left(\sigma_{i}(t),\tilde{\sigma}_{i}(t)\right), {{\mathit{PK}}}_{i}^{\mathit{Enc}}) $. };     
        
		\node[right] at (0.8, -18.2){\small - $ c_{i} $ uploads $ \varepsilon_{2}({\Omega}_{i}(t)\vert\vert(\sigma_{i}(t),\tilde{\sigma}_{i}(t))) $ to the $ \mathit{CS} $.};
	
		\node[right] at (0.5, -18.8){$\bullet$ \textbf{Phase 3: Local ML Model Aggregation and Global Model Update}};
		
		\node[right] at (0.8, -19.4){\small - $ \mathit{CS} $ decrypts $ \varepsilon_{2}({\Omega}_{i}(t)\vert\vert(\sigma_{i}(t),\tilde{\sigma}_{i}(t))) $ by $ {{\mathit{SK}}}_{i}^{\mathit{Enc}} $, shown as $ {\Omega}_{i}(t)\vert\vert(\sigma_{i}(t),\tilde{\sigma}_{i}(t))\leftarrow\textit{Dec}(\varepsilon_{2}({\Omega}_{i}(t)\vert\vert(\sigma_{i}(t),\tilde{\sigma}_{i}(t))), {{\mathit{SK}}}_{i}^{\mathit{Enc}}) $.};		
		
		\node[right] at (0.8, -20.0){\small - $ \mathit{CS} $ verifies the signature for every $ {\Omega}_{i}(t)\vert\vert(\sigma_{i}(t),\tilde{\sigma}_{i}(t)) $, shown as $ \{0,1\}\leftarrow\textit{Verify}({\Omega}_{i}(t)\vert\vert\left(\sigma_{i}(t),\tilde{\sigma}_{i}(t)\right),H,{\mathit{PK}}_{ i}^{\mathit{Sign}}) $.};		
			
        \node[right] at (1.1, -20.6){\small ( The clients passing the signature verification are flagged as legitimate clients, and are incorporated into a set $ \mathcal{C}(t) $. )};
        
        \node[right] at (0.8, -21.2){\small - $ \mathit{CS} $ divides legitimate client $ c_{i} $'s $ {\Omega}_{i}(t) $ back into $ \varepsilon_{1}(\vert D_{i} \vert\cdot\hat{w}_{i}(t)) $ and $ \varepsilon_{1}(\vert D_{i} \vert) $.};     
        
        \node[right] at (0.8, -21.8){\small - $ \mathit{CS} $ calculates the products of $ \varepsilon_{1}(\vert D_{i} \vert\!\cdot\!\hat{w}_{i}(t)) $s of all legitimate clients to obtain $ \prod_{c_{i}\in \mathcal{C}(t)}  \varepsilon_{1}(\vert D_{i} \vert\!\cdot\!\hat{w}_{i}(t)) $.}; 
        
        \node[right] at (0.8, -22.4){\small - $ \mathit{CS} $ calculates the products of $ \varepsilon_{1}(\vert D_{i} \vert) $s of all legitimate clients to obtain $ \prod_{c_{i}\in \mathcal{C}(t)}  \varepsilon_{1}(\vert D_{i} \vert) $.};    
        
        \node[right] at (0.8, -23.0){\small - $ \mathit{CS} $ combines the above two products as a string $ \Omega_{CS}(t)={\prod_{c_{i}\in \mathcal{C}(t)}  \varepsilon_{1}(\vert D_{i} \vert\cdot\hat{w}_{i}(t))}\vert\vert {\prod_{c_{i}\in \mathcal{C}(t)}  \varepsilon_{1}(\vert D_{i} \vert)} $.};        
        
        \node[right] at (0.8, -23.6){\small - $ \mathit{CS} $ signs on $ \Omega_{CS}(t) $, shown as $ {\Omega}_{CS}(t)\vert\vert\left(\varsigma(t),\tilde{\varsigma}(t)\right)\leftarrow\textit{Sign}({\Omega}_{CS}(t), H, {{\mathit{SK}}}_{CS}^{\mathit{Sign}}) $.};         
        
        \node[right] at (0.8, -24.2){\small - $ \mathit{CS} $ distributes $ {\Omega}_{CS}(t)\vert\vert(\varsigma(t),\tilde{\varsigma}(t)) $ to all clinets.}; 
                
        \node[right] at (0.8, -24.8){\small - $ c_{i} $ verifies the signature, shown as $ \{0,1\}\leftarrow\textit{Verify}({\Omega}_{CS}(t)\vert\vert(\varsigma(t),\tilde{\varsigma}(t)),H,{\mathit{PK}}_{CS}^{\mathit{Sign}} ) $.}; 
        
        \node[right] at (0.8, -25.4){\small - $ c_{i} $ divides $ {\Omega}_{CS}(t)\vert\vert(\varsigma(t),\tilde{\varsigma}(t)) $ back into $ \prod_{c_{i}\in \mathcal{C}(t)}  \varepsilon_{1}(\vert D_{i} \vert\!\cdot\!\hat{w}_{i}(t)) $ and $ \prod_{c_{i}\in \mathcal{C}(t)}  \varepsilon_{1}(\vert D_{i} \vert) $.};
                        	
		\node[right] at (0.8, -26.0){\small - $ c_{i} $ decrypts $ \prod_{c_{i}\in \mathcal{C}(t)}  \varepsilon_{1}(\vert D_{i} \vert\!\cdot\!\hat{w}_{i}(t)) $, shown as $ \sum_{c_{i}\in \mathcal{C}(t)}  \vert D_{i} \vert\cdot\hat{w}_{i}(t)\leftarrow\textit{Dec}(\prod_{c_{i}\in \mathcal{C}(t)}  \varepsilon_{1}(\vert D_{i} \vert\cdot\hat{w}_{i}(t)), {{\mathit{SK}}}_{i}^{\mathit{Enc}}) $.};	
		
		\node[right] at (0.8, -26.6){\small - $ c_{i} $ decrypts $ \prod_{c_{i}\in \mathcal{C}(t)}  \varepsilon_{1}(\vert D_{i} \vert) $, shown as $ \sum_{c_{i}\in \mathcal{C}(t)}  \vert D_{i} \vert\leftarrow\textit{Dec}(\prod_{c_{i}\in \mathcal{C}(t)}  \varepsilon_{1}(\vert D_{i} \vert), {{\mathit{SK}}}_{CS}^{\mathit{Enc}}) $.};	
		
		\node[right] at (0.8, -27.2){\small - $ c_{i} $ calculates the updated global model $ W(t+1) $, shown as $ W(t+1)=({\sum_{c_{i}\in \mathcal{C}(t)}  \vert D_{i} \vert\cdot\hat{w}_{i}(t)})/{\sum_{c_{i}\in \mathcal{C}(t)}  \vert D_{i} \vert} $.};		
			
		\node[right] at (0.5, -28.2){\textbf{Periodic Training:}};
			
		\node[right] at (0.8, -28.8){\small All clients and the $ \mathit{CS} $ execute \textbf{phase 1} to \textbf{phase 3} periodically until the global model converges to $ W^{*} $.};
		
		\node[right] at (0.5, -29.8){\textbf{NOTE:}};		
	
		\node[right] at (0.5, -30.4){$\bullet$ \textbf{Little Difference in the Initial Training Round}};
		
		\node[right] at (0.8, -31.0){\small - Due to the lack of prior experience, DDPG model cannot be updated and output actions.};		

		\node[right] at (0.8, -31.6){\small - Consequently, $ c_{i} $ randomly selects $ \eta_{i}(0) $ and $ \alpha_{i}(0) $ to update the local ML model.};
		
		\node[right] at (0.5, -32.2){$\bullet$ \textbf{Global Model in the $ \mathit{CS} $}};
		
		\node[right] at (0.8, -32.8){\small - $ \mathit{CS} $ would request the gobal model to any clients.};
		
		\node[right] at (0.8, -33.4){\small - $ c_{i} $ would upload the global model to $ \mathit{CS} $ after verifying the legality of the request.};		
			
	\end{tikzpicture}
	\caption {Detailed description of the proposed Dap-FL system.}
	\label{fig-4}
\end{figure*}

\section{Security Analysis}

In this section, we analyze the security properties of the proposed Dap-FL system. Particularly, following the security requirements discussed in Section II, our analysis mainly focuses on two aspects. On the one hand, the privacy of clients' individual local contributions is guaranteed under the assumption that clients and the $ \mathit{CS} $ are honest-but-curious. On the other hand, the source authentication and data integrity of local contributions and aggregation results is achieved under the assumption that an external active adversary $ \mathscr{A} $ exists.

\subsection{Privacy of local models is guaranteed}
Following the security requirements, any individual local model is considered private. Therefore, we analyze the confidentiality of $ c_{i} $'s updated local model $ \hat{w}_{i}(t) $ against the $ \mathit{CS} $ and other clients. 

$\bullet$ {\it Local model is confidential to the $ \mathit{CS} $ in the proposed Dap-FL.} When a client $ c_{i} $ uploads the updated local model to the $ \mathit{CS} $, it masks the plaintext twice based on the Paillier cryptosystem. Particularly, {\it Theorem \ref{thm-2}} proves how the first mask guarantees the semantic security of the ciphertext $ \varepsilon_{1}(\vert D_{i} \vert\!\cdot\!\hat{w}_{i}(t)) $ against chosen-plaintext attacks (CPA) \cite{Goldreich2004} from the $ \mathit{CS} $.

\begin{theorem}\label{thm-2}
	$ \varepsilon_{1}(\vert D_{i} \vert\!\cdot\!\hat{w}_{i}(t)) $ is semantic secure against CPA from the $ \mathit{CS} $ under the DCR assumption.
\end{theorem}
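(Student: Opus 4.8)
The plan is to prove semantic security by a tight reduction to the DCR assumption, invoking the standard equivalence between semantic security and indistinguishability under chosen-plaintext attacks (IND-CPA) from \cite{Goldreich2004}. First I would set up the IND-CPA game with the $\mathit{CS}$ cast as the adversary: having received the public key $\mathit{PK}^{\mathit{Enc}}=(n,g)$, the $\mathit{CS}$ outputs two equal-length plaintexts $M_0,M_1$ (candidate values of $\vert D_{i}\vert\cdot\hat{w}_{i}(t)$); a challenger samples a uniform bit $b$, returns the challenge ciphertext $\varepsilon_1(M_b)=g^{M_b}\zeta^n\bmod n^2$ for a fresh random $\zeta\in\mathbb{Z}_n^*$, and the $\mathit{CS}$ must guess $b$. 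The objective is to show that this guessing advantage is negligible.

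The central observation is that $\zeta^n\bmod n^2$ is a uniformly distributed $n$-th residue modulo $n^2$ in the sense of Definition \ref{def-1}, so every fresh ciphertext is simply $g^{M_b}$ multiplied by a random $n$-th residue. I would then build a DCR distinguisher $\mathscr{D}$ that embeds its own challenge $z\in\mathbb{Z}_{n^2}^*$ — promised to be either a random $n$-th residue or a uniformly random element — into this game: $\mathscr{D}$ forwards $(n,g)$ to the $\mathit{CS}$, receives $M_0,M_1$, samples $b$, hands back $c=g^{M_b}\cdot z\bmod n^2$, and finally outputs ``residue'' exactly when the $\mathit{CS}$'s guess $b'$ equals $b$.

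The analysis then splits on the nature of $z$. When $z$ is a random $n$-th residue, $c$ is distributed identically to a genuine Paillier encryption of $M_b$, so the $\mathit{CS}$ wins with probability $\tfrac12+\mathrm{Adv}$, where $\mathrm{Adv}$ denotes its IND-CPA advantage. When $z$ is uniform over $\mathbb{Z}_{n^2}^*$, multiplication by the fixed unit $g^{M_b}$ is a bijection of $\mathbb{Z}_{n^2}^*$, so $c$ is uniform and \emph{statistically independent} of $b$; the $\mathit{CS}$ then guesses correctly with probability exactly $\tfrac12$. Subtracting the two cases shows that $\mathscr{D}$'s DCR-distinguishing advantage equals $\mathrm{Adv}$, which Assumption \ref{asp-1} forces to be negligible, establishing the claim.

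I expect the main obstacle to be the distributional bookkeeping in the residue case rather than the structure of the reduction: confirming that $\{\zeta^n\bmod n^2:\zeta\in\mathbb{Z}_n^*\}$ coincides with the group of $n$-th residues of Definition \ref{def-1} and that a uniformly random $n$-th residue reproduces exactly the randomness distribution of a real ciphertext, so that $\mathscr{D}$ perfectly simulates the IND-CPA game. It is also worth stating explicitly that $g^{M_b}$ is a unit of $\mathbb{Z}_{n^2}^*$ (since $g$ is drawn from $\mathbb{Z}_{n^2}^*$), which is precisely what makes multiplication by $g^{M_b}$ a bijection and hence renders the uniform-$z$ challenge independent of $b$.
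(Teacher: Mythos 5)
Your proposal is correct, and it is in fact a rigorous, game-based version of an argument the paper only sketches. The paper's own proof of Theorem \ref{thm-2} writes out the ciphertext $\varepsilon_{1}(\vert D_{i}\vert\cdot\hat{w}_{i}(t))=g^{\vert D_{i}\vert\cdot\hat{w}_{i}(t)}\cdot\zeta^{n}\bmod n^{2}$ and then asserts directly that recovering the plaintext would require the $\mathit{CS}$ to distinguish $n$-th residues modulo $n^{2}$, which the DCR assumption forbids; no explicit adversary, game, or reduction is constructed, and semantic security is concluded in one step. You instead set up the IND-CPA experiment, build an explicit DCR distinguisher $\mathscr{D}$ that embeds the challenge element $z$ as the randomizer of the challenge ciphertext, and carry out the two-case hybrid analysis (real $n$-th residue versus uniform element of $\mathbb{Z}_{n^{2}}^{*}$), obtaining that the $\mathit{CS}$'s IND-CPA advantage equals $\mathscr{D}$'s DCR-distinguishing advantage. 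This is the standard proof of Paillier's semantic security and is strictly stronger than what the paper provides: it establishes indistinguishability of ciphertexts rather than merely the infeasibility of full plaintext recovery, which is what semantic security actually requires. The distributional points you flag as remaining work --- that $\zeta\mapsto\zeta^{n}\bmod n^{2}$ maps the uniform distribution on $\mathbb{Z}_{n}^{*}$ onto the uniform distribution on the $n$-th residues, and that multiplication by the unit $g^{M_{b}}$ is a bijection of $\mathbb{Z}_{n^{2}}^{*}$ --- are exactly the right lemmas to discharge, and both are standard. The only cosmetic caveat is to define the advantage as an absolute difference (or argue both signs) so the final equality with $\mathscr{D}$'s advantage is well posed.
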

\begin{proof}
	For the sake of discussion, we regard $ \vert D_{i} \vert\!\cdot\!\hat{w}_{i}(t) $ as a constant. According to the Paillier Encryption scheme, the public key $ {\mathit{PK}}_{CS}^{\mathit{Enc}} $ is composed of $ n $ and $ g $. Thus, the ciphertext $ \varepsilon_{1}(\vert D_{i} \vert\!\cdot\!\hat{w}_{i}(t)) $ can be calculated by:
	\begin{equation}\label{eq-45}
		\varepsilon_{1}(\vert D_{i} \vert\cdot\hat{w}_{i}(t))=g^{\vert D_{i} \vert\cdot\hat{w}_{i}(t)}\cdot \zeta^{n} \bmod{{n}^{2}},
	\end{equation}
	where $ \zeta\in \mathbb{Z}_{{n}}^{*} $ is a random selected element.
	
	If the $ \mathit{CS} $ aims to recover the plaintext from $ \varepsilon_{1}(\vert D_{i} \vert\!\cdot\!\hat{w}_{i}(t)) $, it has to calculate the result of $ n $-the residues modulo $ n^{2} $. However, the $ \mathit{CS} $ cannot find a polynomial time distinguisher for $ n $-th residues modulo $ n^{2} $, according to the DCR assumption, i,e., {\it Assumption \ref{asp-1}}. As a result, $ \varepsilon_{1}(\vert D_{i} \vert\!\cdot\!\hat{w}_{i}(t)) $ is semantic secure against CPA from the $ \mathit{CS} $ under the DCR assumption.	
\end{proof}

$\bullet$ {\it Local model is confidential to other clients in the proposed Dap-FL.} Although the confidentiality against the $ \mathit{CS} $ is achieved, $  \varepsilon_{1}(\vert D_{i} \vert\!\cdot\!\hat{w}_{i}(t)) $ could be decrypted by other clients. Thus, $ c_{i} $ further encrypts $  {\Omega}_{i}(t)\vert\vert\left(\sigma_{i}(t),\tilde{\sigma}_{i}(t)\right) $ to $ \varepsilon_{2}({\Omega}_{i}(t)\vert\vert\left(\sigma_{i}(t),\tilde{\sigma}_{i}(t)\right)) $ using $ {{\mathit{PK}}}_{i}^{\mathit{Enc}} $ based on the Paillier Encryption scheme, and the ciphertext can be only decrypted by the $ \mathit{CS} $. Similarly, we give the {\it Theorem \ref{thm-3}} to demonstrate the semantic security of $ \varepsilon_{2}({\Omega}_{i}(t)\vert\vert\left(\sigma_{i}(t),\tilde{\sigma}_{i}(t)\right)) $ against CPA from other clients. Note that we omit the proof of {\it Theorem \ref{thm-3}}, as it is similar to {\it Theorem \ref{thm-2}}.
\begin{theorem}\label{thm-3}
	$ \varepsilon_{2}({\Omega}_{i}(t)\vert\vert\left(\sigma_{i}(t),\tilde{\sigma}_{i}(t)\right)) $ is semantic secure against CPA from other clients under the DCR assumption.
\end{theorem}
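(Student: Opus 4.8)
The plan is to mirror the template of the proof of \emph{Theorem \ref{thm-2}} almost verbatim, since the only structural change is the identity of the encrypting key and of the adversary. First I would fix the signed string $ {\Omega}_{i}(t)\vert\vert\left(\sigma_{i}(t),\tilde{\sigma}_{i}(t)\right) $ and regard it as a single plaintext message $ M\in\mathbb{Z}_{n} $ by encoding the concatenation as one integer; this encoding step is the first thing to pin down carefully, because the Paillier encryption algorithm of Section III only accepts messages in $ \mathbb{Z}_{n} $, whereas $ {\Omega}_{i}(t) $ is itself a concatenation of two ciphertexts in $ \mathbb{Z}_{n^{2}}^{*} $ together with a signature pair.

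Once $ M $ is fixed, I would write out the ciphertext explicitly under the client's own public key $ {\mathit{PK}}_{i}^{\mathit{Enc}}=(n,g) $ as
\begin{equation*}
	\varepsilon_{2}\!\left({\Omega}_{i}(t)\vert\vert\left(\sigma_{i}(t),\tilde{\sigma}_{i}(t)\right)\right)=g^{M}\cdot\zeta^{n}\bmod{n^{2}},
\end{equation*}
with $ \zeta\in\mathbb{Z}_{n}^{*} $ chosen uniformly at random. The essential observation is the key-distribution asymmetry established in the system initialization: every client holds $ {\mathit{PK}}_{i}^{\mathit{Enc}} $ but only the $ \mathit{CS} $ is given $ {\mathit{SK}}_{i}^{\mathit{Enc}}=(\varrho,\delta) $. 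Hence an arbitrary client $ c_{j} $ ($ j\neq i $) attempting to recover $ M $ is in exactly the position occupied by the $ \mathit{CS} $ in \emph{Theorem \ref{thm-2}}: lacking the private key, it must decide $ n $-th residuosity modulo $ n^{2} $, which is infeasible by the DCR assumption (\emph{Assumption \ref{asp-1}}). I would then invoke the same reduction to semantic security against CPA to close the argument.

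The one point that deserves more care than in \emph{Theorem \ref{thm-2}}, and which I expect to be the main obstacle, is justifying that the internal structure of the plaintext is irrelevant. Because $ {\Omega}_{i}(t) $ embeds the inner ciphertexts $ \varepsilon_{1}(\cdot) $ --- which any client could in principle strip using the shared $ {\mathit{SK}}_{CS}^{\mathit{Enc}} $ --- one might worry that partial information leaks. The plan is to argue that this inner decryptability is moot: a client can only reach the $ \varepsilon_{1} $ layer after removing the outer $ \varepsilon_{2} $ mask, and the semantic security of the outer layer (treating $ M $ as an opaque constant, exactly as in \emph{Theorem \ref{thm-2}}) prevents any such removal. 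Thus the CPA-indistinguishability of $ \varepsilon_{2} $ alone suffices, and no separate analysis of the embedded signature or ciphertexts is needed.
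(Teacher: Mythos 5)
Your proposal is correct and takes essentially the same route the paper intends: the paper omits the proof of \emph{Theorem \ref{thm-3}} entirely, stating only that it is ``similar to \emph{Theorem \ref{thm-2}}'', and your argument is exactly that adaptation --- treat the signed string as an opaque plaintext, write out the Paillier ciphertext under $ {\mathit{PK}}_{i}^{\mathit{Enc}} $, and reduce an attacking client's task to deciding $ n $-th residuosity modulo $ n^{2} $, which the DCR assumption forbids. Your added care about encoding the concatenated string into $ \mathbb{Z}_{n} $ and about why the inner $ \varepsilon_{1} $ layer (decryptable by any client holding $ {\mathit{SK}}_{CS}^{\mathit{Enc}} $) does not leak through the outer mask goes slightly beyond what the paper records, but it does not change the approach.
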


\subsection{Source authentication and data integrity is achieved}

As mentioned in the security requirements, the active adversary $\mathscr{A}$ may threaten the source authentication and data integrity by forging local contributions or aggregated results, thereby breaking down the whole FL system. Therefore, we analyze the security of the proposed Dap-FL against $\mathscr{A}$. 

$\bullet$ {\it The source authentication and data integrity of the individual client’s local contributions and the aggregated results are guaranteed in the proposed Dap-FL.} In Dap-FL, each client’s local contributions and the aggregated results are signed by Paillier signature. Since the Paillier signature is provably secure against chosen-message attacks (CMA) \cite{Goldreich2004} under the DCR assumption in the random oracle model \cite{Bellare1993}, the source authentication and data integrity can be guaranteed, which is shown in {\it Theorem \ref{thm-4}}.
\begin{theorem}\label{thm-4}
	Local contributions and aggregated results are secure against CMA from the active adversary $ \mathscr{A} $ under the DCR assumption.
\end{theorem}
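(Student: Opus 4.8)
The plan is to prove existential unforgeability under chosen-message attacks (EUF-CMA) by a reduction in the random oracle model: assuming a forger $\mathscr{A}$ that outputs a valid signature $(\sigma,\tilde{\sigma})$ on a previously unsigned message with non-negligible probability, I would build an algorithm $\mathscr{B}$ that refutes the DCR assumption (\emph{Assumption \ref{asp-1}}). The guiding observation is that the signing operation of Equation (\ref{eq-12}) is exactly Paillier trapdoor inversion applied to the hashed message: by \emph{Definition \ref{def-1}}, every element of $\mathbb{Z}_{n^2}^{*}$ has a unique decomposition $g^{\sigma}\tilde{\sigma}^{n}\bmod n^{2}$, the pair $(\sigma,\tilde{\sigma})$ is precisely what the private key recovers, and verification (Equation (\ref{eq-13})) only re-checks this decomposition. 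Hence forging on a fresh hash $h$ is equivalent to inverting the Paillier one-way permutation on $h$ without the trapdoor.

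First I would simulate the signing oracle without the private key $ {\mathit{SK}}_{i}^{\mathit{Sign}} $ by programming the random oracle. On each fresh hash query for a message $M$, $\mathscr{B}$ samples $\sigma,\tilde{\sigma}$ uniformly, sets $H(M)=g^{\sigma}\tilde{\sigma}^{n}\bmod n^{2}$, and records the pair; any subsequent signing query on $M$ is answered by returning the stored $(\sigma,\tilde{\sigma})$, which passes verification by construction. Since $(\sigma,\tilde{\sigma})\mapsto g^{\sigma}\tilde{\sigma}^{n}$ is a bijection onto $\mathbb{Z}_{n^2}^{*}$, the programmed hash outputs are uniform and the simulated view is identical to a genuine attack, so $\mathscr{A}$'s forging probability is preserved.

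Next I would embed the DCR instance and read the answer off the forgery. Given a challenge $w\in\mathbb{Z}_{n^2}^{*}$ to classify as an $n$-th residue or not, $\mathscr{B}$ guesses which of the $q_{H}$ hash queries $\mathscr{A}$ will target (a $1/q_{H}$ loss) and answers that query with $w$ itself rather than a programmed value. When $\mathscr{A}$ returns a valid forgery $(\sigma^{*},\tilde{\sigma}^{*})$ on that message, validity forces $w=g^{\sigma^{*}}(\tilde{\sigma}^{*})^{n}\bmod n^{2}$, so the recovered component $\sigma^{*}$ is the ``plaintext'' of $w$; by \emph{Definition \ref{def-1}}, $w$ is an $n$-th residue precisely when $\sigma^{*}=0$. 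A non-negligible forging advantage therefore yields a non-negligible advantage in deciding $n$-th residuosity, contradicting \emph{Assumption \ref{asp-1}}. The identical argument covers the aggregated result signed by the $\mathit{CS}$ in Equation (\ref{eq-40}), which uses the same algorithm under $ {\mathit{SK}}_{CS}^{\mathit{Sign}} $.

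The hard part will be the embedding-and-extraction step, not the oracle programming. I must argue that injecting the raw challenge $w$ at the guessed index is indistinguishable to $\mathscr{A}$ from a genuinely programmed hash reply, so that its forging probability is (approximately) preserved in the branch where $w$ is uniform, while in the residue branch the extracted $\sigma^{*}=0$ still surfaces whenever a forgery is produced; reconciling these two branches into a single non-negligible distinguishing gap, together with accounting for the $1/q_{H}$ guess and the bookkeeping that prevents a trivial ``forgery'' on an already-signed message, is the technical crux. Assembling these bounds establishes CMA security under DCR in the random oracle model and completes the proof of \emph{Theorem \ref{thm-4}}.
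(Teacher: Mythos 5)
Your proposal is correct in substance but takes a genuinely different route from the paper. The paper's proof of \emph{Theorem \ref{thm-4}} contains no reduction at all: it restates the signing equations, appeals directly to the known provable security of the Paillier signature under DCR in the random oracle model, and simply writes down the CMA success-probability bound as the conclusion (indeed with a garbled quantifier --- ``there exists no negligible function $\textit{negl}$ such that $\Pr[\cdot]\leq \textit{negl}(\iota)$'' is literally the negation of the intended claim). You instead construct the full-domain-hash reduction explicitly: programming $H(M)=g^{\sigma}\tilde{\sigma}^{n}\bmod n^{2}$ to simulate signatures without the trapdoor, embedding the DCR challenge $w$ at one guessed hash index, and reading the residuosity class off the forged $\sigma^{*}$. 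This is the argument the paper is implicitly relying on, so your version is strictly more informative. The ``crux'' you flag does resolve cleanly: in the non-residue branch the injected $w$ is uniform on $\mathbb{Z}_{n^{2}}^{*}$, so the simulated view is exactly that of the real attack and the forger succeeds at the guessed index with probability at least $\epsilon/q_{H}$, yielding $\sigma^{*}\neq 0$ except with probability $1/n$; in the residue branch you need no lower bound on the forging probability at all, because any valid forgery at the embedded index forces $\sigma^{*}=0$ and therefore only increases the probability of answering ``residue''. Outputting ``residue'' iff the extracted $\sigma^{*}$ equals $0$ (and a fair coin when no forgery appears at the guessed index) gives distinguishing advantage at least $\epsilon/(2q_{H})-1/n$, which is non-negligible. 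The one bookkeeping point worth making explicit is that the reduction aborts if the signing oracle is queried on the embedded message, which is harmless because a CMA forgery is by definition on a message never submitted for signing.
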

\begin{proof}
	Without loss of generality, we express the local contribution or the aggregated result as a message $ M $, and the Paillier signature scheme is expressed as $ \Xi=\left(\textit{KeyGen}, \textit{Sign}, \textit{Verify} \right) $.
	
	Given a message $ M $, a legitimate signature $ \left(\sigma, \tilde{\sigma} \right) $ is computed as ${\sigma}=\frac{L\left( {{h}^{\varrho }}\bmod {{n}^{2}} \right)}{L\left( {{g}^{\varrho }}\bmod {{n}^{2}} \right)}\bmod n $ and $ \tilde{\sigma}={{\left(h\cdot {{g}^{-{\sigma}}} \right)}^{\frac{1}{n}\bmod \varrho }}\bmod n $. According to the DCR assumption, the adversary $ \mathscr{A} $ cannot distinguish for $ n $-th residues modulo $ n^{2} $ in polynomial time. In other words, in the random oracle model, there exists no negligible function $ negl $ such that:
	\begin{equation}
		\rm{Pr} \left[{\textit{Sig-forge}}_{\mathscr{A}, \Xi}^{CMA} \left(\iota \right)=1 \right] \leq \textit{negl}\left(\iota \right),
	\end{equation}
	where $ \iota $ is the length of the public key, and $ {\textit{Sig-forge}}_{\mathscr{A}, \Xi}^{\mathit{CMA}} \left(\iota \right) $ represents the CMA executed by the polynomial-time adversary $\mathscr{A}$. Therefore, the source authentication and data integrity of local contributions and aggregated results are guaranteed.
\end{proof}

\begin{table*}[htbp] %开始一个表格environment，表格的位置是h,here。
	\centering  
	\caption{Local training hyper-parameters for different ML tasks and experimental settings}
	\label{tab-1} %显示表格的标题
	\begin{threeparttable}  
		\begin{tabular}{m{1.7cm}<{\centering}|m{1.84cm}<{\centering}|m{1.85cm}<{\centering}|m{1.84cm}<{\centering}|m{1.84cm}<{\centering}|m{1.84cm}<{\centering}|m{1.84cm}<{\centering}|m{1.84cm}<{\centering}} %设置了每一列的宽度，强制转换。  
			
			\hline  
			\hline  
			\makecell[c]{Task} & 
			\makecell[c]{{\it Dap-FL}} & 
			\makecell[c]{{\it Large}} & 
			\makecell[c]{{\it Small}} & 
			\makecell[c]{{\it DDPG-$ \eta $}} & 
			\makecell[c]{{\it DDPG-$ \alpha $}} & 
			\makecell[c]{{\it DDPG-client}\cite{Zhang2021}} & 
			\makecell[c]{{\it DQN}\cite{Sun2021}} \\
			\hline
			\makecell[c]{Logistic \\ MNIST} & 
			\makecell[c]{DDPG / DDPG} & 
			\makecell[c]{$ 10^{-2} $ / $ 20 $} & 
			\makecell[c]{$ 10^{-4} $ / $ 1 $} & 
			\makecell[c]{DDPG / $ 16 $} & 
			\makecell[c]{$ 10^{-3} $ / DDPG} & 
			\makecell[c]{$\backslash$} & 
			\makecell[c]{$\backslash$} \\
			\hline
			\makecell[c]{CNN \\ MNIST} & 
			\makecell[c]{DDPG / DDPG} & 
			\makecell[c]{$ 10^{-2} $ / $ 30 $ } & 
			\makecell[c]{$ 10^{-4} $ / $ 1 $} & 
			\makecell[c]{DDPG / $ 15 $} & 
			\makecell[c]{$ 10^{-3} $ / DDPG} & 
			\makecell[c]{$\backslash$} & 
			\makecell[c]{$\backslash$} \\
			\hline
			\makecell[c]{ResNet-18 \\ MNIST} & 
			\makecell[c]{DDPG / DDPG} & 
			\makecell[c]{$ 10^{-2} $ / $ 20 $} & 
			\makecell[c]{$ 10^{-4} $ / $ 1 $} & 
			\makecell[c]{DDPG / $ 11 $} & 
			\makecell[c]{$ 10^{-3} $ / DDPG} & 
			\makecell[c]{$\backslash$} & 
			\makecell[c]{$\backslash$} \\
			\hline
			\makecell[c]{CNN \\ \!\!Fashion-MNIST\!\!} & 
			\makecell[c]{DDPG / DDPG} & 
			\makecell[c]{\!\!$ 0.5\times10^{-3} $ / $ 25 $\!\!} & 
			\makecell[c]{$ 10^{-4} $ / $ 1 $} & 
			\makecell[c]{DDPG / $ 18 $} & 
			\makecell[c]{$ 10^{-3} $ / DDPG} & 
			\makecell[c]{$ 10^{-3} $ / $ 18 $} & 
			\makecell[c]{{\tiny$ 10\!^{\!-4} $or$ 10\!^{-\!3} $}/DQN} \\			 
			\hline  
			\hline 
			
		\end{tabular}
		
		\begin{tablenotes}
			\footnotesize
			\item[1] Each cell shows the local training hyper-parameters. The left of the Forward Slash is the local learning rate, and the right is the local training epoch. The Back Slash means we omit the experiments.   %此处加入注释*信息
		\end{tablenotes}
	\end{threeparttable}  
\end{table*}

\section{Experiments and Evaluation}

In this section, we conduct simulation experiments to validate that the proposed Dap-FL outperforms the basic FL and the state-of-the-art RL-based adaptive FL schemes in terms of global model prediction performance and communication efficiency. All simulations are implemented on the same computing environment (Windows 10, Intel (R) Core (TM) i5-8400 CPU @ 2.80GHz, NVIDIA GeForce RTX 3070, 16GB of RAM and 4T of memory) with Tensorflow, Keras, and PyCryptodome.

\subsection{Experimental Settings}

\noindent{\bfseries Dataset } 

The datasets used in our experiments include two widely used datasets, i.e., MNIST and Fashion-MNIST. Both of them have a training set of 60000 samples and a testing set of 10000 samples. Every sample has been size-normalized and centered in a fixed-size ($28\times28$) gray-level image depicting an item from one of ten different labels.

\noindent{\bfseries Machine Learning Tasks } 

The ML models used in our experiments include three types, i.e., multi-class logistic regression model, Convolutional Neural Network (CNN), and ResNet-18. In detail, the logistic regression model has $ 784 $ input nodes and outputs a $ 1 \times 10 $ vector indicating the prediction label of a given sample. The leveraged CNN contains 2 convolution layers, 2 pooling layers, and 2 fully connected layers. The ResNet-18 has 17 convolution layers, 1 fully connected layer, and corresponding pooling and batch normalization layers.

To explore the generality of the proposed Dap-FL system, we conduct simulation experiments on 4 different ML tasks: (1) Logistic regression on MNIST, (2) CNN on MNIST, (3) ResNet-18 on MNIST, and (4) CNN on Fashion-MNIST.

\noindent{\bfseries Adaptive FL Settings } 

With regard to the effectiveness of the proposed Dap-FL system, we conduct comparison experiments which are implemented as 7 different FL settings. All the considered FL settings share the same basic setting, where 20 clients with local training data and heterogeneous local resources aim to finish the corresponding ML task. Specifically, following a normal distribution with a mean of 600 and a standard deviation of 200, we divided the two leveraged datasets into 20 parts respectively as clients' local data. Besides, every 4 clients are assigned to one type of computational capability of 100\% GPU, 80\% GPU, 60\% GPU, 40\% GPU, and 20\% GPU to simulate the heterogeneous computational resources. The specific local training hyper-parameters combining corresponding ML tasks and the considered FL settings are illustrated in TABLE \ref{tab-1}, and the detailed description of each setting is as follows:  

\noindent$\bullet$ {\it Dap-FL.} Clients collaboratively finish ML tasks through the proposed Dap-FL system, which means each client's local learning rates and local training epochs are adjusted adaptively by a locally maintained DDPG model.

\noindent$\bullet$ {\it Large.} Clients collaboratively finish ML tasks with fixed and large local learning rates and local training epochs.

\noindent$\bullet$ {\it Small.} Clients collaboratively finish ML tasks with fixed and small local learning rates and local training epochs.

\noindent$\bullet$ {\it DDPG-$ \eta $.} Clients collaboratively finish ML tasks with fixed local training epochs and adaptive local learning rates adjusted by locally maintained DDPG models.

\noindent$\bullet$ {\it DDPG-$ \alpha $.} Clients collaboratively finish ML tasks with fixed local learning rates and adaptive local training epochs adjusted by locally maintained DDPG models.

\noindent$\bullet$ {\it DDPG-client.} The aggregator selects clients' local contributions according to Zhang {\em et al.}'s DDPG-based scheme \cite{Zhang2021}, and the clients' local training epochs and local learning rates are fixed to appropriate values.

\noindent$\bullet$ {\it DQN.} Clients' local training strategies follow the state-of-the-art DQN-based adaptive FL system proposed by Sun {\em et al.} \cite{Sun2021}. That is, clients' local learning rates are fixed to appropriate values, and the local training epochs are adjusted by a DQN model.
	
\subsection{Experimental results and evaluation}

In the {\em Dap-FL} setting, all client adaptively adjust their local training hyper-parameters by training their DDPG models. Fig. \ref{fig-5}. illustrates the loss value curves of the Actor and Critic of the DDPG model deployed on a client with 100\% GPU, from which we can see that the client's loss values of the Actor and Critic converge. It means that DDPG models could output the best policies to guide the local training of the client. Moreover, we illustrate the actions of the client given by the best policy in each training round in Fig. \ref{fig-6}. Taking the task CNN on MNIST in Fig. \ref{fig-6}.(b) as an example, the local learning rate and local training epoch decline slightly and flatten off ultimately along with the training round, which means the client can adopt larger local training hyper-parameters to find an approximately optimal global model rapidly and fine-tune the global model to a more precise optimal solution by adaptively adjusting local training hyper-parameters. Such a downswing is a little bit similar to an exponential decay of learning rate in the ML community but obviously better, since the local learning rate and local training epoch converge to appropriate values, i.e., $ 3.9\times10^{-4} $ and $ 2 $, rather than decline without limits, thereby preventing the global model from convergence.

\begin{figure}[htbp]
 	\centering
 	\subfigure[Logistic regression on MNIST.]{
 		\begin{minipage}[t]{0.5\linewidth}
 			\centering
 			\vspace{-0.5mm}
 			\includegraphics[scale=0.19,trim= 15 0 8 0,clip]{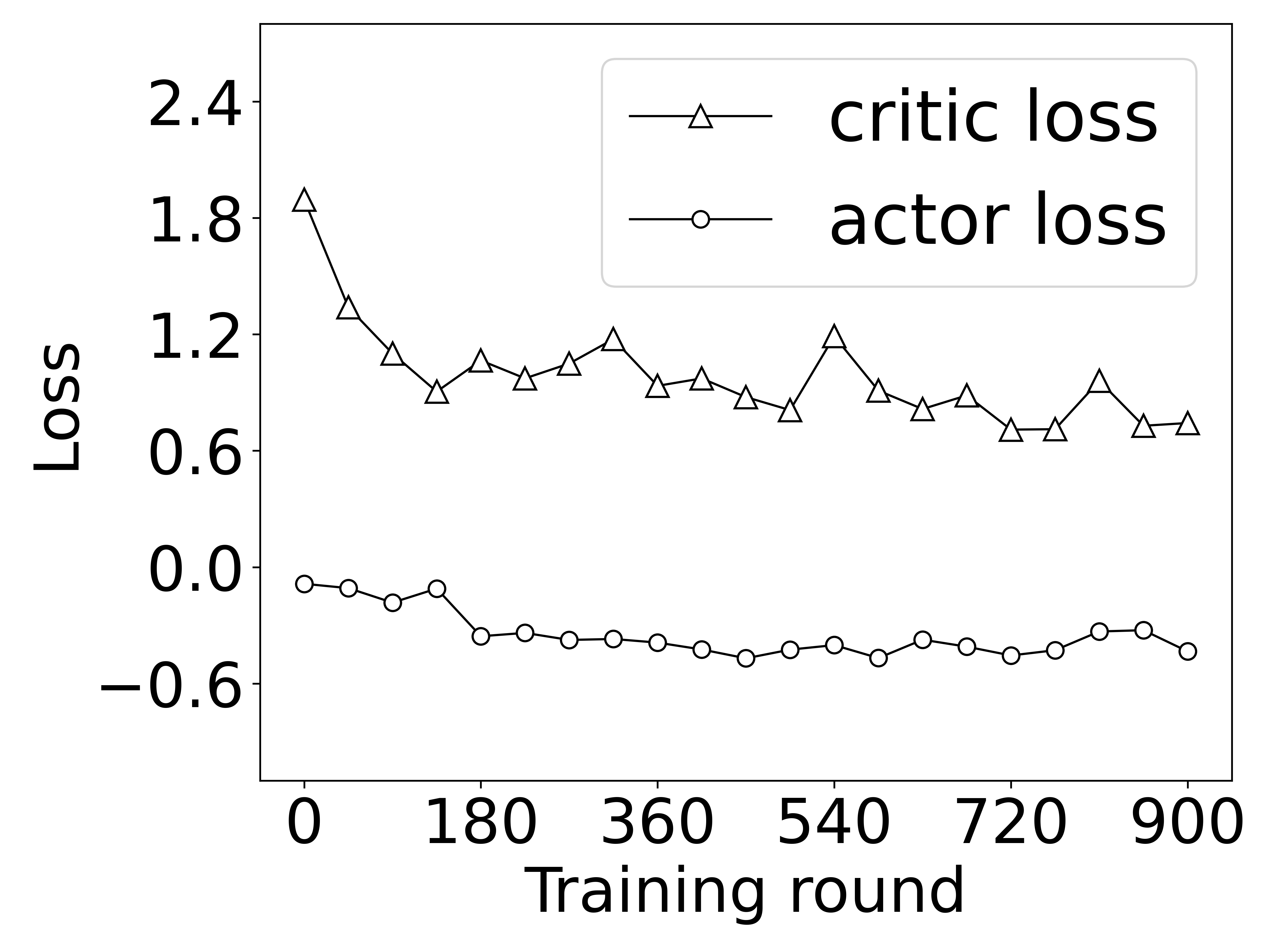}
 			%\caption{fig1}
 		\end{minipage}%
 	}%
    \subfigure[CNN on MNIST.]{
 	   \begin{minipage}[t]{0.5\linewidth}
 		   \centering
 	       \vspace{-0.5mm}
 	   	   \includegraphics[scale=0.19,trim= 15 0 8 0,clip]{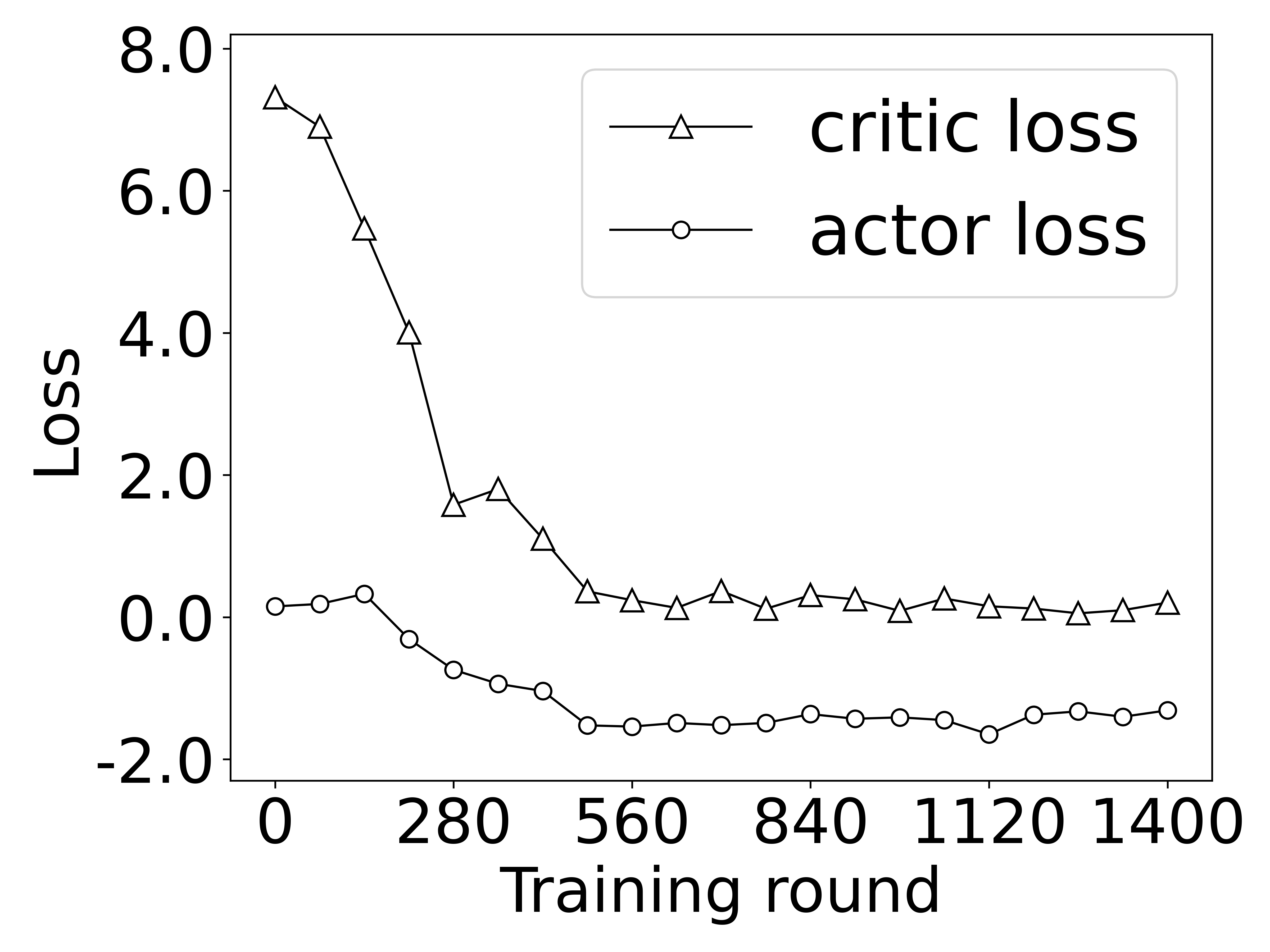}
 		   %\caption{fig2}
 	   \end{minipage}%
    }%
    \vfill
    \subfigure[ResNet-18 on MNIST.]{
    	\begin{minipage}[t]{0.5\linewidth}
    		\centering
    		\vspace{-0.5mm}
    		\includegraphics[scale=0.19,trim= 15 0 8 0,clip]{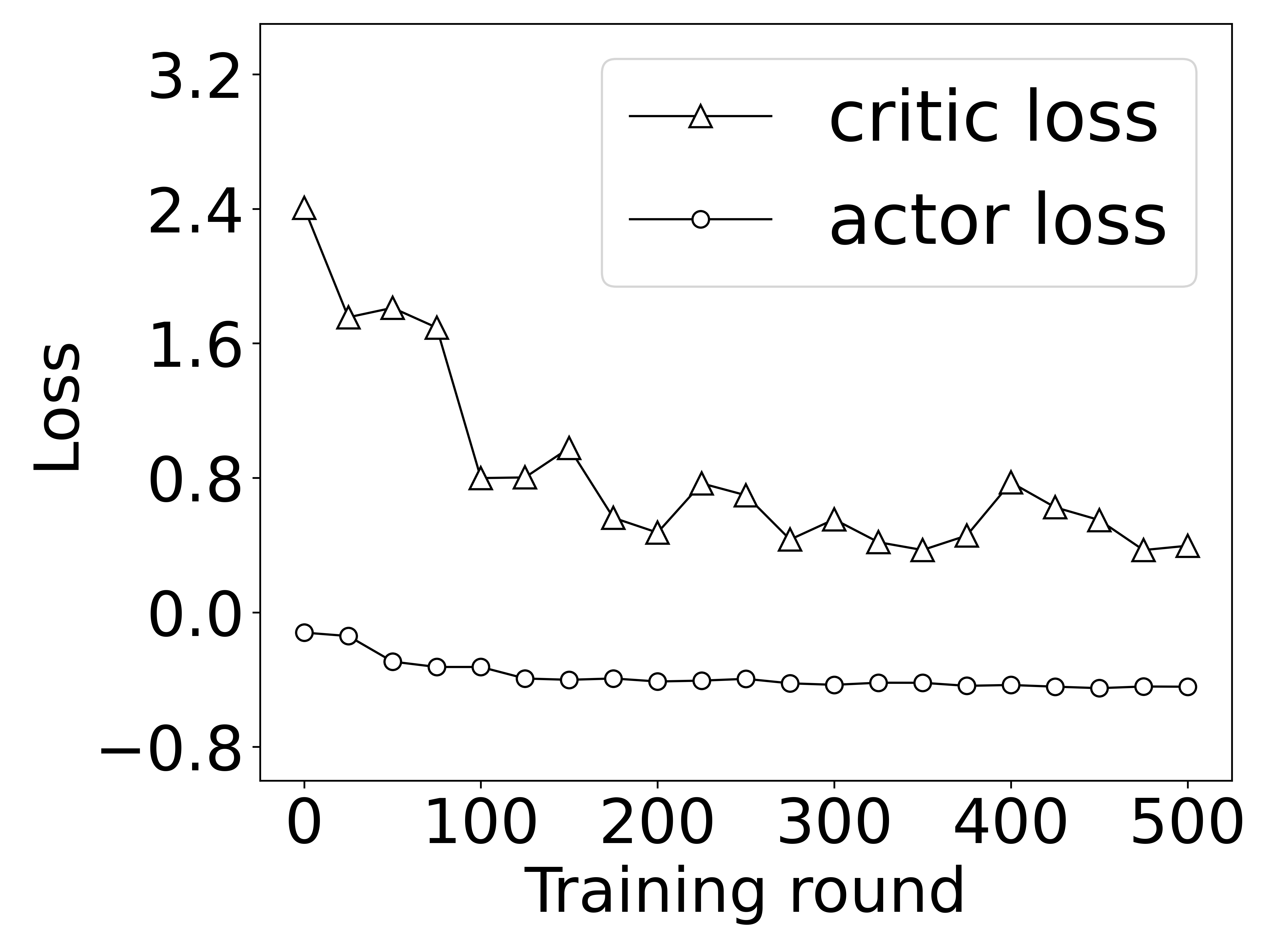}
    		%\caption{fig1}
    	\end{minipage}%
    }%
 	\subfigure[CNN on Fashion-MNIST.]{
	    \begin{minipage}[t]{0.5\linewidth}
		    \centering
		    \vspace{-0.5mm}
		    \includegraphics[scale=0.19,trim= 15 0 8 0,clip]{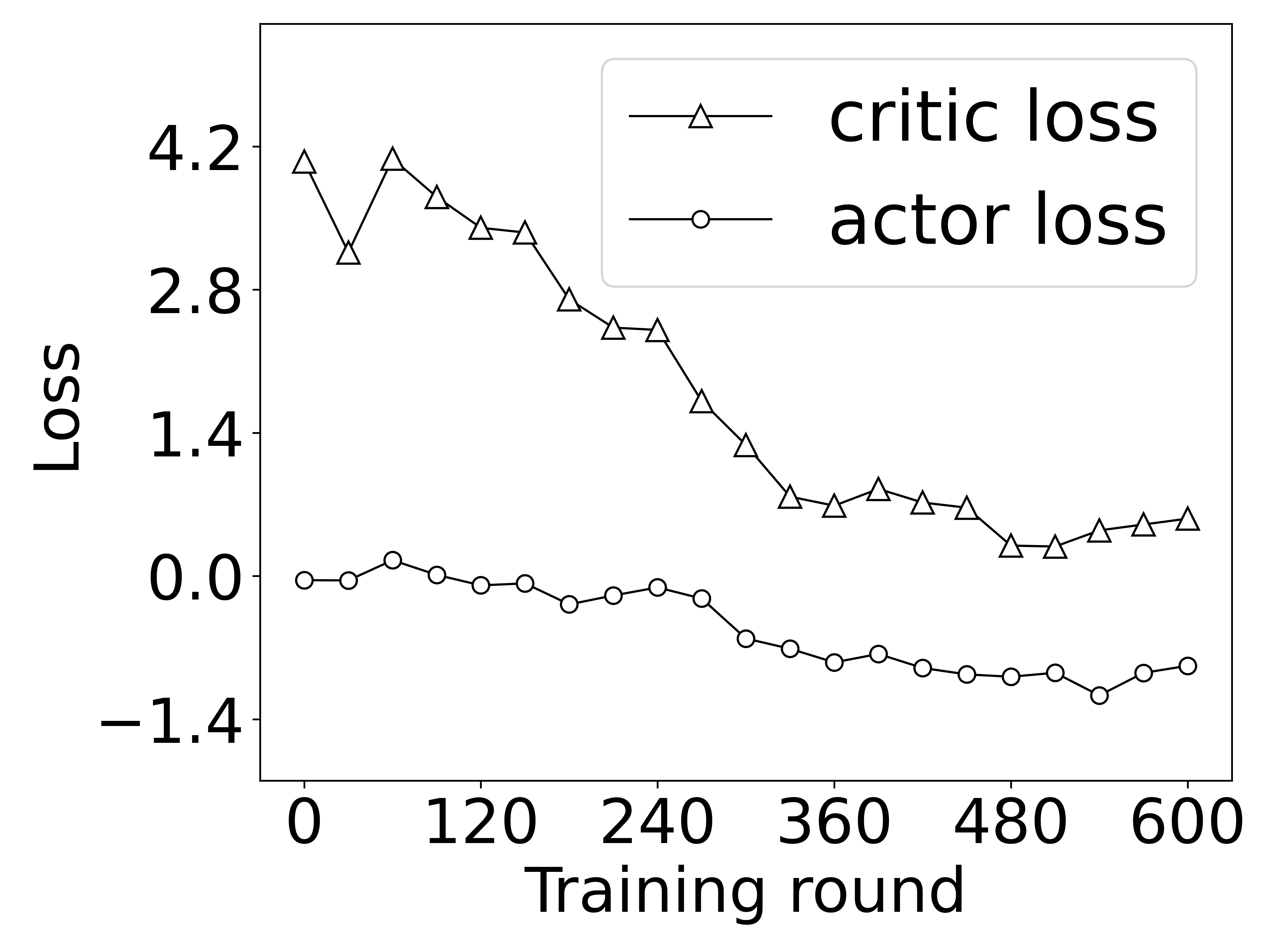}
	    	%\caption{fig1}
    	\end{minipage}%
    }%
 	\centering
 	%\vspace{-1.5mm}
 	\caption{Loss value of the DDPG model maintained by a 100\% GPU client.} 
 	\label{fig-5}
 	
\end{figure}

\begin{figure}[htbp]
	\centering
	\subfigure[Logistic regression on MNIST.]{
	\begin{minipage}[t]{0.5\linewidth}
		\centering
		\vspace{-0.5mm}
		\includegraphics[scale=0.2,trim= 10 0 10 0,clip]{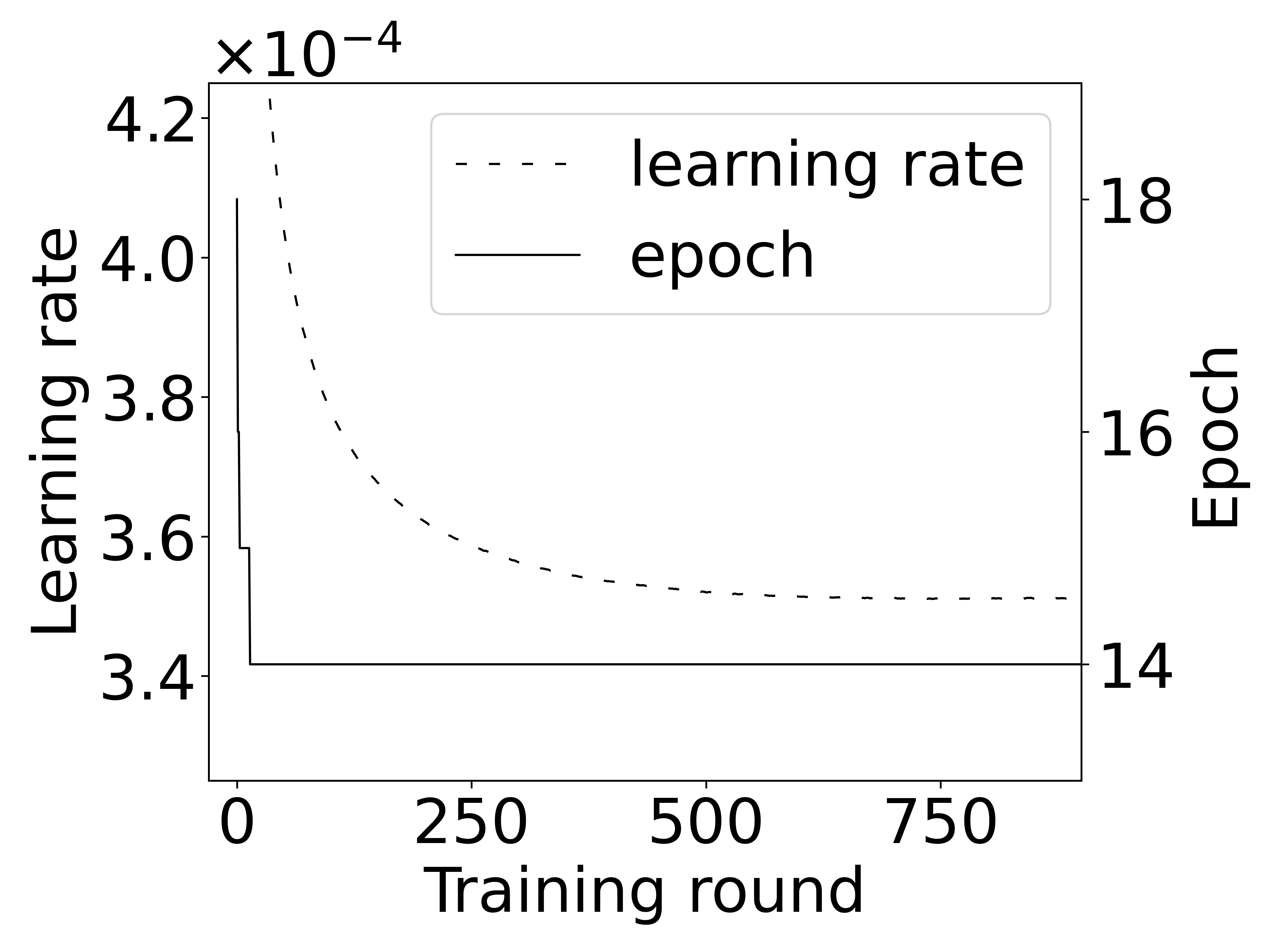}
		%\caption{fig1}
	\end{minipage}%
    }%
	\subfigure[CNN on MNIST.]{
		\begin{minipage}[t]{0.5\linewidth}
			\centering
			\vspace{-0.5mm}
			\includegraphics[scale=0.2,trim= 10 0 10 0,clip]{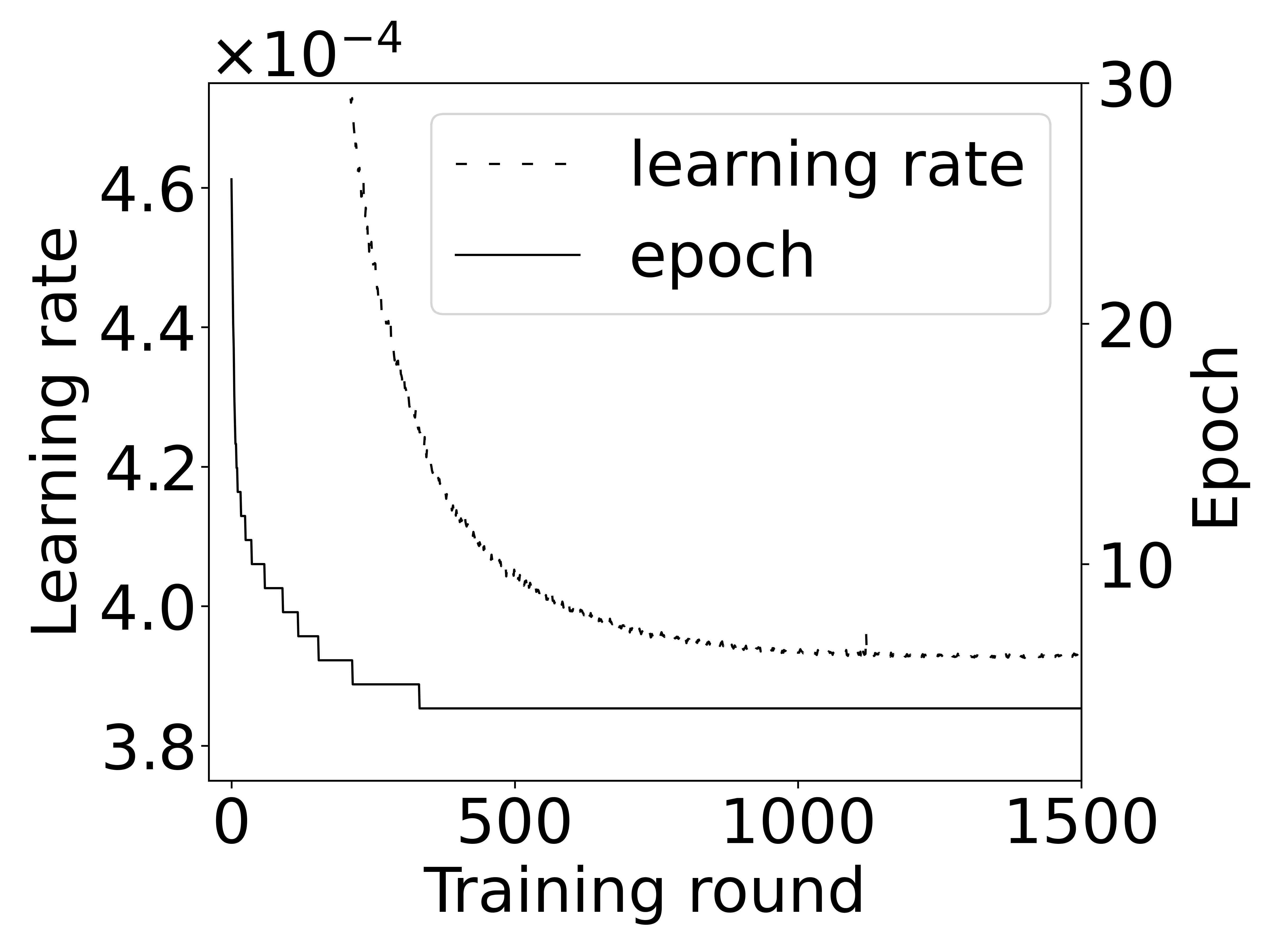}
			%\caption{fig2}
		\end{minipage}%
	}%
	\vfill
	\subfigure[ResNet-18 on MNIST.]{
		\begin{minipage}[t]{0.5\linewidth}
			\centering
			\vspace{-0.5mm}
			\includegraphics[scale=0.2,trim= 10 0 10 0,clip]{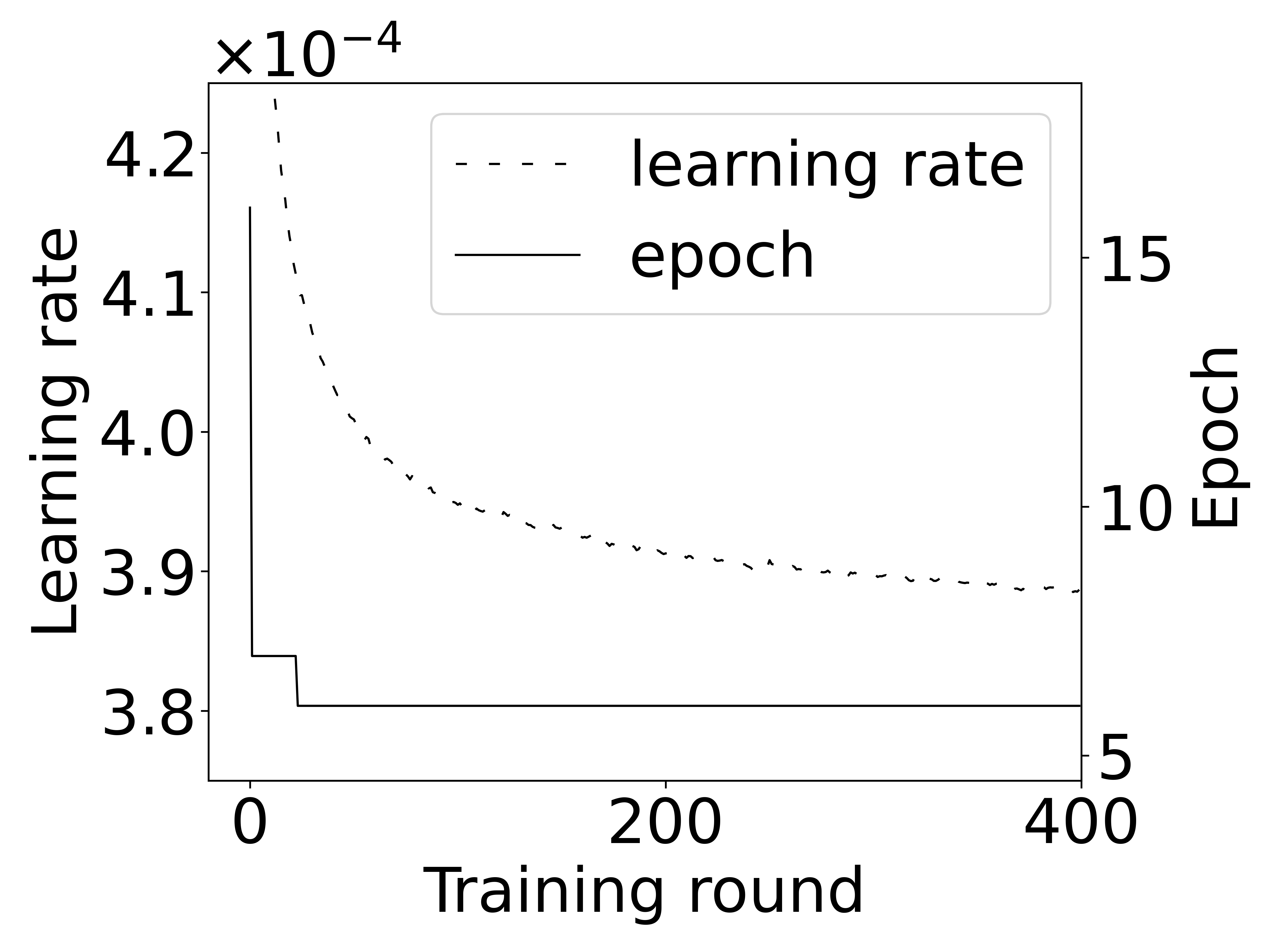}
			%\caption{fig1}
		\end{minipage}%
	}%
	\subfigure[CNN on Fashion-MNIST.]{
	\begin{minipage}[t]{0.5\linewidth}
		\centering
		\vspace{-0.5mm}
		\includegraphics[scale=0.2,trim= 10 0 10 0,clip]{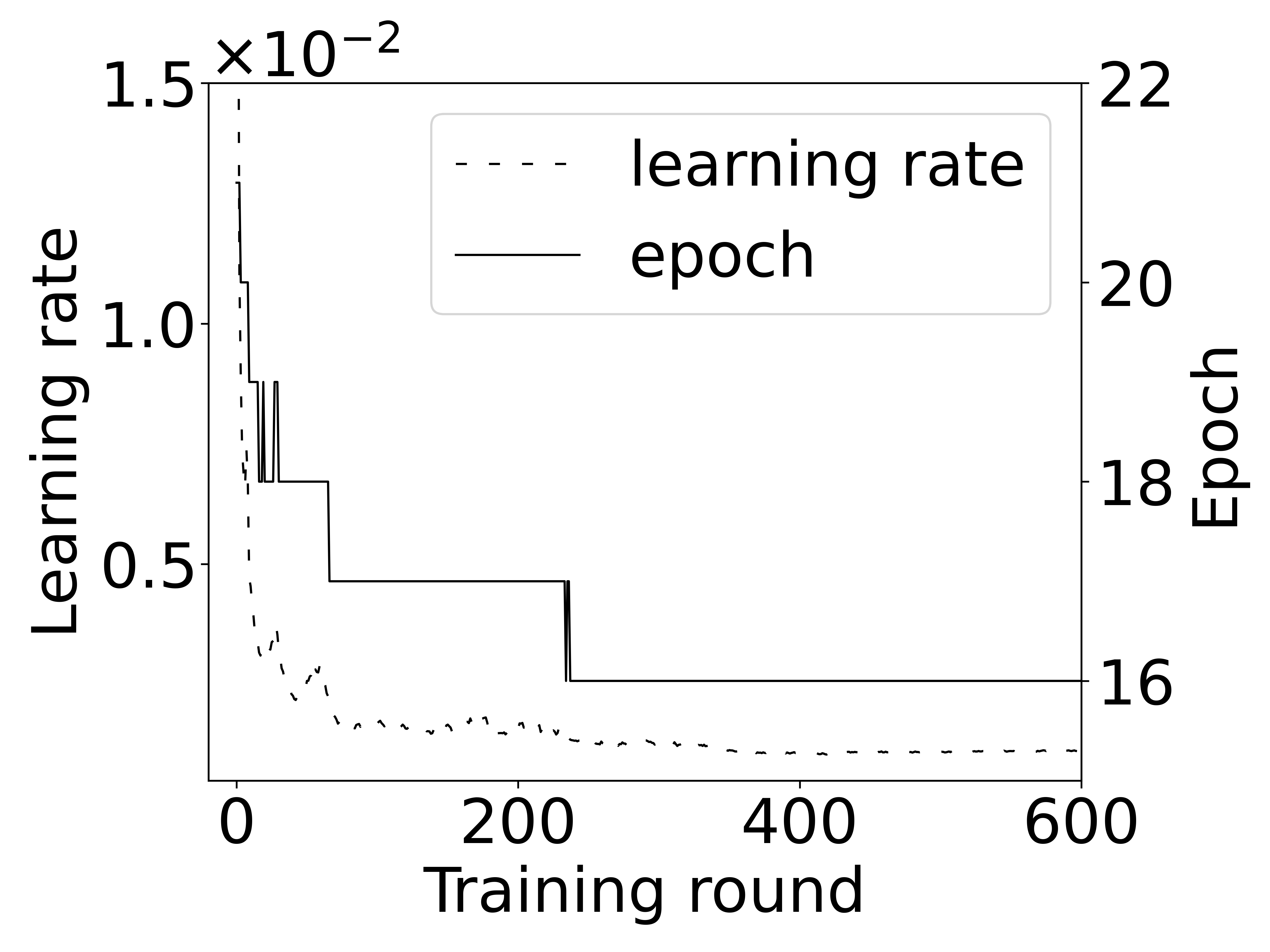}
		%\caption{fig1}
	\end{minipage}%
    }%
	\centering
	%\vspace{-1.5mm}
	\caption{Hyper-parameters adjusted by a 100\% GPU client.} 
	\label{fig-6}
	
\end{figure}
%a的纵坐标科学计数法，学习率虚线间隔变大，坐标数值图例变大 全文统一纵坐标大小

To validate how effective the proposed Dap-FL system achieves, we further illustrate the accuracy and loss values of the global models on the corresponding testing sets in the {\em Dap-FL}, {\em Large}, and {\em Small} settings in Fig. \ref{fig-7}., where the black, yellow and green lines represent the {\em Dap-FL}, {\em Large} and {\em Small} settings, respectively. In a nutshell, the global models trained through the proposed Dap-FL system converge faster than those in conventional hyper-parameter-fixed FL, and have better prediction accuracy. Specifically, taking the models of the CNN on MNIST task shown in Fig. \ref{fig-7}.(b) as an example, the converged loss values of the global models in the {\em Large} and {\em Small} are $ 0.1169 $ at $ 110 $-th training round and $ 0.0704 $ at $ 6704 $-th training round, while the global model trained through Dap-FL system has a converged loss value of $ 0.0898 $ at about $ 1340 $-th training round. Meanwhile, as shown in Fig. \ref{fig-7}.(f), the accuracy of the final global models in the {\em Large}, {\em Small} and {\em Dap-FL} are $ 97.55\% $, $ 98.02\% $, and $ 97.82\% $, respectively. Obviously, the proposed Dap-FL could accelerate the convergence rate of the global model compared to the global model in the {\em Small} setting, while achieving an accuracy close to the latter one. On the contrary, the final global model in the {\em Dap-FL} achieves a higher prediction accuracy than the global model in the {\em Large} setting, although the convergence rate is slightly slower. Note that the final global models even achieve the best prediction accuracy without losing convergence rate in the ML tasks of Logistic on MNIST and CNN on Fashion-MNIST, as shown in Fig. \ref{fig-7}.(e) and (h). To summarize, compared to conventional hyper-parameter-fixed FL, the proposed Dap-FL not only achieves a higher global model prediction accuracy, but also a more rapid global model convergence rate, which significantly reduces the communication overhead of all clients.

\begin{figure*}[htbp]
	\centering
	\subfigure[Loss of Logtistic on MNIST.]{
		\begin{minipage}[t]{0.25\linewidth}
			\centering
			\vspace{-0.5mm}
			\includegraphics[scale=0.22,trim= 10 0 5 0,clip]{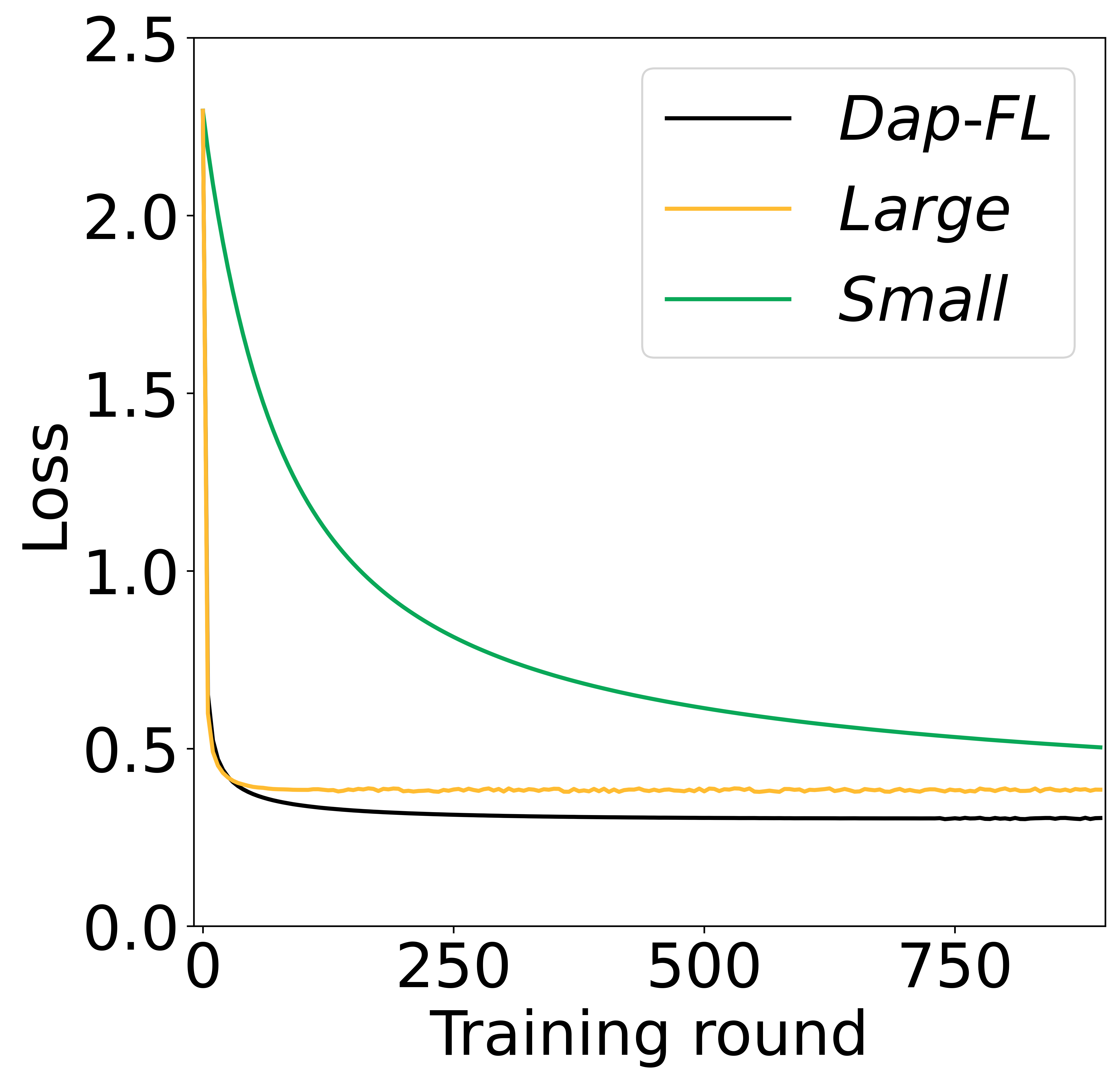}
			%\caption{fig1}			
		\end{minipage}%
		}%
	\subfigure[Loss of CNN on MNIST.]{
		\begin{minipage}[t]{0.25\linewidth}
			\centering
			\vspace{-0.5mm}
			\includegraphics[scale=0.22,trim= 10 0 5 0,clip]{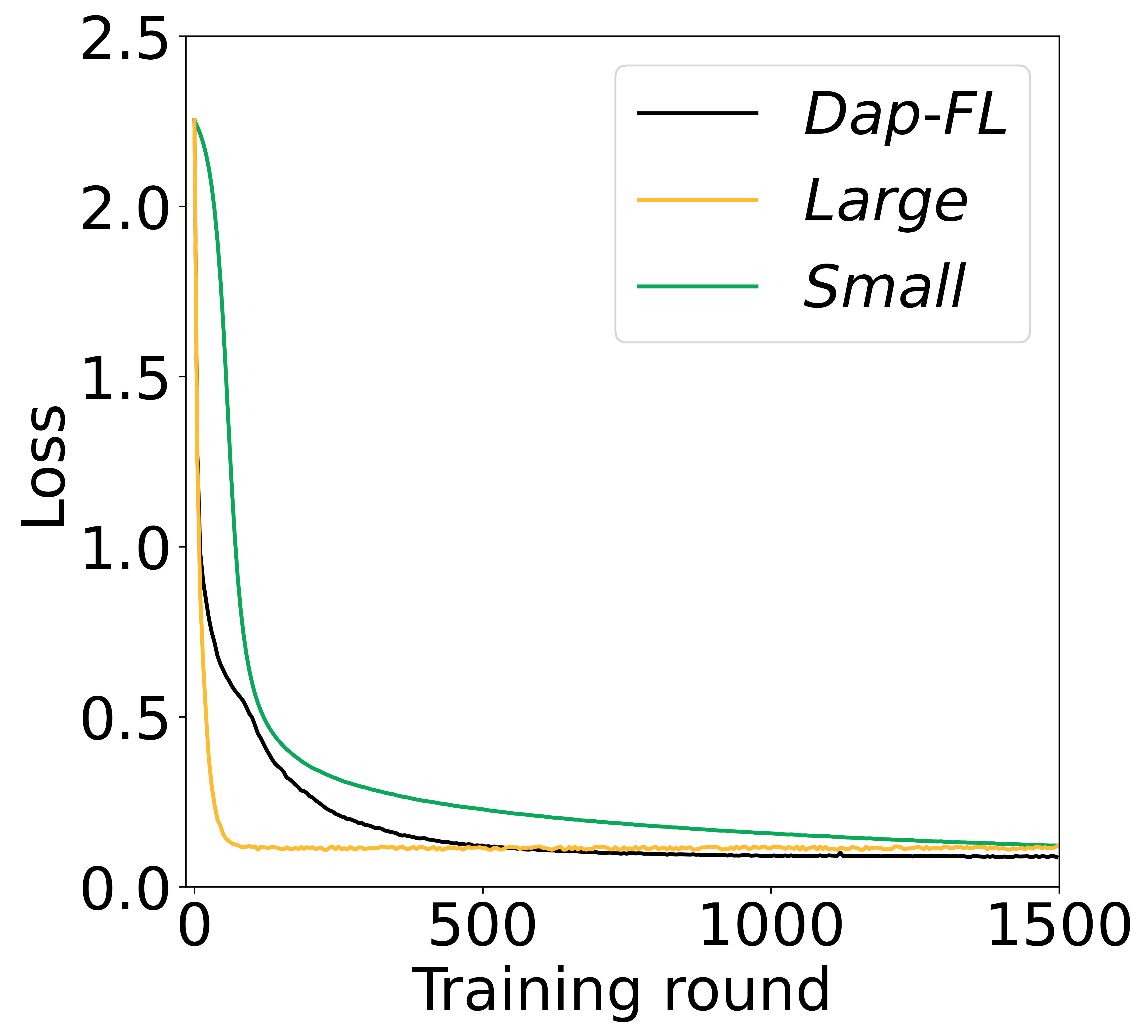}
			%\caption{fig1}			
		\end{minipage}%
	}%	
	\subfigure[Loss of ResNet-18 on MNIST.]{
		\begin{minipage}[t]{0.25\linewidth}
			\centering
			\vspace{-0.5mm}
			\includegraphics[scale=0.22,trim= 10 0 5 0,clip]{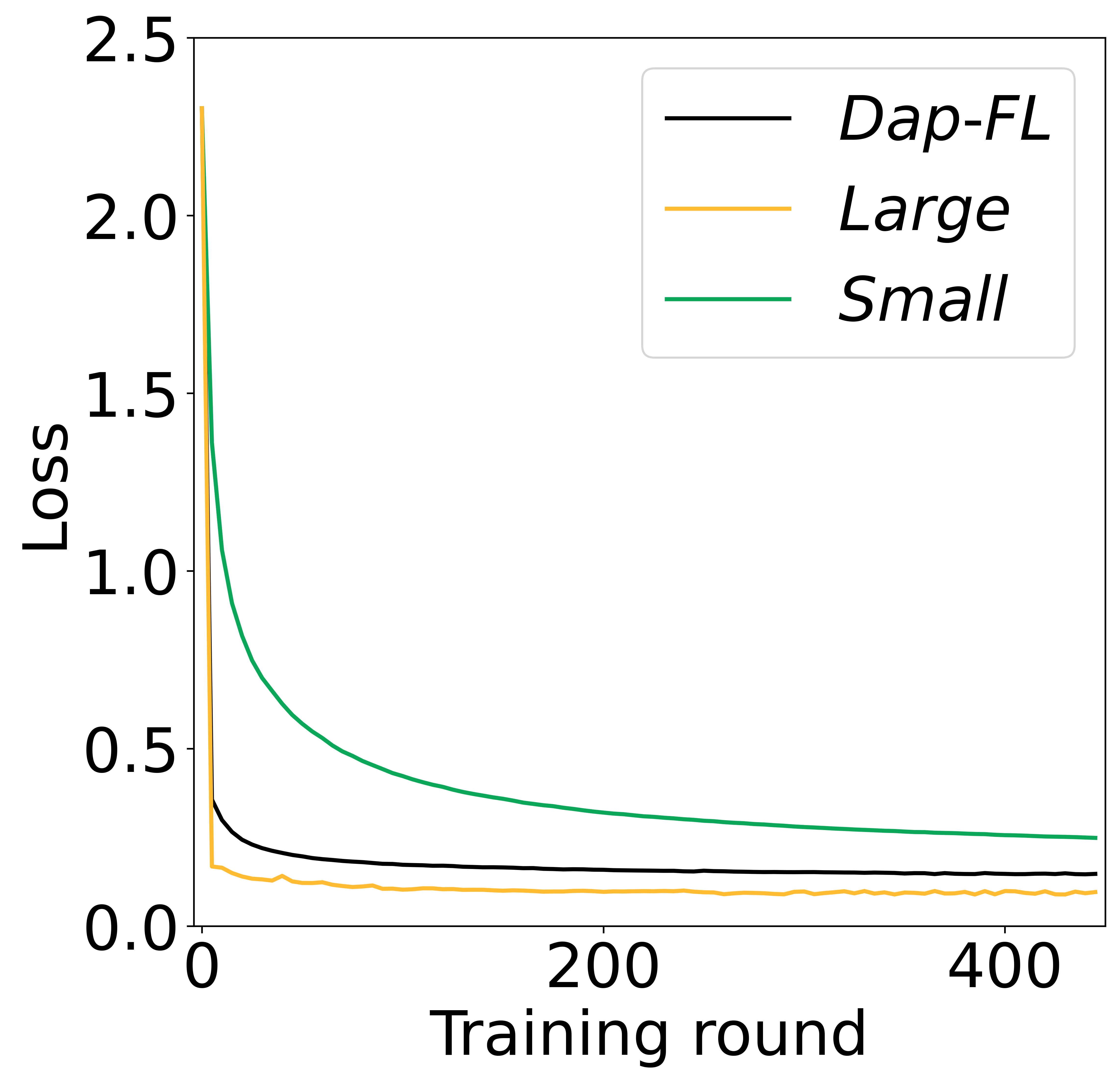}
			%\caption{fig1}			
		\end{minipage}%
	}%
	\subfigure[Loss of CNN on Fashion-MNIST.]{
	    \begin{minipage}[t]{0.25\linewidth}
		    \centering
		    \vspace{-0.5mm}
		    \includegraphics[scale=0.22,trim= 10 0 5 0,clip]{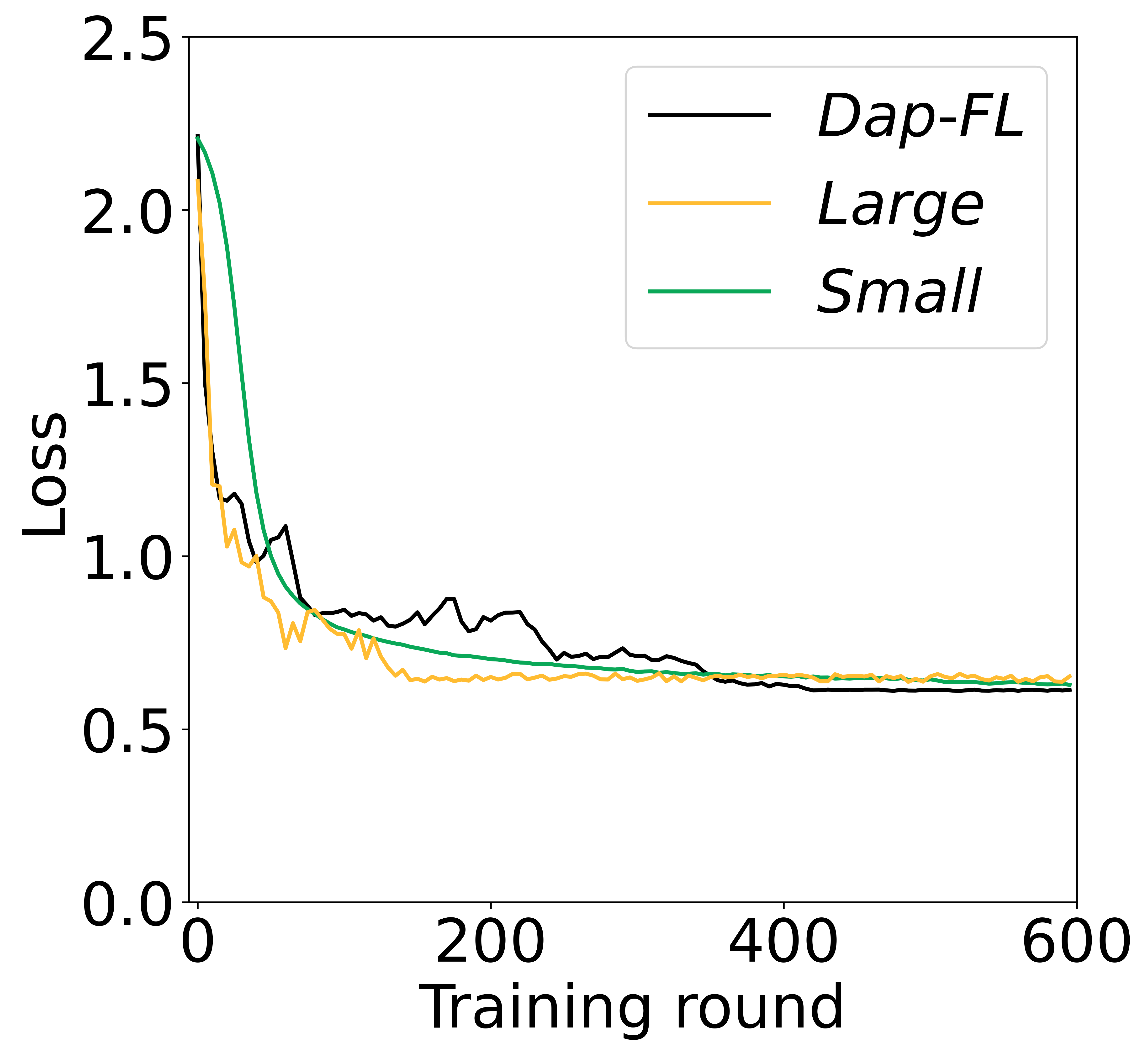}
		%\caption{fig1}			
	    \end{minipage}%
    }%
	\vfill
	\subfigure[Accuracy of Logtistic on MNIST.]{
	\begin{minipage}[t]{0.25\linewidth}
		\centering
		\vspace{-0.5mm}
		\includegraphics[scale=0.22,trim= 10 0 5 0,clip]{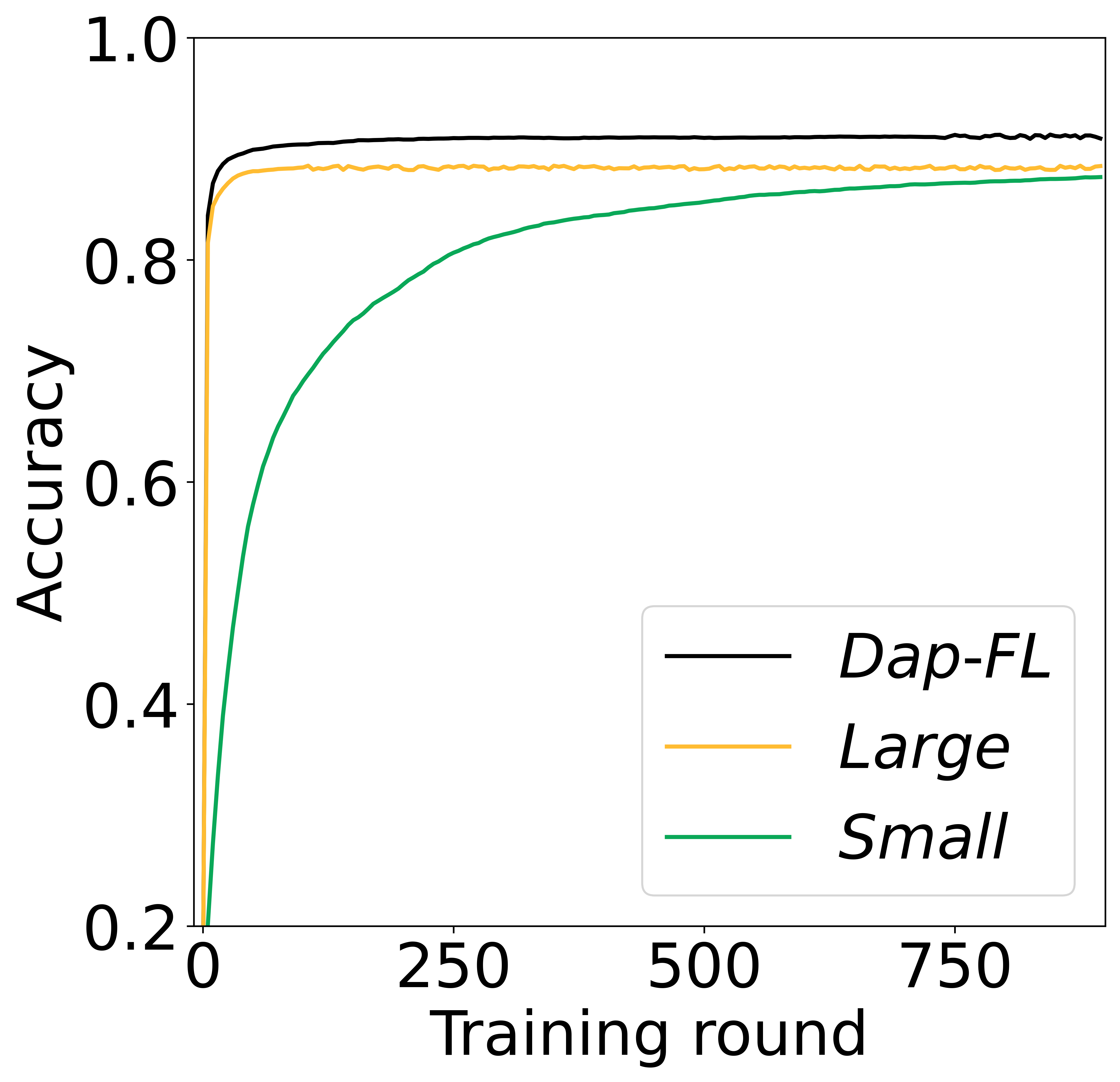}
			%\caption{fig1}			
	\end{minipage}%
	}%
	\subfigure[Accuracy of CNN on MNIST.]{
		\begin{minipage}[t]{0.25\linewidth}
			\centering
			\vspace{-0.5mm}
			\includegraphics[scale=0.22,trim= 10 0 5 0,clip]{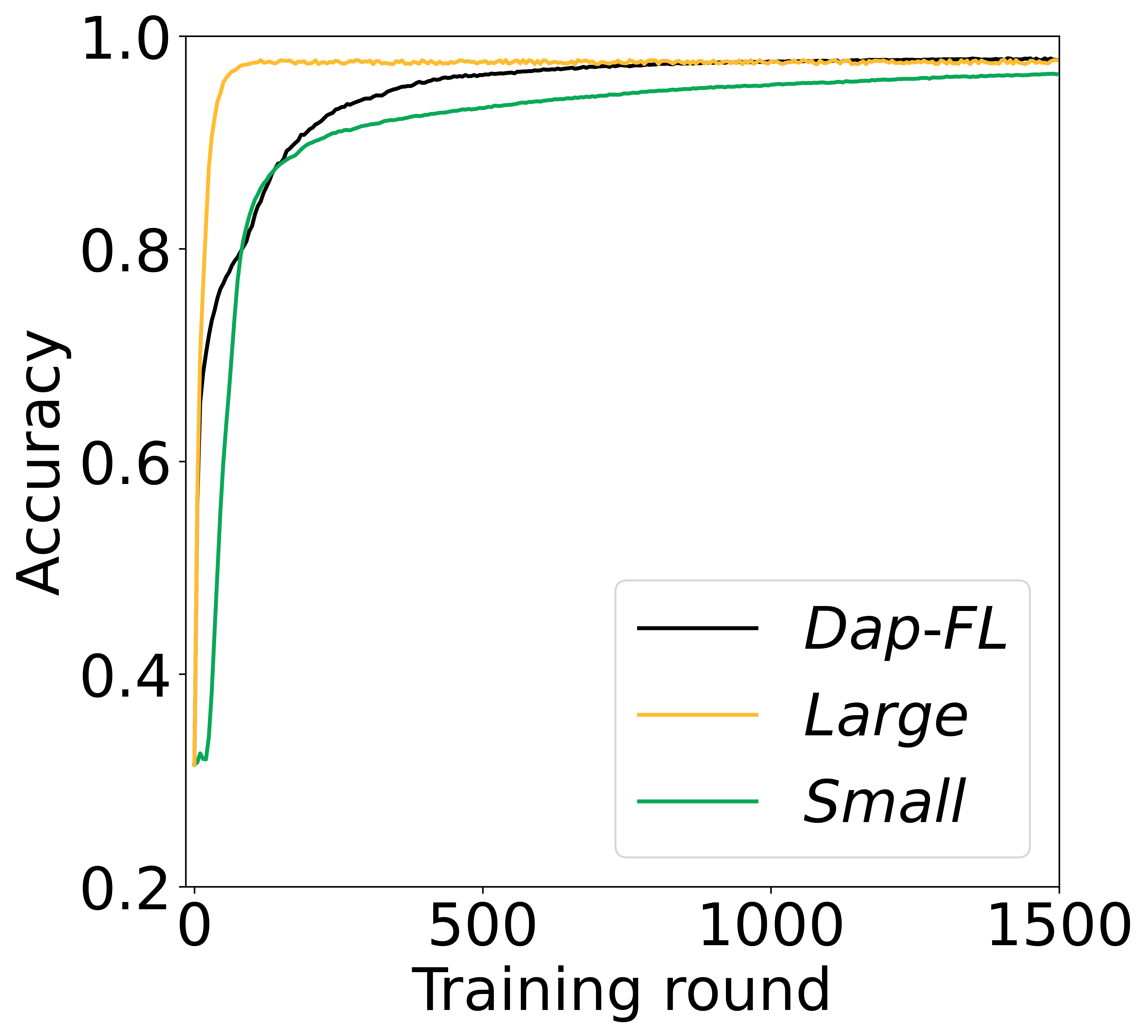}
			%\caption{fig1}			
		\end{minipage}%
	}%
	\subfigure[Accuracy of ResNet-18 on MNIST.]{
	    \begin{minipage}[t]{0.25\linewidth}
		    \centering
		    \vspace{-0.5mm}
		    \includegraphics[scale=0.22,trim= 10 0 5 0,clip]{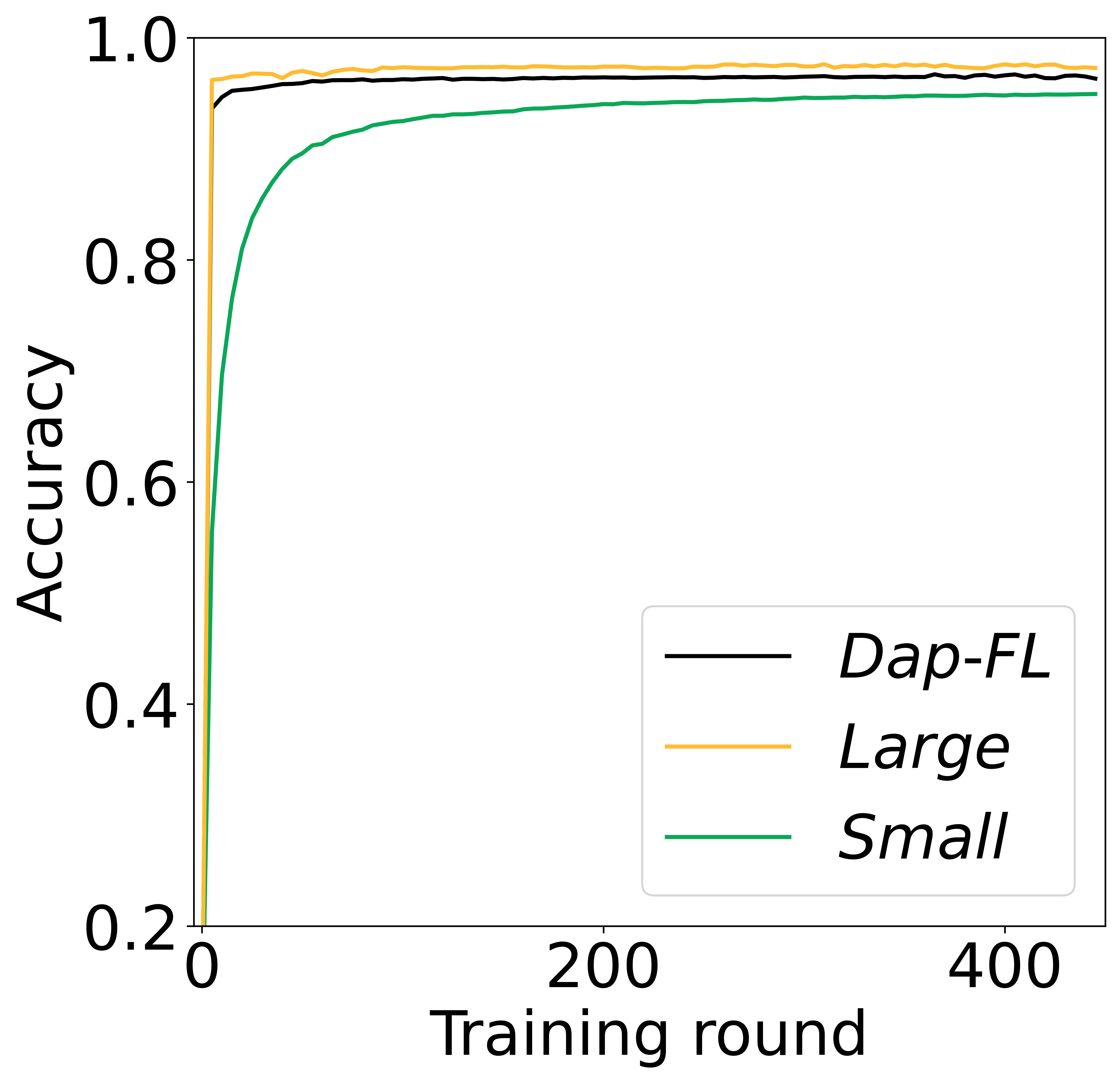}
		%\caption{fig1}			
	    \end{minipage}%
    }%
	\subfigure[\!\!\!\!Accuracy\;of\;CNN\;on\;Fashion-MNIST.]{
		\begin{minipage}[t]{0.25\linewidth}
			\centering
			\vspace{-0.5mm}
			\includegraphics[scale=0.22,trim= 10 0 5 0,clip]{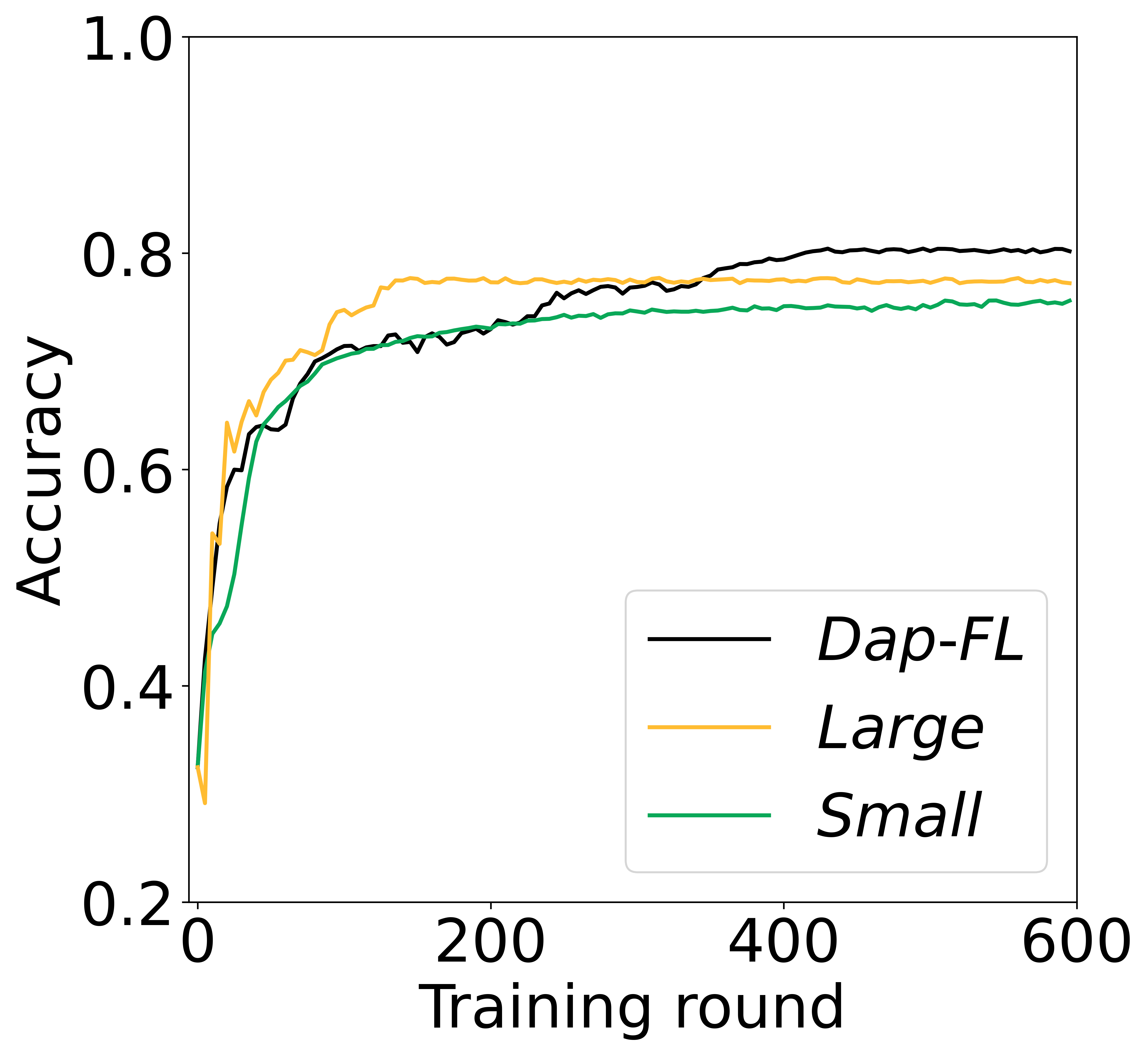}
			%\caption{fig1}			
		\end{minipage}%
	}%			
	\centering
	%\vspace{-1mm}
	\caption{Performance of global models in the {\em Large}, {\em Small}, and {\em Dap-FL} settings.} 
	\label{fig-7}	
\end{figure*}	
%线画粗，主要是改图例 剩下要求一样，h的纵坐标， d坐标也是

\begin{figure*}[htbp]
	\centering
	\subfigure[Loss of Logistic on MNIST.]{
	\begin{minipage}[t]{0.25\linewidth}
		\centering
		\vspace{-0.5mm}
		\includegraphics[scale=0.35, trim= 5 0 0 0,clip]{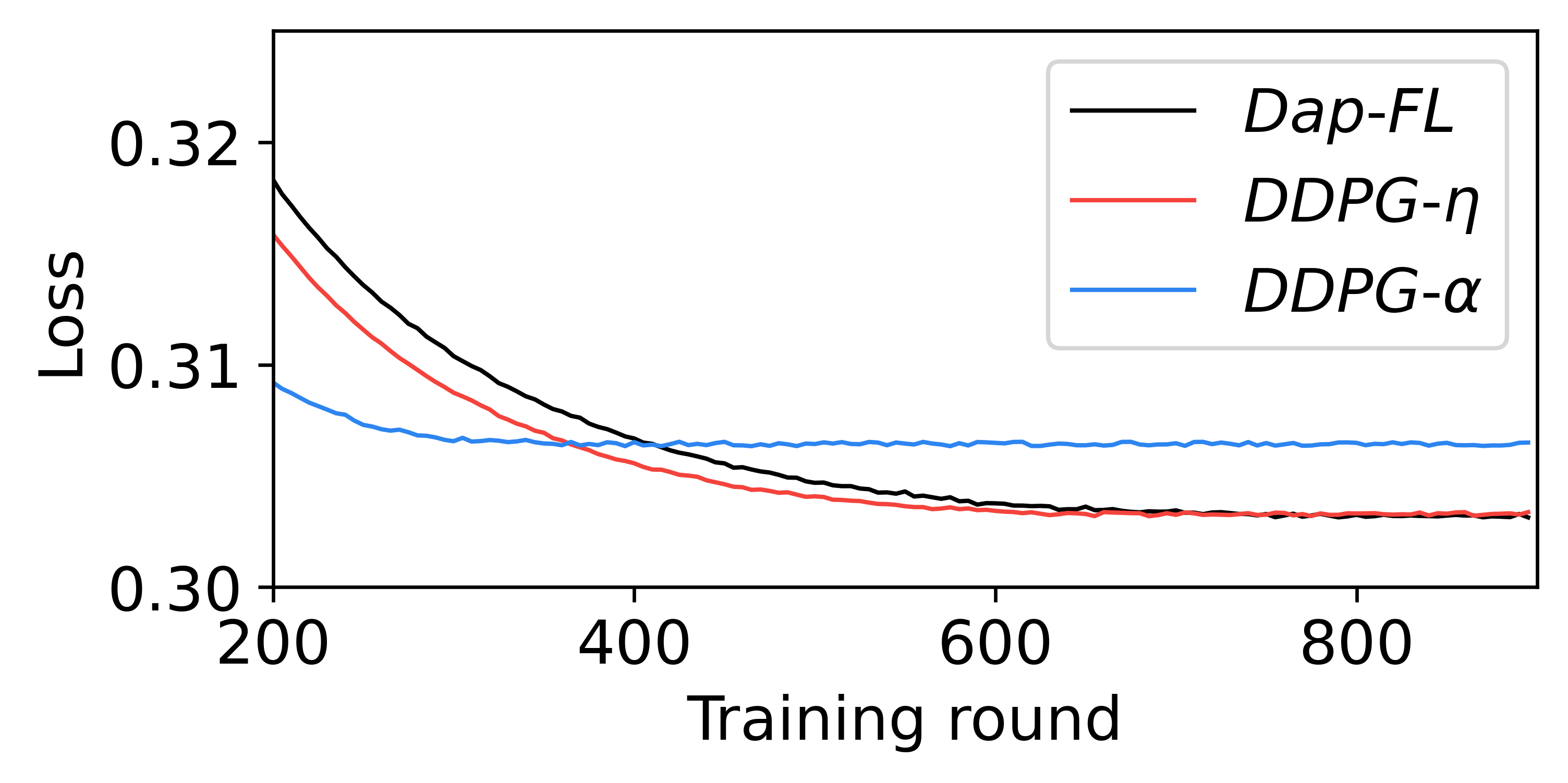}
		%\caption{fig1}			
	\end{minipage}%
    }%	
	\subfigure[Loss of CNN on MNIST.]{
		\begin{minipage}[t]{0.25\linewidth}
			\centering
			\vspace{-0.5mm}
			\includegraphics[scale=0.35, trim= 5 0 0 0,clip]{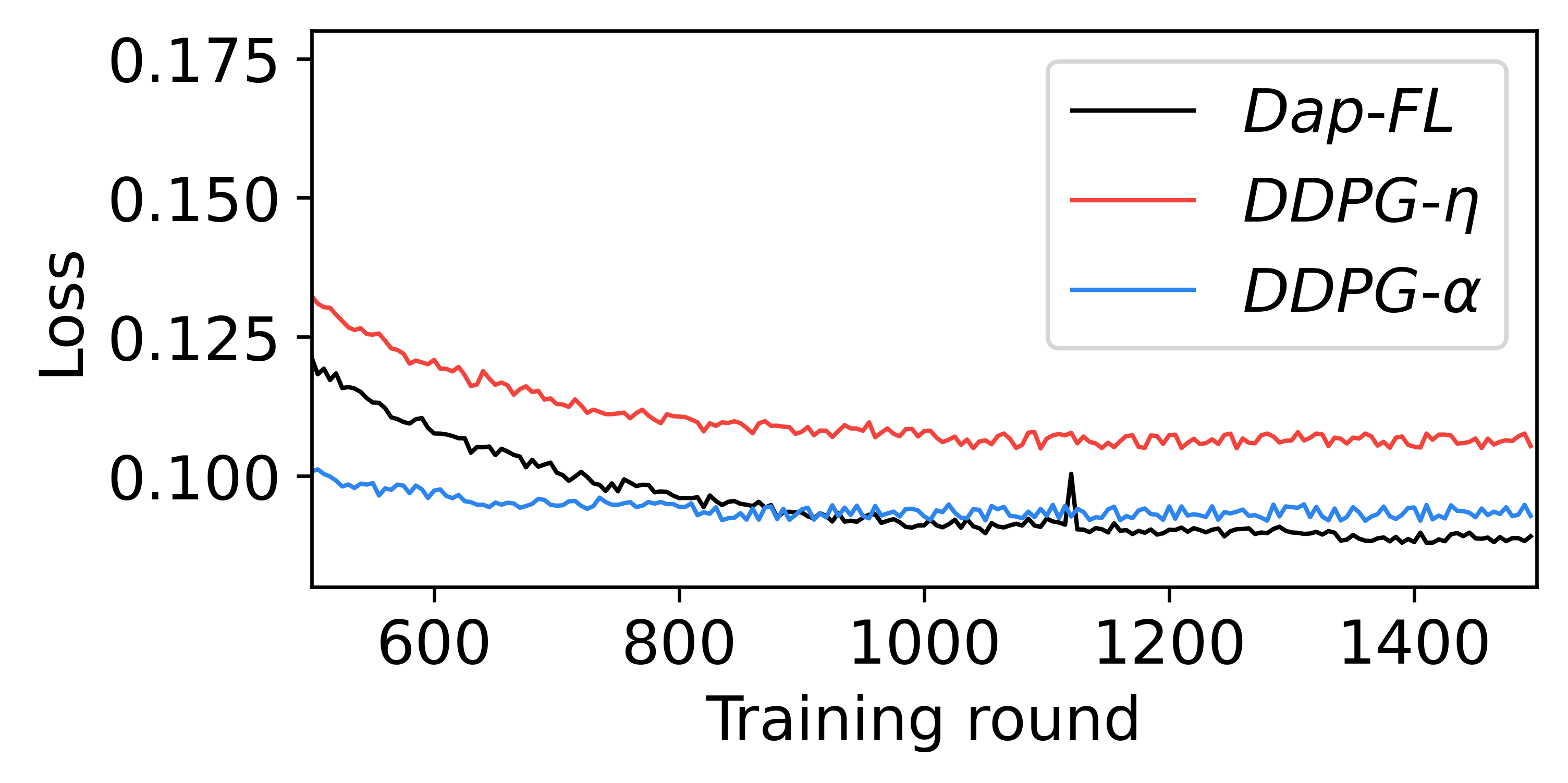}
			%\caption{fig1}			
		\end{minipage}%
	}%
	\subfigure[Loss of ResNet-18 on MNIST.]{
	\begin{minipage}[t]{0.25\linewidth}
		\centering
		\vspace{-0.5mm}
		\includegraphics[scale=0.35, trim= 5 0 0 0,clip]{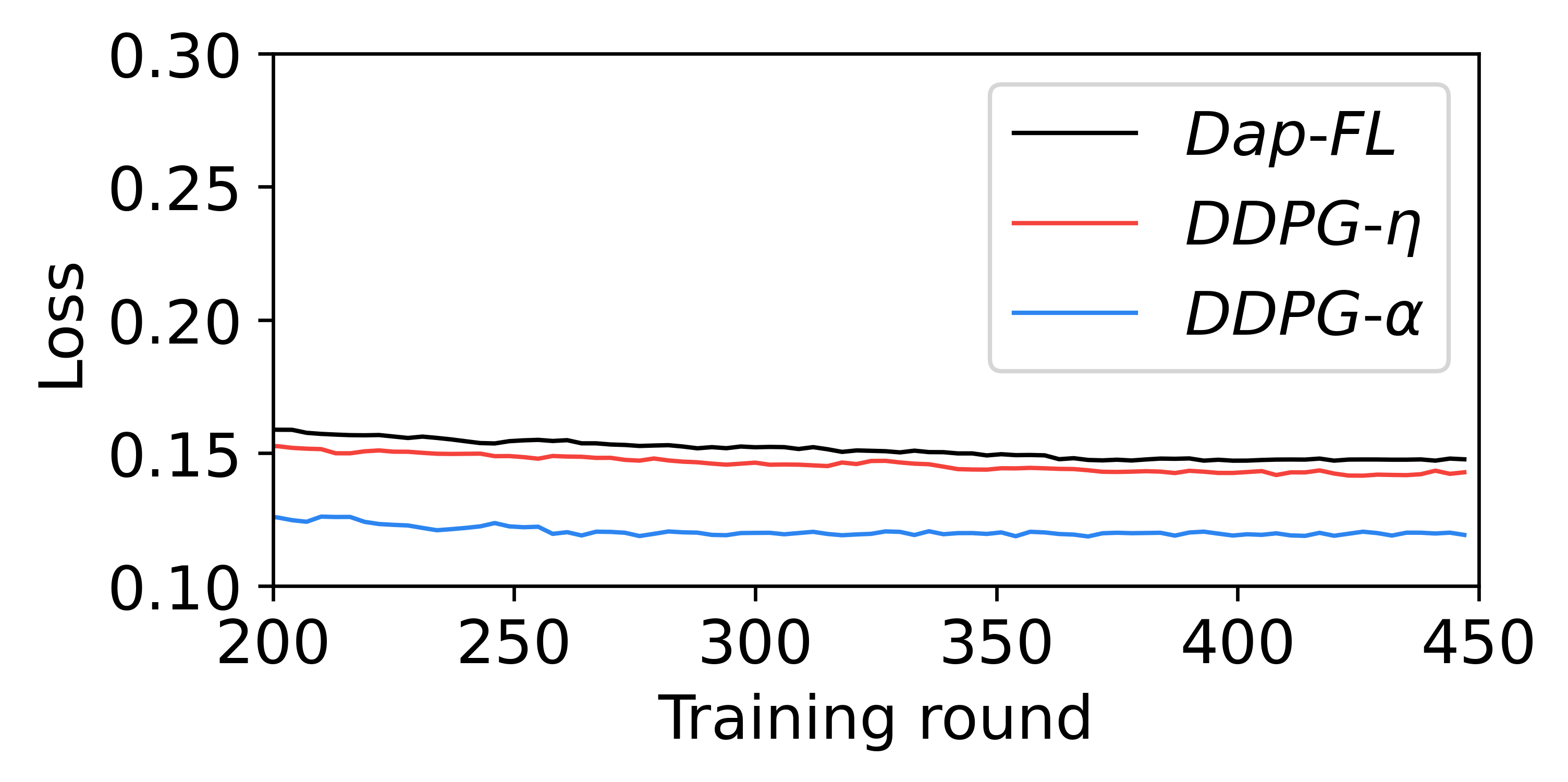}
		%\caption{fig1}			
	\end{minipage}%
    }%
	\subfigure[Loss of CNN on Fashion-MNIST.]{
		\begin{minipage}[t]{0.25\linewidth}
			\centering
			\vspace{0.5mm}
			\includegraphics[scale=0.35, trim= 5 0 0 0,clip]{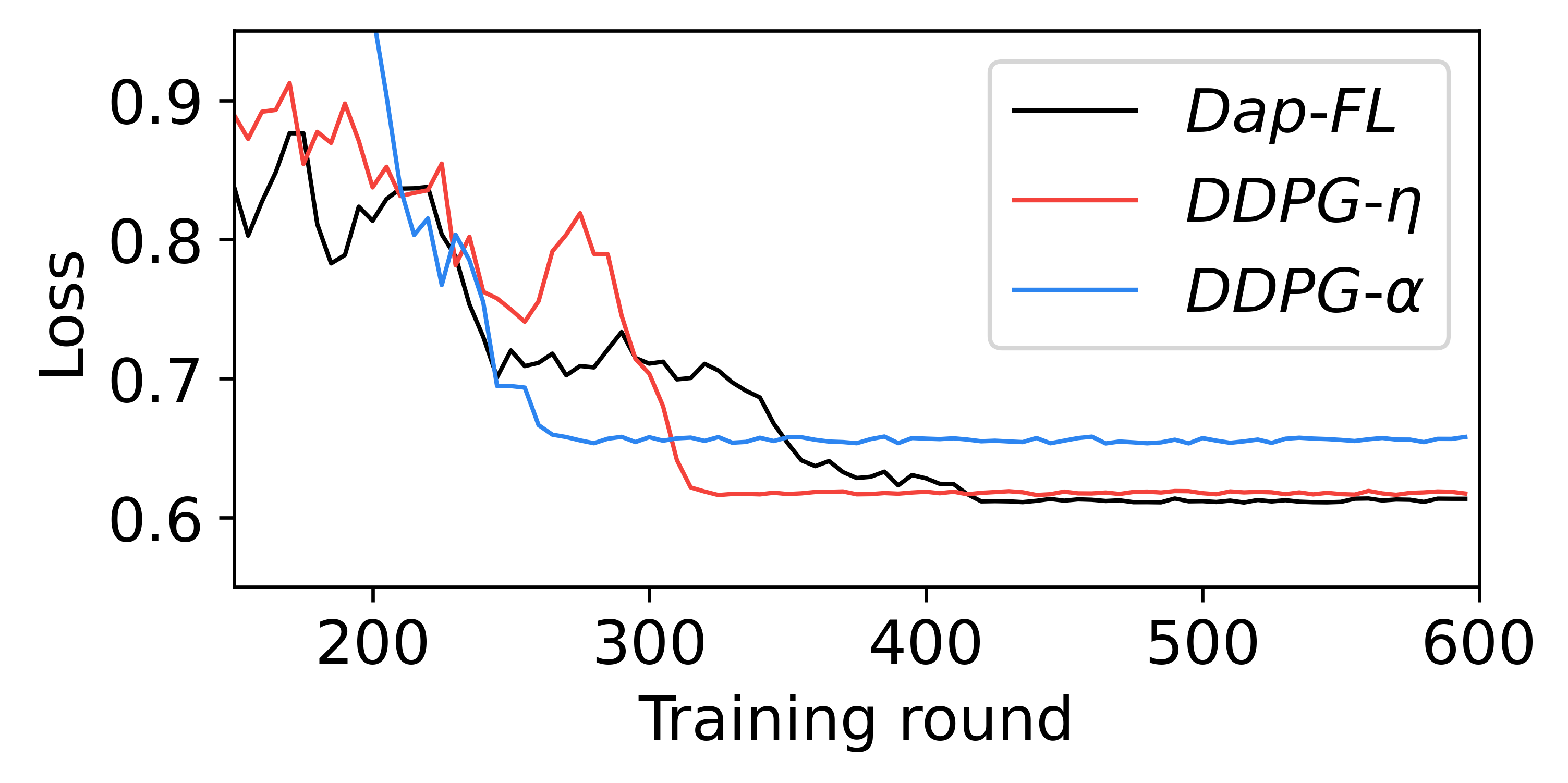}
			%\caption{fig1}			
		\end{minipage}%
	}%	
	\vfill
	\subfigure[Accuracy of Logistic on MNIST.]{
	\begin{minipage}[t]{0.25\linewidth}
		\centering
		\vspace{-0.5mm}
		\includegraphics[scale=0.35, trim= 5 0 0 0,clip]{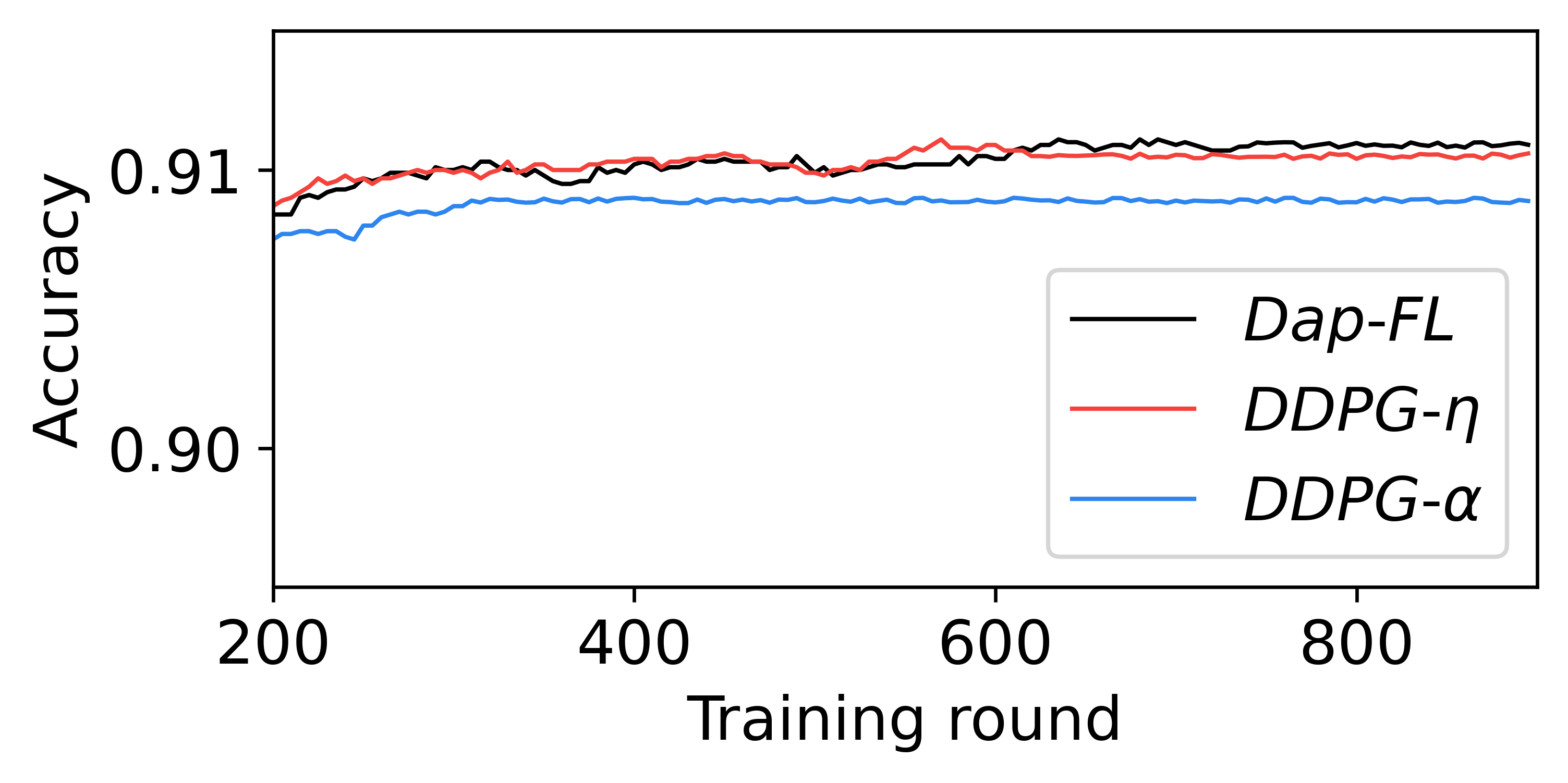}
		%\caption{fig1}			
	\end{minipage}%
    }%	
	\subfigure[Accuracy of CNN on MNIST.]{
		\begin{minipage}[t]{0.25\linewidth}
			\centering
			\vspace{-0.5mm}
			\includegraphics[scale=0.35, trim= 5 0 0 0,clip]{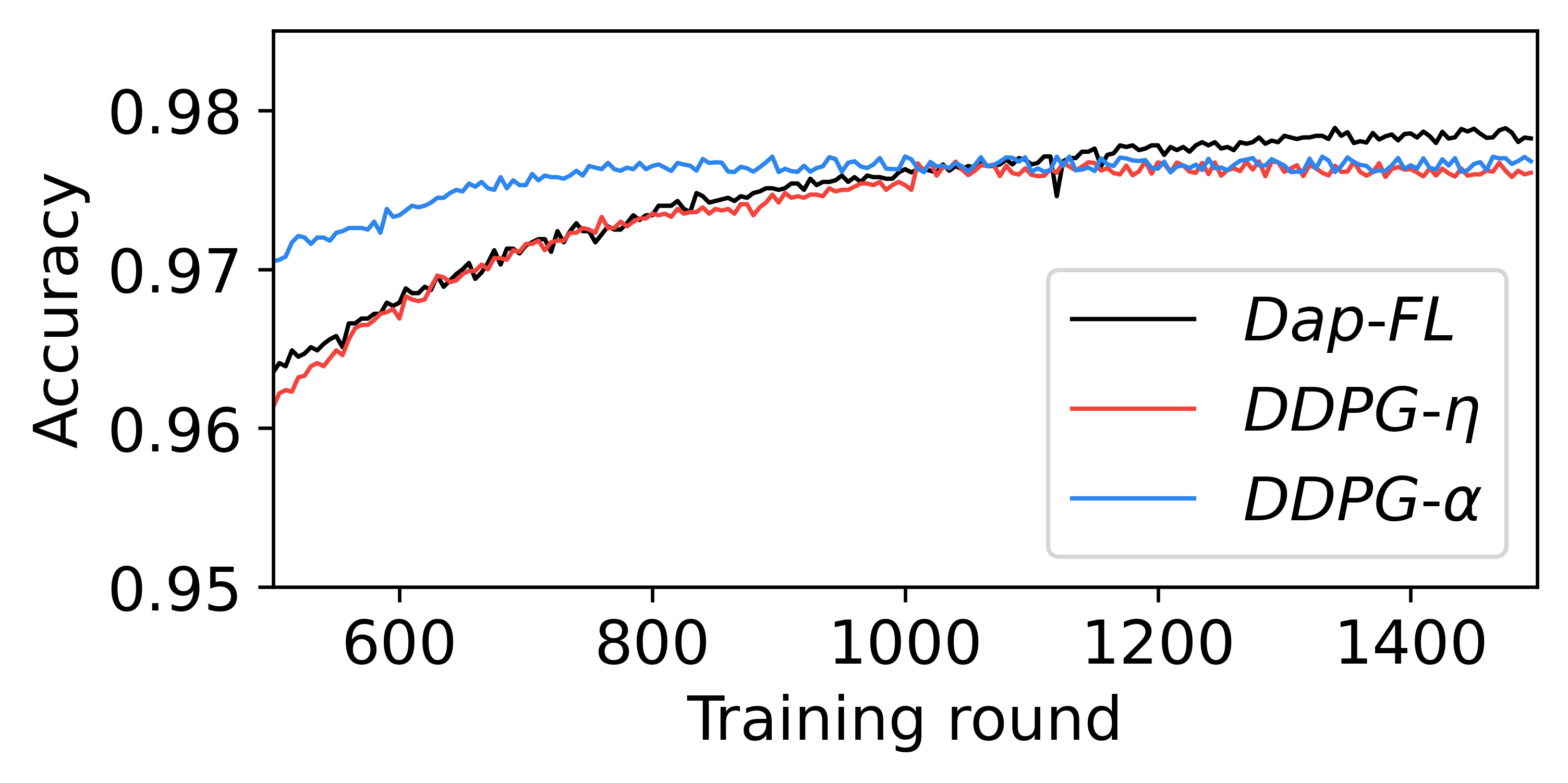}
			%\caption{fig1}			
		\end{minipage}%
	}%
	\subfigure[Accuracy of ResNet-18 on MNIST.]{
	\begin{minipage}[t]{0.25\linewidth}
		\centering
		\vspace{-0.5mm}
		\includegraphics[scale=0.35, trim= 5 0 0 0,clip]{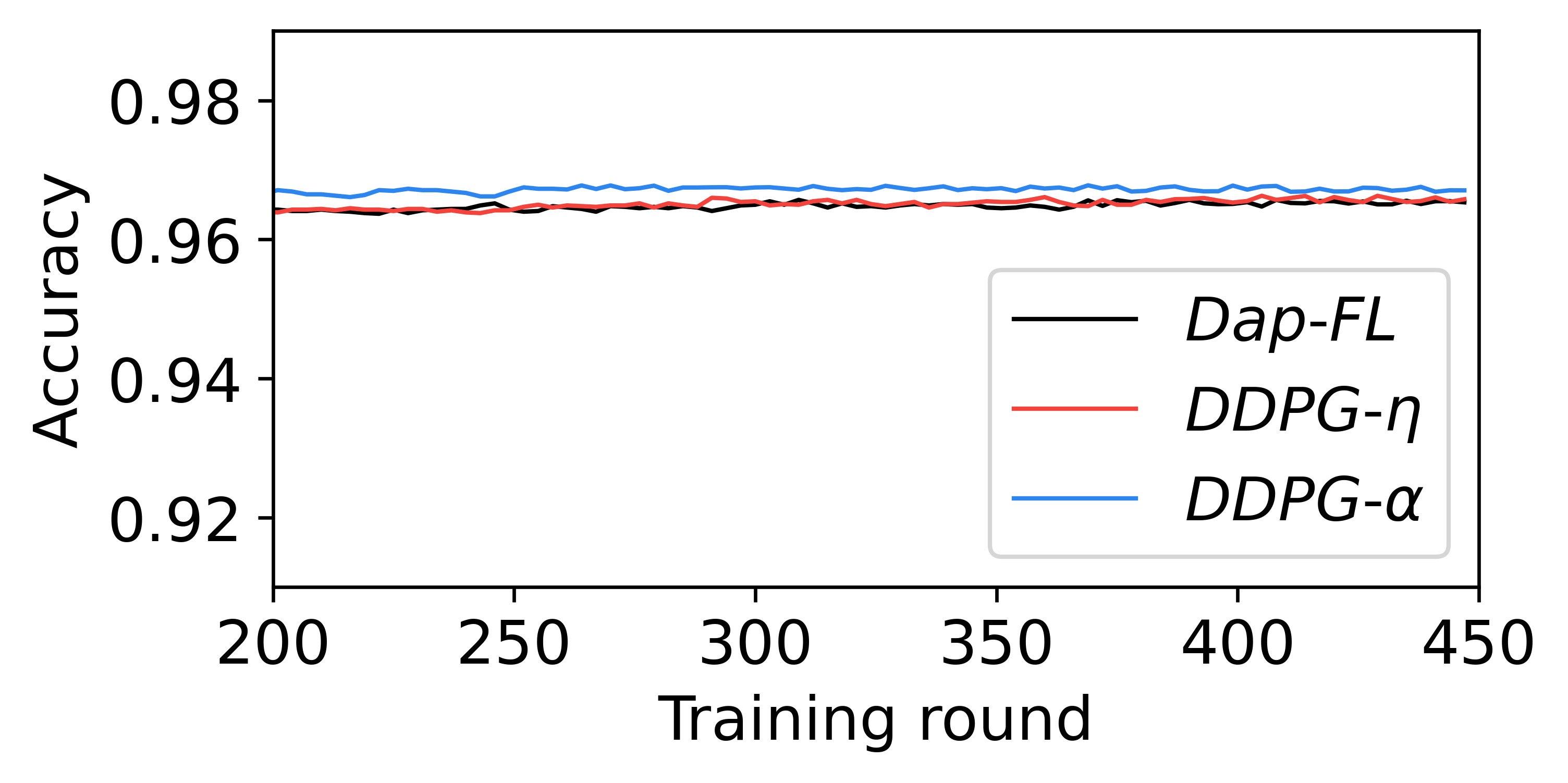}
		%\caption{fig1}			
	\end{minipage}%
    }%	
	\subfigure[\!\!\!\!Accuracy\;of\;CNN\;on\;Fashion-MNIST.]{
		\begin{minipage}[t]{0.25\linewidth}
			\centering
			\vspace{-0.5mm}
			\includegraphics[scale=0.35, trim= 5 0 0 0,clip]{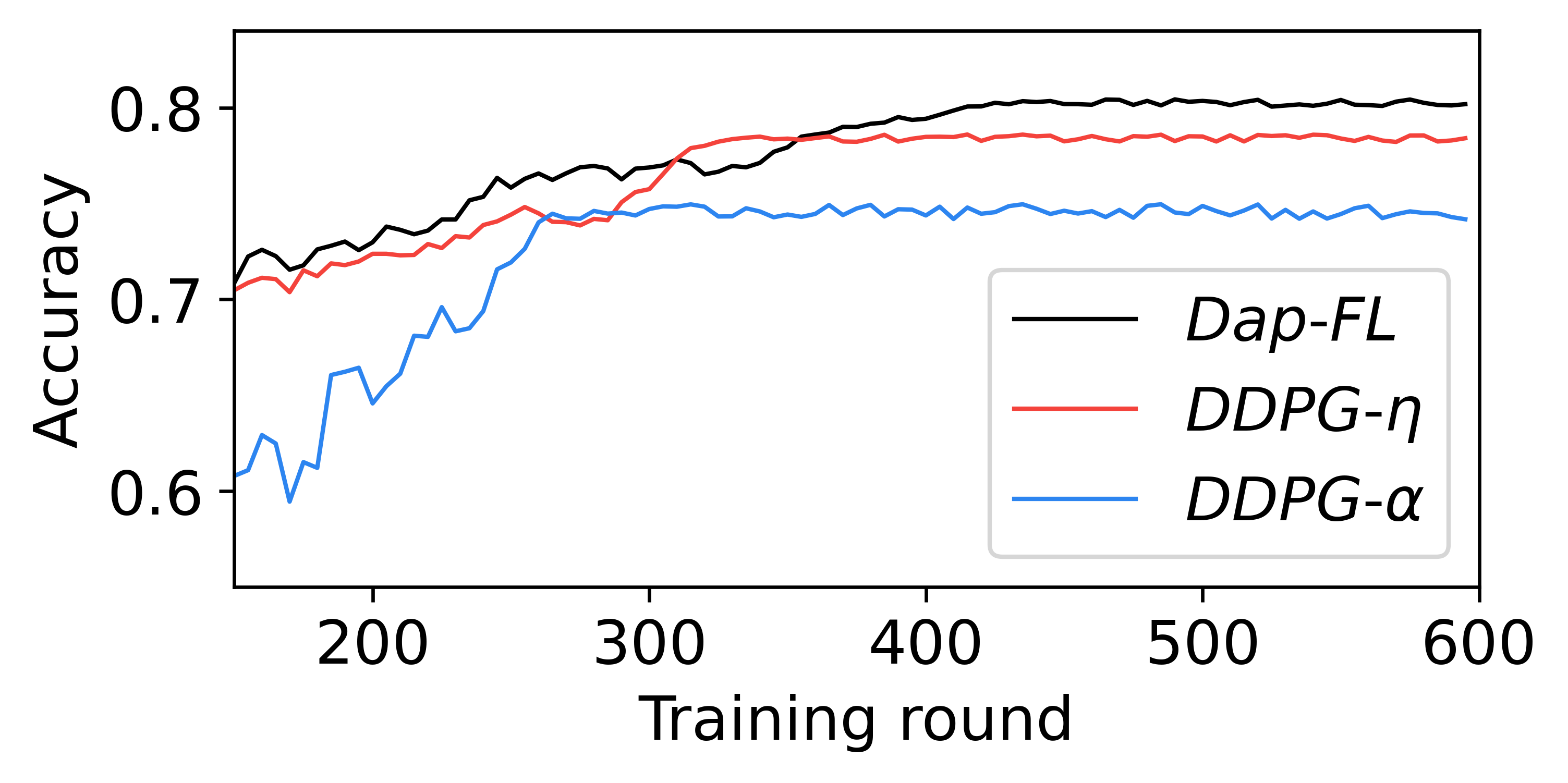}
			%\caption{fig1}			
		\end{minipage}%
	}%		
	\centering
	%\vspace{-1.5mm}
	\caption{Performance of global models in the {\em DDPG-$ \eta $}, {\em DDPG-$ \alpha $}, and {\em Dap-FL} settings.} 
	\label{fig-8}	
\end{figure*}

Furthermore, we evaluate the comprehensiveness of the selected hyper-parameters in the proposed Dap-FL system by fixing one type of hyper-parameter and adaptively adjusting the other one, i.e., the {\em DDPG-$ \eta $} and {\em DDPG-$ \alpha $} settings. As shown in Fig. \ref{fig-8}., the prediction accuracy of the global models trained by adaptively adjusting two training hyper-parameters are higher than that trained by adjusting one type of training hyper-parameters for all considered ML tasks except for ResNet-18 on MNIST, and the convergence rates of the global models in the {\em Dap-FL} setting are higher than that in the {\em DDPG-$ \eta $} and {\em DDPG-$ \alpha $} settings for all considered ML tasks except for ResNet-18 on MNIST. Such experimental results illustrate that adaptively adjusting two selected training hyper-parameters in the proposed Dap-FL system is comprehensive in most instances, since merely adjusting one type of local training hyper-parameters cannot obtain global models with high performance.
A possible reason behind the abnormal performance in the ResNet-18 on MNIST task is that the powerful training ability of ResNet-18 benefiting from the residual structure obscures the tuning effect of the proposed Dap-FL system to the global model on such a simple dataset. If changing the dataset to a complex one, such as CIFAR-100, a more significant effect might be present, which remains in future works.

More importantly, to better understand how the proposed adaptive FL system outperforms other state-of-the-art RL-based adaptive FL schemes, we plot the loss and accuracy curves of the global models of the CNN on Fashion-MNIST task in the {\em Dap-FL}, and {\em DDPG-client}, and {\em DQN} settings in Fig. \ref{fig-9}. First of all, on account that Sun {\em et al}.\cite{Sun2021} only adaptively adjust local training epoch and do not give the learning rate in their experiments, we divide the {\em DQN} setting into two sub-settings by fixing the local learning rate as a large value and a small value, i.e., $ 10^{-3} $ and $ 10^{-4} $. As can be observed in Fig. 9.(a), the convergence rate of the global model in the {\em Dap-FL} setting is faster than that with a small learning rate and slower than that with a large learning rate in the {\em DQN} setting, while the final converged loss value of the global model in the {\em Dap-FL} setting is smaller than that in the {\em DQN} setting no matter what the learning rate is. Meanwhile, as shown in Fig. \ref{fig-9}.(b), the accuracy of the final global model in the {\em Dap-FL} setting is $ 80.25\% $, which is $ 7.85\% $ higher than that ($ 72.40\% $) in the {\em DQN} setting with a large learning rate and $ 2.65\% $ higher than that ($ 77.60\% $) in the {\em DQN} setting with a small learning rate. In other words, the proposed Dap-FL is better than Sun {\em et al}.' DQN-based adaptive FL method\cite{Sun2021} in terms of global model convergence rate and accuracy. 

In addition, we can observe in Fig. \ref{fig-9}.(a) that the loss value curve of the global model in the {\em Dap-FL} setting converges faster to a lower value than that in the {\em DDPG-client} setting, and the oscillations of the former one are smaller than the latter one. Besides, the accuracy of final global model in the {\em Dap-FL} setting is $ 6.03\% $ higher than that ($ 74.22\% $) in the {\em DDPG-client} setting. The reason behind such performance gaps is that the global model cannot learn scattered features across local data faster and better through only a fraction of clients' local contributions in each aggregation. In summary, compared to the DDPG client selection scheme proposed by Zhang {\em et al}. \cite{Zhang2021}, i.e., the {\em DDPG-FL} setting in our experiments, the proposed Dap-FL system has better performance in terms of smaller oscillations of loss value curve, faster convergence rate, and higher final global model prediction accuracy.
%给出这里面全部参数  

\begin{figure}[htbp]
	\centering
	\subfigure[CNN on Fashion-MNIST/Loss.]{
		\begin{minipage}[t]{0.5\linewidth}
			\centering
			\vspace{-4.5mm}
			\includegraphics[scale=0.14, trim= 5 0 5 0,clip]{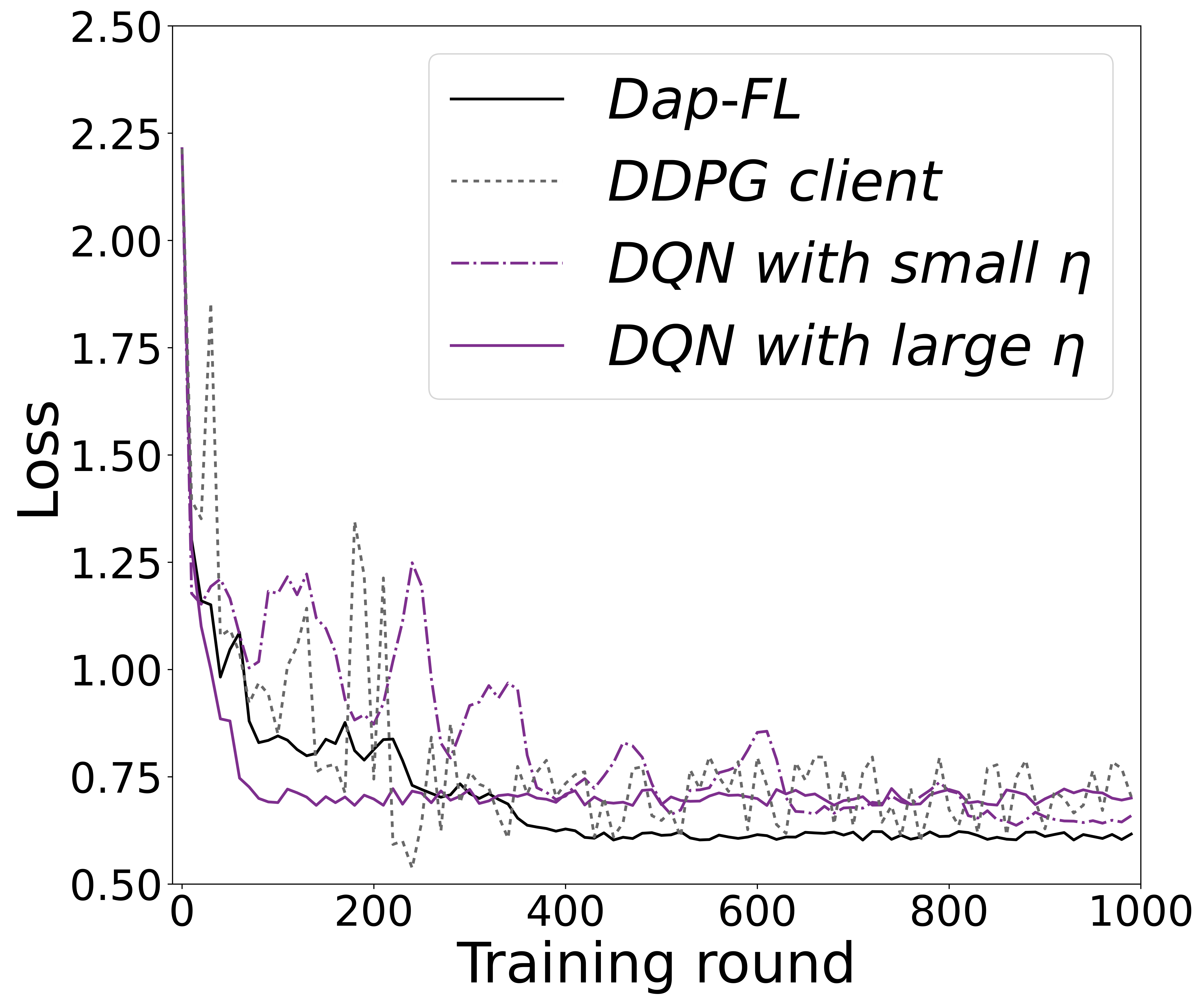}
		\end{minipage}
	}%
	\subfigure[CNN on Fashion-MNIST/Accuracy.]{
		\begin{minipage}[t]{0.5\linewidth}
			\centering
			\vspace{-4.5mm}
			\includegraphics[scale=0.14, trim= 5 0 5 0,clip]{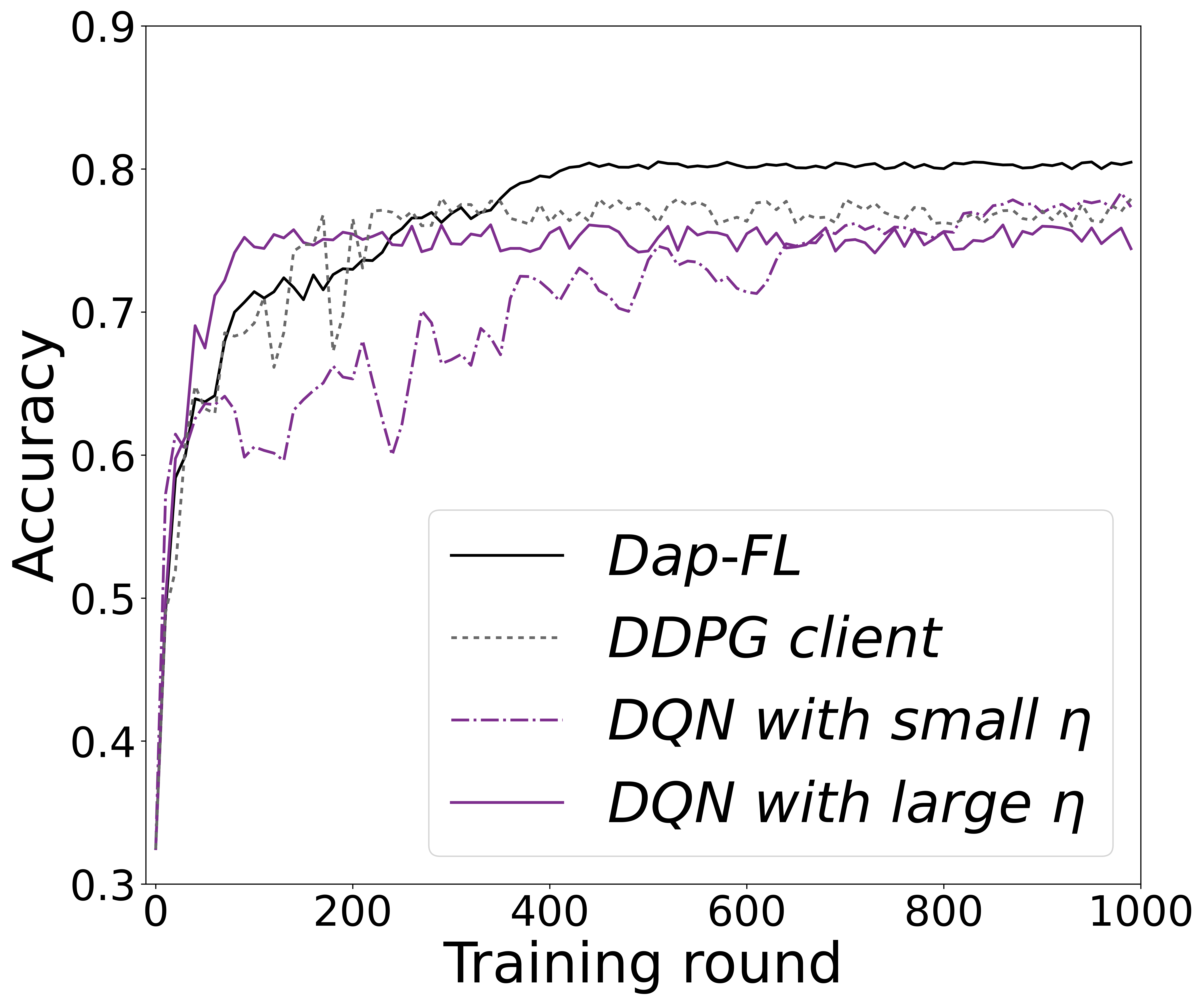}
		\end{minipage}%
	}%
	\centering
	\vspace{-1.5mm}
	\caption{Model performance in the {\em Dap-FL}, {\em DQN}, and {\em DDPG-client} settings.}
	\label{fig-9}
	
\end{figure}
%前面一样的要求

%Besides, TABLE \ref{tab-2} shows the computational costs of the main operations about security. Although homomorphic encryption may not be as good as differential privacy in terms of computation efficiency, it will not introduce noise to reduce the model accuracy as differential privacy does \cite{Chen2022}.  
%
%
%\begin{table}[h] %开始一个表格environment，表格的位置是h,here。
%	\centering  
%	\caption{Computational efficiency of each operation about security}\label{tab-2} %显示表格的标题
%	\begin{threeparttable}  
%		\begin{tabular}{m{1cm}<{\centering}|m{3.4cm}<{\centering}|m{3cm}<{\centering}} %设置了每一列的宽度，强制转换。  
%			
%			\hline  
%			\hline  
%			\makecell[c]{Index} & \makecell[c]{Operation\tnote{1}} & \makecell[c]{Time ($ s $)} \\ %用&来分隔单元格的内容 \\表示进入下一行  
%			\hline %画一个横线，下面的就都是一样了，这里一共有4行内容  
%			\makecell[c]{1} & \makecell[c]{Encryption\tnote{2}} & \makecell[c]{446.4780}\\  
%			\hline  
%			\makecell[c]{2} & \makecell[c]{Decryption\tnote{2}} & \makecell[c]{90.4728}\\  
%			\hline   
%			\makecell[c]{3} & \makecell[c]{Signing\tnote{3}} & \makecell[c]{0.3400}\\
%			\hline
%			\makecell[c]{4} & \makecell[c]{Verification\tnote{3}} & \makecell[c]{0.0003}\\  
%			\hline  
%			\hline 
%			
%		\end{tabular}
%		
%		\begin{tablenotes}
%			\footnotesize
%			\item[1] We execute a series of operations on a local model of the Logistic on MNIST task, where the plaintext and the ciphertext are both $ 81920 $Byte.
%			\item[2] The key length is set as 2048-bit.
%			\item[3] The key length is set as 1024-bit, and the hash function is set as MD5.   %此处加入注释*信息
%		\end{tablenotes}
%	\end{threeparttable}  
%\end{table}  

\section{Related Work}

\noindent{\bfseries Conventional FL }

FL is proposed by Mcmahan {\em et al.} \cite{McMahan2017} aiming to train a global model from decentralized data distributed in different clients.  Then, Google's researchers further develop the FL system to improve communication efficiency \cite{Konecny2016}, system scalability \cite{Bonawitz2019}, and privacy \cite{Bonawitz2017}. Recently, other works build on top of FL by researching different paradigms \cite{Chen2022} and applications \cite{Brisimi2018,Sun2021,Zhang2021}.

\noindent{\bfseries Adaptive FL }

Since the foundation of FL is ML (or DL), adaptively selecting training hyper-parameters, i.e., adaptive FL, is of crucial importance to the flourish of FL. Existing designs for adaptive FL mainly rely on two major categories: theoretic methods and RL-based methods.
 
\textit{\textbf{Adaptive FL based on the theoretic method}.}
Theoretic methods always achieve adaptive FL by formulating the FL process as an optimization problem. 
For example, Luo {\it et al.} \cite{Luo2021} analyze the relationship between the convergence of the global model and the total cost, based on which a biconvex optimization problem with respect to the numbers of local training iterations and client selection is established. The proposed adaptive FL minimize the total cost of learning time and energy consumption while ensuring convergence.
Wu {\it et al.} \cite{Wu2021} propose an adaptive weighting algorithm, FedAdp, to accelerate the global model convergence by quantifying participants' aggregation weights adaptively and jointly.
Wang {\it et al.} \cite{Wang2019} solve the problem that efficiently utilizing the limited computation and communication resources by dynamically adapting the frequency of aggregation.
Shi {\it et al.} \cite{Shi2020} design an efficient binary search algorithm to obtain the scheduling policy in terms of the highest achievable global model accuracy under a given training time budget. 
Tran {\it et al.} \cite{Tran2019} formulate the FL problem as a non-convex optimization problem FEDL that captures the trade-off among computation and communication latencies, clients' energy consumptions, and FL time. By transforming it into three convex sub-problems, they obtain qualitative insights about optimal model accuracy, energy consumption, and learning time.

However, theoretically optimizing the global model accuracy and convergence rate within time-vary constraints, e.g., variable local computational consumption caused by battery power, is less efficiency. Although they make some prior hypotheses to constraints, it seems not that reasonable, as the practical time-varying constraints usually have no statistical regularity to follow.

\textit{\textbf{Adaptive FL based on the RL method}.} An alternative method to achieve adaptive FL points to RL, as the FL process could be formulated as an MDP.
Wang {\it et al.} \cite{Wang2020} propose an experience-driven FL framework FAVOR, which speeds up convergence by intelligence selecting clients. Particularly, Deep Q-Network (DQN) is introduced to maximize a reward that encourages the increase of validation accuracy and penalizes the use of more communication rounds. 
Sun {\it et al.} \cite{Sun2021} adaptively adjust the aggregation frequency of FL based on DQN in IIoT, which is characterized by improved learning accuracy, convergence, and energy saving. 
Su {\it et al.} \cite{Su2022} deploy DQN on clients to derive their optimal training strategies of local models.
Nguyen {\it et al.} \cite{Nguyen2020} propose a double DQN-based scheme to select the optimal communication channel to reduce energy cost during model transmission.
It seems that DQN is an effective method to adaptively adjust local training strategies. 
However, DQN is more proper for solving decision-making problems with discrete states and actions, while the state and action space of FL under our consideration are continuous, e.g., the local learning rate.

Two very recent works introduce DDPG, another DRL method suitable for continuous problems, to achieve adaptive FL. 
On the one hand, Zhang {\it et al.} \cite{Zhang2021} adopt DDPG to select clients with low training costs and high model accuracy to improve the rate of model aggregation and reduce the communication cost for IIoT. 
On the other hand, Lu {\it et al.} \cite{Lu2020} leverage DDPG to improve the efficiency of FL, which can select clients with a larger amount of resources of computing and communication capacity.
Nevertheless, selecting a sub-set of participant clients for faster convergence time runs counter to the fact that more clients increase the convergence rate.
In addition, rejecting straggler clients all the time may raise a fairness issue.

Different from all existing work on adaptive FL, our scheme deploys the DDPG-assisted hyper-parameter selection scheme on every client to adaptively adjust training hyper-parameters locally for the purpose of involving more clients rather than abandoning straggler clients. In this way, all participants could adaptively execute local training according to their time-varying training state by themselves, and the convergence rate of the global model is accelerated. Furthermore, the employment of the Paillier cryptosystem makes the adaptive FL system more privacy-preserving and secure.

\section{Conclusion and Future Work}

Dap-FL is a DDPG-assisted adaptive and privacy-preserving FL system, which guarantees clients with poor resources could participate in FL by adaptively adjusting local training hyper-parameters, and preserves model privacy through a secure aggregation method based on the Paillier cryptosystem. 

An interesting experimental result is that the global model prediction accuracy of ResNet-18 on MNIST is not improved with the deployment of the proposed Dap-FL. We guess the possible reason is that the powerful model training capability of the ResNet-18 overshadows the hyper-parameter tuning effect of the proposed Dap-FL, which should be further confirmed in the future. Besides, we plan to explore more types of hyper-parameters in FL, and generalize them into Dap-FL, which also remains in future works. 

\appendices

% Can use something like this to put references on a page
% by themselves when using endfloat and the captionsoff option.
\ifCLASSOPTIONcaptionsoff
\newpage
\fi
	
%	\section{Acknowledgment}

	%\vfill
	
	% Can be used to pull up biographies so that the bottom of the last one
	% is flush with the other column.
	%\enlargethispage{-5in}

	% that's all folks
	
\end{document}